\documentclass[11.5pt]{article}
\usepackage{amsthm}
\usepackage{amsmath,bm}
\usepackage{enumerate}
\usepackage{amssymb}
\usepackage{latexsym}
\usepackage{amsfonts}
\usepackage{subfigure}
\usepackage{color}
\usepackage{secdot}
\usepackage{mathrsfs}
\usepackage{epsfig}
\usepackage{natbib}
\usepackage{appendix}
\newtheorem{theorem}{Theorem}[section]
\newtheorem{example}{Example}[section]
\newtheorem{counterexample}{Counterexample}[section]

\newtheorem{definition}{Definition}[section]

\newtheorem{lemma}{Lemma}[section]
\newtheorem{remark}{Remark}[section]

\begin{document}
	\title{Ordering results for extreme claim amounts based on random number of claims}
	\author{{\large { {\bf Sangita 
					Das}$^{a}$\thanks {Email address: sangitadas118@gmail.com,}}} \\
		{\small \it $^{a}$Theoretical Statistics and Mathematics Unit, Indian Statistical Institute, Bangalore-560059, India}\\\\
{ \em{\it To appear in Ricerche di Matematica.}}}
	\date{}
	\maketitle
	\begin{center}
		\noindent{\bf Abstract}
	\end{center}
    Consider two sequences of heterogeneous and independent portfolios of risks $T_1,T_2,\ldots$ and $T^*_{1}, T^*_{2},\ldots$ and, let $N_1$ and $N_2$ be two positive integer-valued random variables, independent of $T_i'$ and $T^*_i$, respectively. In this article, we investigate different stochastic inequalities involving $\min\{T_1,\ldots,T_{N_1}\}$ and $\min\{T^*_1,\ldots,T^*_{N_2}\},$ and $\max\{T_1,\ldots,T_{N_1}\}$ and $\max\{T^*_1,\ldots,T^*_{N_2}\}$ in the sense of usual stochastic order and reversed hazard rate order concerning maltivariate chain majorization order. These new results strengthen and generalize some of the well known results in the literature, including \cite{barmalzan2017ordering}, \cite{balakrishnan2018} and \cite{kundu2021_shock} for the case of random claim sizes. Different numerical examples  are provided to highlight the applicability of this work. Finally, some interesting applications of our results in reliability theory and auction theory are presented.
	\\\\
	\noindent{\bf Keywords:} Row weak majorization order, matrix chain majorization order, usual stochastic order, reversed hazard rate order.
	\\\\
	{\bf Mathematics Subject Classification:} 60E15; 62G30; 60K10; 90B25.	
	
		\section{Introduction }
A semiparametric family of life distributions is a family that corresponds to a class of statistical models which provides a flexible alternative framework compared to fully parametric approaches. These models are extensively applied in insurance and actuarial science for modeling mortality, survival, and time-to-decrement processes (see \cite{marshall2007life} and \cite{marshall1997new}). A random variable $X$ is said to have a semiparametric family of distribution if its survival function can be written as
\begin{equation}\label{4}
   \bar{F}_{X}(x)=\bar{F}(\cdot;\alpha),~\alpha>0.
\end{equation}
Denote $\bar F(\cdot;\alpha)$, $f(\cdot;\alpha)$, $r(\cdot;\alpha)$ and $\tilde{r}(.;\alpha)$ are the survival, density, hazard rate and reversed hazard rate functions, respectively. Among all semiparametric models, the scale and proportional hazard models are one of the most widely recognized semiparametric models commonly used in reliability, survival studies and numerous other disciplines (see \cite{finkelstein2008failure, finkelstein2013stochastic,kumar1994,Cox1992}). Particularly in insurance analysis, scale models play a crucial role in assessing risks, predicting claims, and estimating policyholder behavior. Depending on different factors, such as coverage, region, or inflation, the size of claims can vary. To handle these situations, scale models are used to adjust these factors without changing the shape of the baseline distribution. Similarly, proportional hazard rate models have also been widely used to determine the timing of events such as mortality, illness, policy lapses, disability, and claims
while accounting for multiple risk factors. In particular, it is useful in life insurance, health
insurance, disability insurance, and policy retention studies because it allows for the inclusion
of various covariates (e.g., age, gender, lifestyle, economic conditions) without making strict
assumptions about the underlying baseline hazard function.

In an insurance period, for $i=1,\ldots,n$, let $U_{i}$ be a positive random variable represent the claim severity (or loss amount) for the $i$th risk, assuming that a claim occurs. Associated with $U_{i}$, let $J_{i}$ be a Bernoulli random variable with $E(J_{i})=p_i$, such that
\begin{eqnarray*}
	J_i= \displaystyle\left\{\begin{array}{ll} 1,
		& \textrm{ if the $i$th insured person makes random claim $U_{i}$},\\
		0, & \textrm{ otherwise},
	\end{array} \right.
\end{eqnarray*}
where $i=1,\ldots,n$. Then, the quantity $T_{i}=J_{i}U_{i}$ represent the individual claim amount corresponding to the $i$th insured person in a portfolio of risks $\{T_1,\ldots,T_n\}$. Denote $T_{1:n}=\min\{T_1,\ldots,T_{n}\}$ and $T_{n:n}=\max\{T_1,\ldots,T_{n}\}$ are the smallest and largest claim amounts, respectively, corresponding to the portfolio of risks $\{T_1,\ldots,T_n\}.$  Let us denote 
 $\psi:(0,1)\rightarrow (0,\infty)$ be a differentiable function. Also denote $\bm{\psi}(\bm{p})=(\psi(p_1),\ldots,\psi(p_n)).$ {In the context of risk analysis, the function $\psi(\cdot)$ represents the transformed vector of occurrence probabilities, will be used to modify probability distributions corresponding to risk assessment, premium calculation, reinsurance pricing, and capital allocation. Insurers use these transformations to change actual probabilities to display various risk attitudes, regulatory requirements, or pricing strategies. In particular, decreasing transformation will help the higher probability event to get less weight, and by the help of convex transformation, a risk-averse insurer increases the probability of large claims}.
 
In insurance analysis, it is of great importance to compare the amount of extreme claims to analyze the risk measure, as they provide important information to determine the annual lose or gain. Therefore, studying stochastic properties of extreme claim amounts are important both in practical and theoretical point of view.  
In this direction, a large amount of research have been done by many authors by considering heterogeneous independent/interdependent portfolios of risk having different general families of risk severities for the case of fixed claim size. Among them, \cite{barmalzan2017ordering} have considered independent heterogeneous general scale severities and established different ordering properties between smallest claim amounts with respect to usual stochastic order when the matrix of parameters $(\boldsymbol{\psi}(\boldsymbol{p}), \boldsymbol{\alpha};n)$ changes to $(\bm{\psi}(\bm{p}^*),\bm{\beta};n)$ in the sense of multivariate
chain majorization order and hazard rate order when the scale parameters are connected with weakly supermajorization order as well as weakly submajorization order. Consequently, \cite{balakrishnan2018} have investigated the ordering properties of the largest claim amounts arising from two sets of heterogeneous portfolios for independent observation according to the usual stochastic order when the matrix of parameters $(\boldsymbol{\psi}(\boldsymbol{p}), \boldsymbol{\alpha};n)$ changes to $(\bm{\psi}(\bm{p}^*),\bm{\beta};n)$ in the sense of row weakly majorization order. Subsequently, \cite{zhang2019ordering} have established sufficient conditions for comparing extreme claim amounts arising from two sets of heterogeneous insurance portfolios according to various stochastic orders. Recently \cite{Kundu2024} have investigated if the model parameters and the transform vectors are connected with weakly submajorization order then the largest claim amounts are comparable with respect to usual stochastic and reversed hazard rate orders for both independent/interdependent set-up. Different comparison results of extreme claim amounts having independent/interdependent
general claim severities have developed by several researchers, like location-scale severities (see, \cite{barmalzan2020}, \cite{Sameen2020} \cite{Das2021mcap} and \cite{sangita-balakrishnan2022}), transmuted-G model (see \cite{nadeb2020}), proportional odds model (see \cite{Panja2024}).

Randomness and variability are fundamental aspect of real-life data due to occurrence of the event, measurement errors, and differences among samples. 
 In many biological and agricultural studies, obtaining a fixed sample size can be challenging, as observations are often lost due to various factors, as a result the sample size becomes random. In insurance analysis, the number of claims received by an insurer from all its customers over the year is random. In actuarial science, randomness arises from different sources, closely related to uncertainty in future events. In finance and risk management, it is often interesting to get information about the
minimum and maximum loss from a portfolio of loans or securities when the number of assets or liabilities is random (see \cite{Laureano2008}). In transportation theory, to measure the accident-free distance of a shipment, such as explosives, the number of defective items may be random which will make a potential accident (see \cite{shaked1997}). In biostatistics, for the case of a cancer patient treatment the number of carcinogenic cells that are left to be activated after the first shot of the treatment is given will be random (see \cite{Cooner2007}). In hydrology, the study involving annual maximum rainfall or floods, it is very common that the number of storms or flood per year is random rather than fixed (see \cite{Koutsoyiannis2004}).

Let $X_{1},X_{2},\ldots$ be a sequence of independent random sample and $N$ be a positive integer-valued random variable, independent of $X_i$. Denote $X_{1:N}=\min\{X_{1},\ldots,X_{N}\}$ and $X_{N:N}=\max\{X_{1},\ldots,X_{N}\}$ are the random minima and random maxima, respectively. The random variable $X_{1:{N}}$ naturally emerges in transportation theory to represent the accident-free distance of a shipment, where $N$ defective items may detonate after traveling $X_1,\ldots, X_{N}$ miles, respectively, leading to a potential accident. In actuarial science, $X_{N:N}$ and $X_{1:N}$ represent the largest and smallest amount of claims in a particular time period. Thus, comparing two random maxima or minima stochastically is of significant interest to researchers owing to its important applications across various areas of statistics. 
In this context, relatively little work has been done by researchers on general families of distributions. Among them, in \cite{shaked1997} the authors have considered two different independent and identically distributed samples $X_1,X_2,\ldots$ and $Y_1,Y_2,\ldots$ having the same (random) sample size ($N$) and established that if $X_{i}\leq_{st}Y_{i},$ for $i=1,\ldots,N$ then $X_{1:{N}}\leq_{st}Y_{1:{N}}$ and $X_{N:{N}}\leq_{st}Y_{N:{N}}.$ Also, they have proved that 
if $N_1$ and $N_2,$ are connected with Laplace transform and Laplace transform ratio orders  then the usual 
stochastic, hazard rate, reversed hazard rate and likelihood ratio orders are hold between 
$X_{1:{N_1}}$, $X_{1:{N_2}}$ and $X_{N_{1}:{N_1}}$, $X_{N_{2}:{N_2}}$, where $N_1$ and $N_2$ are two positive integer-valued random variables, independent of $X_{i}'$s. Following this, \cite{bartoszewicz2001} have showed that if convex, star and super-additive orders hold between $X_1$ and $Y_1$, then $X_{1:N}$ and $X_{N:N}$ are smaller than $Y_{1:N}$ and $Y_{N:N},$ respectively. After that, \cite{Li2004} have established  some certain conditions such that the right spread and the increasing convex orderings hold between $X_{N:N}$ and $Y_{N:N}$. Furthermore, they have showed that total time on test transform 
and increasing concave orderings hold between $X_{1:N}$ and $Y_{1:N}$. Subsequently, \cite{Ahmad2007} have reported that reversed preservation property of right spread and total time on test transform orders under random minima and maxima. Very recently, in \cite{chowdhury2024} the authors have proved several comparison results of random minima and maxima from a random number of non-identical random variables. To get an overview of the results on stochastic comparisons of random maxima and minima, the interested reader may refer to \cite{Nanda2008}, \cite{Manesh2023}, \cite{Sangita_bala_26}.

Motivated from these work in literature, let us consider $\{U_{1},\ldots,U_{N_1}\}$
and $\{V_{1},\ldots,V_{N_2}\}$ be two sets of independent and heterogeneous random variables such that $i=1,\ldots,{N_1},$ $U_{i}\sim
\bar{F}(x;\alpha_{i})$
and for $i=1,\ldots,{N_2},$ $V_{i}\sim
\bar{F}(x;\beta_{i}).$ Also, let $T_{i}=J_{i}U_{i}$ and
$T_{i}^{*}=J_{i}^{*}V_{i}$, where $J_{i}$ and $
J_{i}^{*}$ be independent Bernoulli random variables, independent of $U_{i}'$s and $V_{i}'$s such that
$E(J_{i})=p_{i}$ and $E(J_{i}^{*})=p_{i}^{*}$ and, $N_1$ and $N_2$ be another two positive integer-valued random variables independently of $T_{i}'$s and ${T^{*}_{i}}'$s, respectively. Under the above set-up, the smallest and largest claim amount arising from
$T_{1},\ldots,T_{N_1}$ and $T_{1}^{*},\ldots,T_{N_2}^{*}$
are, respectively, denoted by
$T_{1:{N_1}}=\min\{T_{1},\ldots,T_{N_1}\},$
$T_{1:N_2}^{*}=\min\{T_{1}^{*},\ldots,T_{N_2}^{*}\},$
$T_{{N_1}:{N_1}}=\max\{T_{1},\ldots,T_{N_1}\}$ and
$T_{{N_2}:{N_2}}^{*}=\max\{T_{1}^{*},\ldots,T_{N_2}^{*}\}$, respectively.

To best of our knowledge, the proposed model presented in this study has not been investigated in the existing literature. In this article, our aim is to study the stochastic properties of random claims $(T_{1:N_{1}} \text{~and~} T^{*}_{1:N_{2}})$ and $(T_{N_{1}:N_{1}} \text{~and~} T^{*}_{N_{2}:N_{2}}),$ respectively, when the matrix of parameters $(\boldsymbol{\psi}(\boldsymbol{p}), \boldsymbol{\alpha};n)$ changes to $(\bm{\psi}(\bm{p}^*),\bm{\beta};n)$ in the sense of the row weak
majorization order in the space $M_{n}$ based on the usual stochastic and reversed hazard rate orders under the condition that ``$\psi$ is  strictly decreasing convex function''.  To prove our main result based on usual stochastic order presented in Theorem \ref{th_w_st_small}, we first establish the results (see Theorems \ref{th_m_st_alpha_small} and \ref{th_m_st_psi_small}) based on vector chain mejorization order for fix sample size and then using the condition $N_1\leq_{st}N_2$ we prove $T_{{1}:{N_1}}$  is stochastically less than $T_{{1}:{N_2}}.$ Similar idea is used for Theorem \ref{th_w_st}, which compares $T_{{N_1}:{N_1}}$ and $T_{{N_2}:{N_2}}$ stochastically. To prove the results based on reversed hazard rate order for same and random sample size $N$ (see Theorems \ref{th_m_rh_lrg} and \ref{th_m_rh_sm}), we apply some Theorems from \cite{chowdhury2024}, and using the idea of Theorem \ref{th_w_st_small}. As a consequence, our results extend the results in \cite{barmalzan2017ordering}, \cite{balakrishnan2018} and  \cite{kundu2021_shock} to a general set-up. Moreover, these results established here provide important insight into determining the final price in auction theory and the best reliability system in reliability theory.

The remaining part of the article is organized as follows. In Section \ref{p}, some important definitions and preliminary results are presented. The ordering results pertaining to different stochastic orders, such as the usual stochastic and reversed hazard rate orders, are developed in Section \ref{s1} using the concepts of matrix row majorization and row  majorization orders. In Section \ref{app}, we consider some applications of our established theoretical results for the purposes of illustration in reliability theory and auction theory. Finally, Section \ref{conclusion} provides a conclusion of this work.
\section{Preliminary}\label{p}

In this section, we review some important definitions and well-known concepts involving the notion of majorization and stochastic orders. The reader may consult to \cite{shaked2007stochastic} and \cite{Marshall2011} for further details on these notions.
Let $V_1$ and $V_2$ be two univariate random variables having
 density functions (PDFs) $f_{V_1}(\cdot)$ and $f_{V_2}(\cdot)$, distribution functions (CDFs) $F_{V_1}(\cdot)$ and $F_{V_2}(\cdot)$, survival functions $\bar
F_{V_1}(\cdot)$ and $\bar F_{V_2}(\cdot)$, and reversed hazard rate functions $\tilde{r}_{V_1}(\cdot)=f_{V_1}(\cdot)/
 F_{V_1}(\cdot)$,  $\tilde r_{V_2}(\cdot)=f_{V_2}(\cdot)/
F_{V_2}(\cdot)$, respectively. 
\begin{definition}
	A random variable $V_1$ is said to be smaller than $V_2$ in the
	\begin{itemize}
		\item reversed hazard rate order (denoted by $V_1\leq_{rh}V_2$)
		if $\tilde r_{V_1}(x)\leq \tilde r_{V_2}(x)$, for all $x$;
		\item usual stochastic order (denoted by $V_1\leq_{st}V_2$) if
		$\bar F_{V_1}(x)\leq\bar F_{V_2}(x)$, for all $x.$
	\end{itemize}
\end{definition}
Let $\boldsymbol{a} =
\left(a_{1},\cdots,a_{n}\right)$ and $\boldsymbol{b} =
\left(b_{1},\ldots,b_{n}\right)$ be two $n$-dimensional vectors, where $\boldsymbol{a}~,\boldsymbol{b}\in\mathbb{A}$. Here, $\mathbb{A} \subset \mathbb{R}^{n}$ and $\mathbb{R}^{n}$ is
a $n$-dimensional Euclidean space. Furthermore, let the order coordinates of the vectors $\boldsymbol{a}$ and $\boldsymbol{b}$ be $a_{1:n}\leq \cdots \leq a_{n:n}$ and $b_{1:n}\leq\cdots \leq b_{n:n},$ respectively. Denote $\mathcal{J}_{n}=\{1,2,\ldots,n\}.$

\begin{definition}\label{definition2.2}
	A vector $\boldsymbol{b}$ is said to 
	\begin{itemize}
		\item  majorize the vector $\boldsymbol{a},$ (denoted
		by $\boldsymbol{a}\preceq^{m} \boldsymbol{b}$), for $l\in \mathcal{J}_{n-1},$ $\sum_{i=1}^{l}a_{i:n}\geq \sum_{i=1}^{l}b_{i:n}$ and
		$\sum_{i=1}^{n}a_{i:n}=\sum_{i=1}^{n}b_{i:n};$
			\item weakly supermajorize vector $\boldsymbol{b},$ denoted
		by $\boldsymbol{a}\preceq^{w} \boldsymbol{b}$, if
		$\sum_{i=1}^{l}a_{i:n}\geq \sum_{i=1}^{l}b_{i:n}$ for $l\in\mathcal{J}_n.$				
	\end{itemize}
\end{definition} 

In the following, we present the definition of Schur-convex and
Schur-concave functions and, an important lemma which will be useful in the consequence section.
\begin{definition}
	A function $h:\mathbb{R}^n\rightarrow \mathbb{R}$ is said
	to be Schur-convex (Schur-concave) on $\mathbb{R}^n$ if \\
	$$\boldsymbol {x}\overset{m}{\succeq}\boldsymbol{ y}\Rightarrow
	h(\boldsymbol { x})\geq( \leq)h(\boldsymbol { y}) \text{, for all } \boldsymbol { x}, \boldsymbol
	{ y} \in \mathbb{R}^n.$$
\end{definition}
\begin{lemma}(\cite{Marshall2011})\label{lem2.1}
Consider the real-valued continuously differentiable function $\phi$ on $J^n,$ where $J\subseteq\mathcal{R}$ is an open interval. Then, $\phi$ is Schur-convex $(\text{Schur-concave})$ on $J^n$ if and only if $\phi$ is symmetric on $J^n$, and for all $i\neq j$ and all $\boldsymbol{u}\in J^n,$
$$(u_i-u_j)\left(\frac{\partial\phi(\boldsymbol{u})}{\partial u_i}-\frac{\partial\phi(\boldsymbol{u})}{\partial u_j}\right)\geq(\leq )0,$$
 where $\frac{\partial\phi(\boldsymbol{u})}{\partial u_i}$
 denotes the partial derivative of $\phi(\bm{u})$ with respect to its $i$-th argument.
\end{lemma}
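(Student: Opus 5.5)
This is the classical Schur criterion of Marshall and Olkin. The plan is to treat the two directions separately and, in each, use symmetry of $\phi$ to reduce everything to a pair of coordinates $u_i,u_j$. I only describe the Schur-convex case. The Schur-concave case follows by applying the argument to $-\phi$.

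For necessity, assume $\phi$ is Schur-convex. Symmetry comes first. Any permutation $\boldsymbol{u}\Pi$ of $\boldsymbol{u}$ satisfies both $\boldsymbol{u}\preceq^{m}\boldsymbol{u}\Pi$ and $\boldsymbol{u}\Pi\preceq^{m}\boldsymbol{u}$, since the two vectors have identical ordered coordinates. Schur-convexity then gives $\phi(\boldsymbol{u})=\phi(\boldsymbol{u}\Pi)$.

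For the derivative inequality, fix $i\neq j$ and suppose, by symmetry, that $u_i>u_j$. For small $t>0$, let $\boldsymbol{u}(t)$ be $\boldsymbol{u}$ with $u_i$ replaced by $u_i-t$ and $u_j$ replaced by $u_j+t$. For $t\le (u_i-u_j)/2$ this is a T-transform of $\boldsymbol{u}$, so $\boldsymbol{u}(t)\preceq^{m}\boldsymbol{u}$. Since $J$ is open, $\boldsymbol{u}(t)\in J^n$ for small $t$. Schur-convexity gives $\phi(\boldsymbol{u}(t))\le\phi(\boldsymbol{u})$. Differentiating at $t=0^+$ yields $-\partial_i\phi+\partial_j\phi\le 0$. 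Multiplying by $u_i-u_j>0$ gives the stated inequality. When $u_i=u_j$ the inequality is trivial.

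For sufficiency, assume symmetry and the derivative condition, and take $\boldsymbol{x}\preceq^{m}\boldsymbol{y}$ in $J^n$. By the standard lemma of Hardy, Littlewood and Pólya (Marshall and Olkin), $\boldsymbol{x}$ is obtained from $\boldsymbol{y}$ by a finite sequence of T-transforms, each averaging two coordinates, together with permutations. All intermediate vectors lie in $J^n$ because each one lies in the convex hull of permutations of $\boldsymbol{y}$. Symmetry handles the permutations, so it suffices to show that one T-transform does not increase $\phi$.

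Fix coordinates with $y_1>y_2$. Define $g(s)=\phi(s,\,y_1+y_2-s,\,y_3,\ldots,y_n)$ for $s\in[(y_1+y_2)/2,\,y_1]$. Then $g'(s)=\partial_1\phi-\partial_2\phi$. On this interval the first coordinate $s$ is at least the second coordinate $y_1+y_2-s$. The hypothesis therefore gives $g'(s)\ge 0$ wherever $s$ strictly exceeds the second coordinate, and the equality point is a single point. Hence $g$ is nondecreasing, and the transformed value is at most $\phi(\boldsymbol{y})$. Chaining over the finite sequence of T-transforms gives $\phi(\boldsymbol{x})\le\phi(\boldsymbol{y})$.

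The main obstacle is the decomposition of majorization into finitely many T-transforms staying inside $J^n$. I would cite it directly from \cite{Marshall2011} rather than reprove it. If a self-contained argument were wanted, I would induct on the number of indices where the ordered vectors differ. At each step I would choose the transfer amount so that one more coordinate matches, which keeps every intermediate vector between $\min_k y_k$ and $\max_k y_k$ and hence in $J^n$.
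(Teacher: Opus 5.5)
Your proof is correct, and it follows the standard argument. The paper gives no proof of its own and simply cites \cite{Marshall2011}, where Schur's condition is proved the same way: symmetry plus a one-sided derivative along a transfer gives necessity, and the finite $T$-transform decomposition plus monotonicity of $\phi$ along $s\mapsto(s,\,y_1+y_2-s,\,y_3,\ldots,y_n)$ gives sufficiency. One step is left implicit: a $T$-transform with weight below $1/2$ gives a transposition of one with weight above $1/2$, so symmetry lets you restrict to $s\in[(y_1+y_2)/2,\,y_1]$ as you do.
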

Next, we discuss the concept of majorization on
matrices. A square matrix $\Pi$ is said to be a permutation matrix 
if each row and column
have exactly one unity and zeros elsewhere.
Such matrices can be constructed
by interchanging rows (or columns) of the
$n\times n$ identity matrix $I_{n}$. Therefore, a
$T$-transform matrix is of the form
\begin{eqnarray}
T_{w}=wI_{n}+(1-w)\Pi,~0\le w\le 1.
\end{eqnarray}

Consider two $T$-transform matrices
$T_{w_{1}}=w_{1}I_{n}+(1-w_{1})\Pi_{1}$ and
$T_{w_{2}}=w_{2}I_{n}+(1-w_{2})\Pi_{2},$ where
$\Pi_{1}$ and $\Pi_{2}$ are two permutation
matrices and $0\leq w_1, w_2\leq 1$. If
$\Pi_{1}=\Pi_{2},$ then the matrices $T_{w_{1}}$ and
$T_{w_{2}}$ have the same structure, otherwise they have different
structures. In the following, we describe the concept of
multivariate majorization.

\begin{definition}\label{def2.4}
Consider two $m\times n$ matrices $A=\{a_{ij}\}$ and
$B=\{b_{ij}\}$, where $i=1,\ldots,m$ and $j=1,\ldots,n$. Then, 
\begin{itemize}
    
\item $A$ is said to majorize $B,$
(denoted by $A>>B$), if there exists a finite set of
$n\times n$ $T$-transform matrices $T_{w_{1}},\ldots,T_{w_{k}}$ such
that $B=A T_{w_{1}}\ldots T_{w_{k}};$
\item $A$ is said to chain majorize $B,$
(denoted by $A>B$), if $B=A P,$ where the $n\times n$ matrix $P$ is doubly stochastic. Since a product of $T$-transform is double stochastic, it follows that $A>>B\Rightarrow A>B;$
\item $A$ is said to row majorize $B,$ (denoted by $A\overset{row}{>}B$), if $\boldsymbol{a}^{R}_{i}\overset{m}{\succeq}\boldsymbol{b}^{R}_{i}$ for $i=1,\ldots,m,$ where $\boldsymbol{a}_{1}^{R},\ldots,\boldsymbol{a}_{m}^{R}$ and $\boldsymbol{b}_{1}^{R},\ldots,\boldsymbol{b}_{m}^{R}$ are the rows of $A$ and $B,$ so that these quantities are row vectors of length $n.$ It is known that $A>B\Rightarrow A\overset{row}{>}B;$ 
\item $A$ is said to row weakly majorize $B,$ (denoted by $A\overset{w}{>}B$), if $\boldsymbol{a}^{R}_{i}\overset{w}{\succeq}\boldsymbol{b}^{R}_{i}$ for $i=1,\ldots,m,$ where $\boldsymbol{a}_{1}^{R},\ldots,\boldsymbol{a}_{m}^{R}$ and $\boldsymbol{b}_{1}^{R},\ldots,\boldsymbol{b}_{m}^{R}$ be the rows of $A$ and $B,$ so that these quantities are row vectors of length $n.$ It is known that $A\overset{row}{>}B\Rightarrow A\overset{w}{>}B.$ 
\end{itemize}
\end{definition}
For detailed discussion on this topic, we may refer to \cite{Marshall2011}. 
For simplicity, from now on we denote the matrix
of order $m\times n$ as
$(\bm{r_{1}},\bm{r_{2}},\ldots,\bm{r_{m}};n),$
where the vectors with real value
$\bm{r_{1}},\bm{r_{2}},$ $\ldots,$ $\bm{r_{m}}$
are the first, second, $\ldots,$ $m$-th rows,
respectively, each having $n$ elements. Let us
denote
\begin{eqnarray*}
    M_{n} &=& \left\lbrace (\bm{x},\bm{y};n) :\ x_{i} > 0, ~y_{j} > 0 \text{ and }
(x_{i} - x_{j})(y_{i} - y_{j}) \geq 0,~ i, j=1,\ldots,n \right\rbrace
\end{eqnarray*}
and $h'( z)=\frac{d h(z)}{d z}.$

\section{Ordering results}\label{s1}
In this section, we consider two sets of independent and heterogeneous insurance portfolios of risks $\{T_{1},\ldots,T_{N_1}\}$ and $\{T^{*}_{1},\ldots,T^{*}_{N_2}\}$ with $T_{i}=J_{i}U_{i}$ and $T^{*}_{i}=J^{*}_{i}V_{i}$ where for $i=1,\ldots,N_1,$ $U_{i}\sim \bar{F}(\cdot;\alpha_{i})$ and $J_{i}\sim Bernouli(p_{i})$ independent of $U^{'}_{i}$s and for $i=1,\ldots,N_2,$ $V_{i}\sim \bar{F}(\cdot;\beta_{i})$ and $J^{*}_{i}\sim Bernouli(p^{*}_{i})$ independent of $V^{'}_{i}$s.
Here, $N_{1}$ and $N_{2}$ are two discrete random variables with positive integer values and have support as $\{1,2,\ldots,\}$, independent of $T_{i}'$s and ${T^{* }_{i}}'$s, respectively. Under this set-up, here we establish ordering results
between two extreme claim amounts, by using the concept of vector and matrix majorization in the sense of usual stochastic and reversed hazard rate orders. Here, the number of claims $N_{1}$ and $N_{2}$ are random and stochastically 
comparable, independent of $T_{i}$'s and $T^{*}_{i}$'s, respectively. Furthermore, let $T_{n:n}$ and $T_{1:n}$, and $T^{*}_{n:n}$ and $T^{*}_{1:n}$ denote the largest and smallest claim amounts corresponding to the portfolios of risks $\{T_1,\ldots,T_n\}$ and
$\{T^{*}_1,\ldots,T^{*}_n\}$, respectively. We start this section with the result based on usual stochastic order which states that if the matrix of parameters $(\boldsymbol{\psi}(\boldsymbol{p}), \boldsymbol{\alpha};n)$ changes to $(\bm{\psi}(\bm{p}^*),\bm{\beta};n)$ in the sense of the row weak
majorization order in the space $M_{n}$ then $T_{{1}:{N_1}}$ is smaller than $T^{*}_{{1}:{N_2}}$ in terms of usual stochastic order. To establish this theorem, we first prove two comparison results based on vector majorization whereas the first one shows if the modeled parameter are connected with weakly super majorization order then the usual stochastic order holds between two smallest claim amounts for the case of common $\bm{p}$ and the second one provides that the same inequality holds between the smallest claim amounts whenever the occurrence probabilities are associated with weakly super majorization order with common ${\bm\alpha}$.  Throughout this section, we denote $\bm{\psi}(\bm{p})=(\psi(p_1),\ldots,\psi(p_n))=(v_{1},\ldots,v_{n})$ and  $\bm{\psi}(\bm{p^{*}})=(\psi(p^{*}_1),\ldots,\psi(p^{*}_n))=(u_{1},\ldots,u_{n}).$

    \begin{theorem}\label{th_m_st_alpha_small}
	Let $\{U_{1},\ldots,U_{n}\}$ and $\{V_{1},\ldots,V_{n}\}$ be two sets of independent random variables with $U_{i}\sim \bar{F}(x;\alpha_{i})$ and $V_{i}\sim \bar{F}(x;\beta_{i}),$ respectively. Also, let $\{J_{1},\ldots,J_{n}\}$ and 
$\{J_{1}^{*},\ldots,J_{n}^{*}\}$ be another two sets of independent Bernoulli random variables,
independently of $U_{i}'$s and $V_{i}'$s  with
$E(J_{i})=p_{i}$ and $E(J_{i}^{*})=p_{i},$ respectively. Assume that $\bar{F}(x;\alpha_{i})$ is decreasing and log-convex in $\alpha_{i}$ for any $x$.
Then, for $(\boldsymbol{\psi}(\boldsymbol{p}), \boldsymbol\alpha;n), (\boldsymbol{\psi}(\boldsymbol{p}), \boldsymbol{\beta};n)\in M_{n},$ we have $$\bm{\alpha}\overset{w}{\succeq} \bm{\beta} \Rightarrow T_{{1}:{n}}\geq_{st}T^{*}_{{1}:{n}}.$$
	\end{theorem}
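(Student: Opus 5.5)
We start from the survival function of the smallest claim amount. For $x\ge 0$ we have $P(T_i>x)=p_i\bar F(x;\alpha_i)$ (for $x<0$ everything equals $1$), so by independence
$$\bar F_{T_{1:n}}(x)=\prod_{i=1}^n p_i\bar F(x;\alpha_i)=\prod_{i=1}^n \psi^{-1}(v_i)\,\bar F(x;\alpha_i),$$
and the same holds for $T^*_{1:n}$ with $\bm\beta$ in place of $\bm\alpha$ and the common vector $\bm p$. Fix $x>0$ and set
$$\phi(\bm\alpha)=\sum_{i=1}^n\Big(\log\psi^{-1}(v_i)+\log\bar F(x;\alpha_i)\Big)=\sum_{i=1}^n \log p_i+\sum_{i=1}^n g(\alpha_i),\qquad g(\alpha)=\log\bar F(x;\alpha).$$
It suffices to show that $\phi(\bm\alpha)\ge\phi(\bm\beta)$ whenever $\bm\alpha\overset{w}{\succeq}\bm\beta$ and the pairings are as described.

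The key steps run as follows. By hypothesis $g$ is decreasing and convex in $\alpha$. Since $\bm p$ is common to both portfolios, the term $\sum\log p_i$ cancels, and we need
$$\sum_{i=1}^n g(\alpha_i)\ge\sum_{i=1}^n g(\beta_i)\quad\text{when } \bm\alpha\overset{w}{\succeq}\bm\beta .$$
The function $\bm u\mapsto\sum g(u_i)$ is symmetric and, by convexity of $g$, Schur-convex; Lemma \ref{lem2.1} shows this since $(u_i-u_j)(g'(u_i)-g'(u_j))\ge0$.

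It is also decreasing in each coordinate. By the standard characterization (Marshall--Olkin, Thm 3.A.8), a function that is decreasing and Schur-convex preserves weak supermajorization. With the convention of Definition \ref{definition2.2}, $\bm\alpha\overset{w}{\succeq}\bm\beta$ means $\sum_{i\le l}\alpha_{i:n}\le\sum_{i\le l}\beta_{i:n}$. Thus $\bm\alpha$ lies "below" $\bm\beta$ in the sense that there is a vector $\bm\gamma$ with $\bm\alpha\le\bm\gamma$ componentwise and $\bm\gamma\succeq^m\bm\beta$. This gives
$$\sum g(\alpha_i)\ge\sum g(\gamma_i)\ge\sum g(\beta_i).$$
We conclude that $\bar F_{T_{1:n}}(x)\ge\bar F_{T^*_{1:n}}(x)$ for all $x$, that is, $T_{1:n}\ge_{st}T^*_{1:n}$.

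The main obstacle is getting the direction of the weak-majorization convention right, since the paper's definition is reversed relative to some texts. It also has to be checked that the decreasing/convex combination matches weak supermajorization rather than submajorization. I would verify this first on a two-component example.

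The membership in $M_n$ is not really needed when $\bm p$ is common, because the $p_i$ factors cancel. If one insists on using it, then, because the $p_i$ multiply with fixed indices, one would instead run Lemma \ref{lem2.1} on the joint function $\sum\big[\log\psi^{-1}(v_i)+g(\alpha_i)\big]$ under a common ordering. In that case the $M_n$ condition lets the paired coordinates be sorted simultaneously, so that the permutation arguments in the weak-majorization reduction do not break the pairing between $p_i$ and $\alpha_i$. That pairing is the point I expect to require the most care.
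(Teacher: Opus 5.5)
Your proof is correct and follows essentially the paper's route. Both arguments show that the survival function of $T_{1:n}$ is decreasing and Schur-convex in $\bm\alpha$, using the log-convexity hypothesis and Lemma \ref{lem2.1}, and then apply Theorem 3.A.8 of Marshall--Olkin; you simply work on the log scale with $\sum_i \log\bar F(x;\alpha_i)$. Your observation that the common factor $\prod_i p_i$ separates out is also correct, so the $M_n$ condition is not needed here (the paper's appeal to the pairing $v_i\le v_j$ is superfluous).
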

    \begin{proof}
        In order to complete the theorem, we only need to show whether  $\bm{\alpha}\overset{w}{\succeq} \bm{\beta} \Rightarrow \bar{F}_{T_{{1}:{n}}}(x)\geq\bar{F}_{T^{*}_{{1}:{n}}}(x),$ which is equivalent to establishing that $\bar{F}_{T_{1:n}}(x)$ is decreasing and Schur-convex in $\bm{\alpha}$ according to Theorem $A.8$ of \cite{Marshall2011}. The reliability function of $\bar{F}_{T_{1:n}}(x)$ can be written as
        $\bar{F}_{T_{1:n}}(x)=\prod_{i=1}^{n}\psi^{-1}(v_{i})\bar{F}(x;\alpha_{i}).$ After taking partial derivative of $\bar{F}_{T_{1:n}}(x)$ with respect to $\alpha_i,$ we have
        \begin{equation}
            \frac{\partial\bar{F}_{T_{1:n}}(x)}{\partial\alpha_{i}}=\frac{\frac{d \bar{F}(x;\alpha_{i})}{d\alpha_{i}}}{\bar{F}(x;\alpha_{i})}\bar{F}_{T_{1:n}}(x)\leq 0,
        \end{equation}
        due to the decreasing property of $\bar{F}(x;\alpha_{i})$ in $\alpha_{i}$.  Consider the case $\alpha_{i}\leq \alpha_{j}$ and $v_{i}\leq v_{j}$ for any pair $i,~j$ such that $1<i\leq j<n.$ Based on the condition that $\bar{F}(x;\alpha_i)$ is log-convex in $\alpha_{i},$ we have the following inequality
        $$(\alpha_{i}-\alpha_{j})\left(\frac{\partial \bar{F}_{T_{1:n}}(x)}{\partial\alpha_{i}}-\frac{\partial \bar{F}_{T_{1:n}}(x)}{\partial\alpha_{j}}\right)\geq 0,$$ which yields $\bar{F}_{T_{1:n}}(x)$ is Schur-convex in $\bm{\alpha}.$ Hence, the proof.
        \end{proof}
        
       \begin{theorem}\label{th_m_st_psi_small}
	Let $\{U_{1},\ldots,U_{n}\}$ and $\{V_{1},\ldots,V_{n}\}$ be two sets of independent random variables with $U_{i}\sim \bar{F}(x;\alpha_{i})$ and $V_{i}\sim \bar{F}(x;\alpha_{i}),$ respectively. Also, let $\{J_{1},\ldots,J_{n}\}$ and 
$\{J_{1}^{*},\ldots,J_{n}^{*}\}$ be another two sets of independent Bernoulli random variables,
independently of $U_{i}'$s and $V_{i}'$s  with
$E(J_{i})=p_{i}$ and $E(J_{i}^{*})=p_{i}^{*},$ respectively. Assume that 
 $\psi:(0,1)\rightarrow (0,\infty)$
is a differentiable, decreasing,  and log-convex
function. 
Then, for $(\boldsymbol{\psi}(\boldsymbol{p}), \boldsymbol\alpha;n), (\boldsymbol{\psi}(\boldsymbol{p}^{*}), \boldsymbol{\alpha};n)\in M_{n},$ we have $$\boldsymbol{\psi}(\boldsymbol{p})\overset {w} {\succeq} \boldsymbol{\psi}(\boldsymbol{p}^{*}) \Rightarrow T_{{1}:{n}}\geq_{st}T^{*}_{{1}:{n}}.$$
	\end{theorem}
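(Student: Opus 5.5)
Following the route of Theorem~\ref{th_m_st_alpha_small}, we now vary the transformed occurrence probabilities with $\bm{\alpha}$ held fixed. Since $T_i=J_iU_i$ with $J_i\perp U_i$, for $x\ge 0$ we have $P(T_i>x)=p_i\bar F(x;\alpha_i)$. Writing $v_i=\psi(p_i)$ and noting that a decreasing $\psi$ is invertible on its range, we get
\[
\bar F_{T_{1:n}}(x)=\prod_{i=1}^{n}\psi^{-1}(v_i)\,\bar F(x;\alpha_i)=:\Phi(\bm v).
\]
The claim $\bm v\overset{w}{\succeq}\bm u\Rightarrow \Phi(\bm v)\ge\Phi(\bm u)$ then follows from Theorem A.8 of \cite{Marshall2011}, once we show that $\Phi$ is decreasing in each $v_i$ and satisfies the Schur-convexity condition of Lemma~\ref{lem2.1}. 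Symmetry is not available here, because the $\alpha_i$ are fixed and differ. As in Theorem~\ref{th_m_st_alpha_small}, we therefore restrict to the region $M_n$ and check the pairwise derivative condition for ordered pairs, which is the standard device for functions that are Schur-convex on $M_n$.

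\textbf{Monotonicity.} Put $g(v)=\psi^{-1}(v)$. Then
\[
\frac{\partial\Phi}{\partial v_i}=\frac{g'(v_i)}{g(v_i)}\,\Phi(\bm v).
\]
Since $\psi$ is decreasing, $g$ is decreasing, so $g'\le 0$ and $\Phi$ is decreasing in each $v_i$.

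\textbf{Schur condition.} Take $i,j$ with $v_i\le v_j$. Membership in $M_n$ then gives $\alpha_i\le\alpha_j$. We need
\[
(v_i-v_j)\left(\frac{g'(v_i)}{g(v_i)}-\frac{g'(v_j)}{g(v_j)}\right)\ge 0,
\]
which is exactly the statement that $\log g=\log\psi^{-1}$ is concave in $v$, i.e.\ that $(\log g)'$ is nonincreasing.

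The key step, and the main obstacle, is deriving concavity of $\log\psi^{-1}$ from the hypothesis that $\psi$ is decreasing and log-convex. Let $p=g(v)$. Then $(\log g)'(v)=\dfrac{1}{p\,\psi'(p)}$, and since $p$ decreases as $v$ increases, we need $p\mapsto p\psi'(p)$ to be monotone in the right direction. Log-convexity of $\psi$ says $\psi'/\psi$ is increasing in $p$; since $\psi'/\psi<0$, this means $|\psi'/\psi|$ is decreasing in $p$. I would first try to combine this with the fact that $p\mapsto p$ is increasing and $\psi$ is decreasing to pin down the sign of the derivative of $p\psi'(p)$. If this direct route does not close, the fallback is to rewrite the condition in the variables $p_i$ and verify the sign of
\[
(p_i-p_j)\left(\frac{1}{p_i\psi'(p_i)}\cdot\frac{1}{p_i}\cdots\right)
\]
case by case, using that $v_i\le v_j$ if and only if $p_i\ge p_j$. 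With the correct sign established, Lemma~\ref{lem2.1} restricted to $M_n$ together with Theorem A.8 of \cite{Marshall2011} gives $\bar F_{T_{1:n}}(x)\ge\bar F_{T^*_{1:n}}(x)$ for all $x$. For $x<0$ both survival functions equal $1$, so the inequality is trivial there, which completes the argument.
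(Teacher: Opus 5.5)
Your outline is the same as the paper's: product form of $\bar F_{T_{1:n}}$, monotonicity in each $v_i$, a pairwise Schur condition, then Theorem A.8 of Marshall et al. The decisive step, however, has the wrong sign, and under the stated hypothesis it cannot be proved in either direction.

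The condition $(v_i-v_j)\bigl((\log g)'(v_i)-(\log g)'(v_j)\bigr)\ge 0$ says that $(\log g)'$ is nondecreasing, i.e.\ that $g=\psi^{-1}$ is log-\emph{convex}. Log-concavity would make $\Phi$ Schur-concave and give the opposite inequality. Using your formula $(\log g)'(v)=1/(p\psi'(p))$, convexity of $\log\psi^{-1}$ is equivalent to $p\mapsto p\psi'(p)$ being nondecreasing, i.e.\ $\psi'(p)+p\psi''(p)\ge 0$, or equivalently $t\mapsto\psi(e^{t})$ convex. Log-convexity of $\psi$ only yields $\psi'+p\psi''\ge\psi'\,(1+p\psi'/\psi)$, which has no fixed sign. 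Indeed neither direction follows:
$\psi(p)=p^{-2}$ is decreasing and log-convex, and $\psi^{-1}(v)=v^{-1/2}$ is log-convex. On the other hand, $\psi(p)=e^{-p}$ is decreasing and log-convex (log-linear), but $\psi^{-1}(v)=-\ln v$ is log-concave on $(e^{-1},1)$.

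The second example actually refutes the statement. Contrary to your remark, symmetry is available: $\Phi(\bm v)=\prod_i\bar F(x;\alpha_i)\cdot\prod_i\psi^{-1}(v_i)$, and the first factor does not depend on $\bm v$. So $M_n$ plays no role, and the claim reduces to $\prod_i p_i\ge\prod_i p_i^{*}$. Take $\psi(p)=e^{-p}$, $n=2$, $\bm\alpha=(1,2)$, $\bm p=(\ln 2,\ln\frac{10}{9})$ and $\bm p^{*}=(\ln\frac{10}{7},\ln\frac{10}{7})$. Then $\bm\psi(\bm p)=(0.5,0.9)$ and $\bm\psi(\bm p^{*})=(0.7,0.7)$. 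Both parameter matrices lie in $M_2$, and $\bm\psi(\bm p)\overset{w}{\succeq}\bm\psi(\bm p^{*})$ holds (even ordinary majorization does). Yet $p_1p_2\approx 0.073<0.127\approx p_1^{*}p_2^{*}$, so $\bar F_{T_{1:2}}(x)<\bar F_{T^{*}_{1:2}}(x)$ for every $x\ge 0$ with $\bar F(x;\alpha_i)>0$.

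The paper's proof gets past this point only by appealing to ``the log-convexity property of $\psi^{-1}(v_i)$ in $v_i$'', which is not the stated hypothesis. With the sign corrected, your argument (and the paper's) goes through verbatim if one assumes instead that $\psi^{-1}$ is log-convex, i.e.\ that $t\mapsto\psi(e^{t})$ is convex. Under the hypothesis as written the result is false, and no argument can close the gap.
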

 \begin{proof}
        To obtain the required result, we have to prove that $\boldsymbol{\psi}(\boldsymbol{p})\overset {w} {\succeq} \boldsymbol{\psi}(\boldsymbol{p}^{*}) \Rightarrow \bar{F}_{T_{{1}:{n}}}(x)\geq\bar{F}_{T^{*}_{{1}:{n}}}(x).$ In other word, we have to show that $\bar{F}_{T_{1:n}}(x)$ is decreasing and Schur-convex in $\bm{\alpha}$ by Theorem $A.8$ of \cite{Marshall2011}. The reliability function of $\bar{F}_{T_{1:n}}(x)$ can be written as
        $\bar{F}_{T_{1:n}}(x)=\prod_{i=1}^{n}\psi^{-1}(v_{i})\bar{F}(x;\alpha_{i}).$ The partial derivative of $\bar{F}_{T_{1:n}}(x)$ with respect to $v_i,$ is given by
        \begin{equation}
            \frac{\partial\bar{F}_{T_{1:n}}(x)}{\partial v_{i}}=\frac{\frac{d \psi^{-1}(v_{i})}{d v_{i}}}{\psi^{-1}(v_{i})}\bar{F}_{T_{1:n}}(x)\leq 0,
        \end{equation}
        due to the decreasing property of $\psi^{-1}(v_{i})$ in $v_{i}$ for $i=1,\ldots,n$. Consider the case $v_{i}\leq v_{j}$ for any pair $i,~j$ such that $1<i\leq j<n.$ Using the log-convexity property of $\psi^{-1}(v_{i})$ in $v_{i},$ we have
        $$(v_{i}-v_{j})\left(\frac{\partial \bar{F}_{T_{1:n}}(x)}{\partial v_{i}}-\frac{\partial \bar{F}_{T_{1:n}}(x)}{\partial v_{j}}\right)\geq 0,$$ which concludes that $\bar{F}_{T_{1:n}}(x)$ is Schur-convex in $\bm{v}.$ Hence, the proof.
    \end{proof}
    Next, we prove that if $N_{1}\leq_{st} N_{2}$ then the usual stochastic order also holds between two smallest claim amounts $T_{{1}:{N_1}}$ and $T^{*}_{{1}:{N_2}}$ from two sets of independent and heterogeneous portfolios of risks when the matrix of parameters $(\boldsymbol{\psi}(\boldsymbol{p}), \boldsymbol{\alpha};n)$ changes to $(\bm{\psi}(\bm{p}^*),\bm{\beta};n)$ in the sense of the row weak
majorization order in the space $M_{n}.$

    \begin{theorem}\label{th_w_st_small}
	Let $\{U_{1},\ldots,U_{n}\}$ and $\{V_{1},\ldots,V_{n}\}$ be two sets of independent random variables with $U_{i}\sim \bar{F}(x;\alpha_{i})$ and $V_{i}\sim \bar{F}(x;\beta_{i}),$ respectively. Also, let $\{J_{1},\ldots,J_{n}\}$ and 
$\{J_{1}^{*},\ldots,J_{n}^{*}\}$ be another two sets of independent Bernoulli random variables,
independently of $U_{i}'$s and $V_{i}'$s  with
$E(J_{i})=p_{i}$ and $E(J_{i}^{*})=p_{i}^{*},$ respectively. Further, let $N_1$ and $N_2$ be two positive integer-valued random variables independently of $T_{i}'$s and ${T^{*}_{i}}'$s satisfying $N_{1}\leq_{st} N_{2}$, respectively. Assume that the following conditions hold:
\begin{itemize}
    \item [(i)] $\bar{F}(x;\alpha_{i})$ is decreasing and log-convex in $\alpha_{i}$ for any $x$;
        \item [(ii)] $\psi:(0,1)\rightarrow (0,\infty)$
is a differentiable, strictly decreasing and log-convex function.
\end{itemize}
 Then, for $(\boldsymbol{\psi}(\boldsymbol{p}), \boldsymbol\alpha;n), (\boldsymbol{\psi}(\boldsymbol{p}^{*}), \boldsymbol{\beta};n)\in M_{n},$ we have
 $$(\bm{\psi}(\bm{p}),\bm{\alpha};n)\overset{w}{>} (\bm{\psi}(\bm{p}^*),\bm{\beta};n)\Rightarrow T_{{1}:{N_1}}\geq_{st}T^{*}_{{1}:{N_2}}.$$ 
	\end{theorem}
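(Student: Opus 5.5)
The plan has two stages. First we compare the two portfolios for every fixed sample size $n$. Then we pass to random sample sizes through $N_1\leq_{st}N_2$.

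\emph{Fixed $n$.} Assume $(\bm{\psi}(\bm{p}),\bm{\alpha};n)\overset{w}{>}(\bm{\psi}(\bm{p}^*),\bm{\beta};n)$. By Definition \ref{def2.4} this gives $\bm{\psi}(\bm{p})\overset{w}{\succeq}\bm{\psi}(\bm{p}^*)$ and $\bm{\alpha}\overset{w}{\succeq}\bm{\beta}$. We bridge the two portfolios with an intermediate portfolio having occurrence probabilities $\bm{p}^*$ and severity parameters $\bm{\alpha}$, and call its minimum $\tilde T_{1:n}$.

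Theorem \ref{th_m_st_psi_small} compares $\tilde T_{1:n}$ with $T_{1:n}$, since both have common $\bm\alpha$. Theorem \ref{th_m_st_alpha_small} compares $\tilde T_{1:n}$ with $T^*_{1:n}$, since both have common $\bm p^*$; its proof uses only the product form of the survival function, so it applies with $\bm p^*$ in place of $\bm p$. Chaining the two gives $T_{1:n}\geq_{st}T^*_{1:n}$.

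Two technical points need attention here.
\begin{itemize}
\item The intermediate matrices $(\bm{\psi}(\bm{p}^*),\bm{\alpha};n)$ must lie in $M_n$, or at least be arranged so that each invoked theorem's hypothesis holds. Both functions are symmetric, so we may permute coordinates. Because the weak-majorization comparison only involves ordered coordinates, we reorder $\bm\alpha$ to be similarly ordered with $\bm\psi(\bm p^*)$ before invoking the theorems.
\item The survival function of $T_{1:n}$ is, for $x\ge0$ where $T_i>x$ requires $J_i=1$, the product $\prod_i \psi^{-1}(v_i)\bar F(x;\alpha_i)$. Condition (ii) is what makes this log-convex and decreasing in $v_i$, using that $\psi$ is strictly decreasing and so invertible. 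I would double-check this using the fact that the inverse of a decreasing log-convex function yields the required log-convexity of $\psi^{-1}$, as in the proof of Theorem \ref{th_m_st_psi_small}.
\end{itemize}

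\emph{Random $n$.} Condition on the sample sizes:
\[
\bar F_{T_{1:N_1}}(x)=\sum_{n\ge1}P(N_1=n)\,\bar F_{T_{1:n}}(x),\qquad \bar F_{T^*_{1:N_2}}(x)=\sum_{n\ge1}P(N_2=n)\,\bar F_{T^*_{1:n}}(x).
\]
Applying the fixed-$n$ step termwise gives $\bar F_{T^*_{1:N_2}}(x)\le \sum_n P(N_2=n)\bar F_{T_{1:n}}(x)$. Here the parameter sequences are understood as infinite sequences whose first $n$ entries satisfy the hypotheses for each $n$.

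Next, $n\mapsto \bar F_{T_{1:n}}(x)$ is decreasing in $n$, since adding a factor bounded by one can only shrink the product. The usual stochastic order $N_1\le_{st}N_2$ means $E[g(N_1)]\ge E[g(N_2)]$ for every decreasing $g$. Hence $\sum_n P(N_2=n)\bar F_{T_{1:n}}(x)\le \bar F_{T_{1:N_1}}(x)$. Combining the two inequalities yields $T_{1:N_1}\ge_{st}T^*_{1:N_2}$.

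The main obstacle is the fixed-$n$ chaining step. The intermediate parameter matrix must satisfy the $M_n$ ordering constraint for both invoked theorems. In addition, the row weak majorization must hold consistently across the different $n$, meaning for the truncated rows. If the $M_n$ requirement fails for the intermediate matrix, I would instead prove directly that $\bar F_{T_{1:n}}$ is decreasing and Schur-convex jointly in $(\bm v,\bm\alpha)$ along each row. This works because its logarithm is a separable sum of convex decreasing functions, and such a sum is monotone under weak supermajorization of each row by Theorem A.8 of \cite{Marshall2011}, with no $M_n$ constraint needed.
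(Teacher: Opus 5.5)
There is a genuine gap, and it is exactly at the point you said you would double-check. The claimed fact that the inverse of a decreasing log-convex function is log-convex is false. Write $p=\psi^{-1}(v)$. Then $(\log\psi^{-1})'(v)=1/(p\,\psi'(p))$, and $p$ decreases as $v$ increases. Hence $\psi^{-1}$ is log-convex if and only if $p\mapsto p\,\psi'(p)$ is nondecreasing on $(0,1)$, equivalently $t\mapsto\psi(e^{t})$ is convex. Log-convexity of $\psi$ does not imply this.

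The step also cannot be rescued. Take all $\alpha_i=\beta_i$ equal and $N_1=N_2\equiv n$. The claim then reads $\prod_i\psi^{-1}(v_i)\ge\prod_i\psi^{-1}(u_i)$ whenever $\bm v\overset{w}{\succeq}\bm u$, and this forces $\log\psi^{-1}$ to be convex.

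Concretely, take $\psi(p)=e^{-10p}$, which is strictly decreasing and log-linear; it even appears in the paper's Remark. Let $n=2$, $\bar F(x;\alpha)=e^{-\alpha x}$, $\bm\alpha=\bm\beta=(1,1)$ and $\bm p=(0.01,0.09)$. Set $p_1^*=p_2^*=p^*$ with $\psi(p^*)=\tfrac12(\psi(p_1)+\psi(p_2))$, so $p^*\approx 0.0422$. Then $\bm\psi(\bm p)\overset{m}{\succeq}\bm\psi(\bm p^*)$ and both matrices lie in $M_2$ trivially. Yet for $x\ge 0$,
\[
\bar F_{T_{1:2}}(x)=p_1p_2\,e^{-2x}=0.0009\,e^{-2x}<(p^*)^2e^{-2x}\approx 0.00178\,e^{-2x}=\bar F_{T^*_{1:2}}(x).
\]
So the statement is false under condition (ii) as written, and so is Theorem \ref{th_m_st_psi_small}, which you invoke. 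Your fallback has the same hole, because $\log\psi^{-1}$ need not be convex. The paper's own proof breaks at the identical place: the proof of Theorem \ref{th_m_st_psi_small} appeals to log-convexity of $\psi^{-1}$ while only $\psi$ is assumed log-convex. The choice of intermediate portfolio, $(\bm p,\bm\beta)$ in the paper versus $(\bm p^*,\bm\alpha)$ in yours, is immaterial.

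Suppose (ii) is replaced by log-convexity of $\psi^{-1}$, which holds for example for $\psi(p)=-\ln p$ or $\psi(p)=p^{-c}$ with $c>0$. Then your fallback is a complete proof and cleaner than the paper's. The survival function $\bar F_{T_{1:n}}(x)=\prod_i\psi^{-1}(v_i)\cdot\prod_i\bar F(x;\alpha_i)$ splits into two nonnegative factors. Each factor is decreasing and Schur-convex in its own row by Theorem A.8 of \cite{Marshall2011}. So no intermediate portfolio and no $M_n$ condition are needed.

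Your caveat about truncated rows is also a real necessity. The paper's proof uses it silently when it asserts $T_{1:m}\geq_{st}T^*_{1:m}$ for every $m\le n$. For example, let $\bm p=\bm p^*$ have equal entries, $\bm\alpha=(5,1)$, $\bm\beta=(3,3)$, $\bar F(x;\alpha)=e^{-\alpha x}$ and $N_1=N_2\equiv 1$. All stated hypotheses hold, but $P(T_1>x)=p_1e^{-5x}<p_1e^{-3x}=P(T^*_1>x)$ for $x>0$.

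With both corrections, your random-$N$ step is correct and mirrors the paper's. The only difference is that you use monotonicity of $m\mapsto\bar F_{T_{1:m}}(x)$, while the paper uses that of $\bar F_{T^*_{1:m}}(x)$. That function is decreasing, not increasing as the paper writes.
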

    \begin{proof}
        The main step in proving the theorem is to establish that the following two inequalities hold:
\begin{itemize}
    \item[(i)] $(\bm{\psi}(\bm{p}),\bm{\alpha};n)\overset{w}{>} (\bm{\psi}(\bm{p}^*),\bm{\beta};n)\Rightarrow \bar{F}_{T_{{1}:{n}}}(x)\geq \bar{F}_{T^{*}_{{1}:{n}}}(x);$ 
    \item[(ii)] $\bar{F}_{T_{{1}:{N_{1}}}}(x)\geq \bar{F}_{T^{*}_{{1}:{N_{2}}}}(x).$
\end{itemize}
To complete the first part, let us suppose $W_{1:n},$ $Z_{1:n}$ and $M_{1:n}$ be the smallest claim amounts from sample $J_{p_{1:n}}U_{\alpha_{1:n}},\ldots,J_{p_{n:n}}U_{\alpha_{n:n}},$ 
$J_{p_{1:n}}V_{\beta_{1:n}},\ldots,J_{p_{n:n}}V_{\beta_{n:n}}$ and 
$J^{*}_{p^{*}_{1:n}}V_{\beta_{1:n}},\ldots,J^{*}_{p^{*}_{n:n}}V_{\beta_{n:n}},$ 
respectively. It is easy to observe that $T_{1:n}\overset{st}{=}W_{1:n}$ and $T^{*}_{1:n}\overset{st}{=}M_{1:n}.$ However, according to Theorem \ref{th_m_st_alpha_small} we have $W_{1:n} \geq_{st} Z_{1:n}$ and Theorem \ref{th_m_st_psi_small}, we have $Z_{1:n} \geq_{st} M_{1:n},$ which concludes $\bar{F}_{T_{1:n}}(x)\geq\bar{F}_{T^{*}_{1:n}}(x).$ To prove the second part, first we need to show $\bar{F}_{T^{*}_{1:m}}(x)$ is increasing in $m,$ which can be done by Theorem $1.B.28$ of \cite{shaked2007stochastic}. Now
        \begin{align}
    \bar{F}_{T_{1:{N_{1}}}}(x) 
    &=\sum_{m=1}^{n}P({T_{{1}:{N_{1}}}> x}|N_{1}=m)P(N_{1} =m)\nonumber\\
    &=\sum_{m=1}^{n}P({T_{1:{m}}> x})P(N_{1} =m)\nonumber\\
    &\geq \sum_{m=1}^{n}P({T^{*}_{1:{m}}> x})P(N_{1} =m)~~[\text{as }T_{1:m} \geq_{st} T^{*}_{1:m}]\nonumber\\
    &\geq \sum_{m=1}^{n}P(N_{2} =m)\bar{F}_{T^{*}_{1:{m}}}(x)~~[\text{as } N_{1}\leq_{st} N_{2} \text{ and }\bar{F}_{T^{*}_{1:m}}(x) \text{ is monotone in } m ]\nonumber\\
    &=\bar{F}_{T^{*}_{1:{N_{2}}}}(x),
\end{align}
which completes the proof of the theorem.
    \end{proof}
    The condition ``$N_{1}\leq_{st}N_{2}$" provided in Theorem \ref{th_w_st_small} plays an crucial role to establish the result. In the following counterexample, we see that if the condition ``$N_{1}\leq_{st}N_{2}$" is violated then Theorem \ref{th_w_st_small} does not hold.
    \begin{counterexample}\label{cex3.2}
    Consider the density function of a Gamma distribution (denoted by $\text{Gamma}(\theta,\alpha)$), $f(x,\theta,\alpha)=\frac{1}{\Gamma({\alpha})\theta^{\alpha}}x^{\alpha-1}e^{-\frac{x}{\theta}},~x,~\theta,~\alpha>0.$  Let ${\psi}({p})=-\ln(p)$. Here both $\psi(p)$ and the survival function of Gamma distribution ($\bar{F}(x;\alpha)$), are decreasing and log-convex in $p$ and $\alpha$ for fix $\theta,$ respectively. Suppose, $U_i \sim \text{Gamma}(10.09,\alpha_i)$ and $V_i \sim \text{Gamma}(10.09,\beta_i),$ for $i=1,2,3,4,5.$ Further, set $(\alpha_1,\alpha_2,\alpha_3,\alpha_4,\alpha_5)=(1.9,2,3,5,6)$ and $(\beta_1, \beta_2,\beta_3,\beta_4,\beta_5)=(4.9,6.5,7.6,8.2,10.9)$,  $(p_1,p_2,p_3,p_4,p_5)=(e^{-0.7},e^{-2.1},e^{-3.2},e^{-4.9},e^{-6.9})$ and $(q_1,q_2,q_3,q_4,q_5)=(e^{-1.5},e^{-1.6},e^{-2.6},e^{-3.9},e^{-4.2}).$ Clearly, $(\bm{\psi}(\bm{p}),\bm{\alpha};n)\overset{w}{>} (\bm{\psi}(\bm{p}^*),\bm{\beta};n)$. Let $N_1 \sim \text{Poisson}(\lambda_1)$ and $N_2 \sim \text{Poisson}(\lambda_2),$ where $\lambda_1=10.9$ and $\lambda_2=2.$ Also, consider $(U_{1},U_{2},U_{3})$ are selected with probability $P(N_{1}=3)=\frac{e^{-\lambda_1} \lambda_1^3}{3!},$ $(U_{1},U_{2},U_{3},U_{4})$ are selected with probability $P(N_{1}=4)=\frac{e^{-\lambda_1} \lambda_1^4}{4!}$ and $(U_{1},U_{2},U_{3},U_{4},U_{5})$ are selected with probability $P(N_{1}=5)=\frac{e^{-\lambda_1} \lambda_1^5}{5!},$ and $(V_{1},V_{2},V_{3})$ are selected with probability $P(N_{2}=3)=\frac{e^{-\lambda_2} \lambda_2^3}{3!},$ $(V_{1},V_{2},V_{3}, V_{4})$ are selected with probability $P(N_{2}=4)=\frac{e^{-\lambda_2} \lambda_2^4}{4!}$ and $(V_{1},V_{2},V_{3}, V_{4},V_{5})$ are selected with probability $P(N_{2}=5)=\frac{e^{-\lambda_2} \lambda_2 ^5}{5!}.$  Here all the conditions are satisfied of Theorem \ref{th_w_st_small} except $N_{1}\leq_{st}N_{2}.$ Figure $2(b)$ presents that the difference between $\bar{F}_{T_{1:{N_{1}}}}(x)$ and $\bar{F}_{T^{*}_{{1}:{N_{2}}}}(x)$ crosses $x$ axis for some $x\geq 0$ which violated the statement of the Theorem \ref{th_w_st_small}. 
\end{counterexample}
In \cite{balakrishnan2018}, the authors have proved the following theorem where the usual stochastic order holds between largest claim amounts $T_{n:n}$ and $T^{*}_{n:n}$ when the parameter matrix $(\bm{\psi}(\bm{p}),\bm{\alpha};n)$ changes to another matrix $(\bm{\psi}(\bm{p}^*),\bm{\beta};n).$
\begin{theorem}\label{th_bala}
    Let $\{U_{1},\ldots,U_{n}\}$ and $\{V_{1},\ldots,V_{n}\}$ be two sets of independent random variables with $U_{i}\sim \bar{F}(x;\alpha_{i})$ and $V_{i}\sim \bar{F}(x;\beta_{i}),$ respectively. Also, let $\{J_{1},\ldots,J_{n}\}$ and 
$\{J_{1}^{*},\ldots,J_{n}^{*}\}$ be another two sets of independent Bernoulli random variables,
independently of $U_{i}'$s and $V_{i}'$s  with
$E(J_{i})=p_{i}$ and $E(J_{i}^{*})=p_{i}^{*},$ respectively. 
    Assume that the following conditions hold:
\begin{itemize}
    \item [(i)] $\bar{F}(x;\alpha_{i})$ is decreasing and convex in $\alpha_{i}$ for all $x$;
        \item [(ii)] $\psi:(0,1)\rightarrow (0,\infty)$
is a differentiable, strictly decreasing convex function.
\end{itemize}
 Then, for $(\boldsymbol{\psi}(\boldsymbol{p}), \boldsymbol{\alpha};n), (\boldsymbol{\psi}(\boldsymbol{p}^{*}), \boldsymbol{\beta};n)\in M_{n},$ we have

 $$(\bm{\psi}(\bm{p}),\bm{\alpha};n)\overset{w}{>} (\bm{\psi}(\bm{p}^*),\bm{\beta};n)\Rightarrow T_{{n}:{n}}\geq_{st}T^{*}_{{n}:{n}}.$$ 
\end{theorem}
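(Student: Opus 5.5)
Since Theorem \ref{th_bala} is a result of \cite{balakrishnan2018} quoted for comparison, I would prove it by mimicking the structure of Theorem \ref{th_w_st_small}: reduce the matrix statement to two vector statements and chain them. The basic object is the distribution function
\[
F_{T_{n:n}}(x)=\prod_{i=1}^{n}\Big(1-p_i\bar F(x;\alpha_i)\Big)=\prod_{i=1}^{n}\Big(1-\psi^{-1}(v_i)\bar F(x;\alpha_i)\Big),\quad x\ge 0 .
\]
We need $F_{T_{n:n}}(x)\le F_{T^*_{n:n}}(x)$. Here $v_i=\psi(p_i)$, and $\psi^{-1}$ is decreasing and convex because $\psi$ is strictly decreasing and convex.

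\textbf{Step 1 (the $\bm\alpha$-part).} Fix $\bm v$ and set $\Phi(\bm v,\bm\alpha)=F_{T_{n:n}}(x)$. Each factor $g_i=1-\psi^{-1}(v_i)\bar F(x;\alpha_i)$ is increasing in $\alpha_i$, so $\Phi$ is increasing in $\bm\alpha$. By Theorem A.8 of \cite{Marshall2011}, $\bm\alpha\overset{w}{\succeq}\bm\beta$ then gives $\Phi(\bm v,\bm\alpha)\le\Phi(\bm v,\bm\beta)$, provided $\Phi$ is Schur-concave in $\bm\alpha$ in the appropriate sense.

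Symmetry is not available because the $v_i$ differ, so I would use the standard $M_n$ device. It suffices to verify, for $\alpha_i\le\alpha_j$ and $v_i\le v_j$, that
\[
(\alpha_i-\alpha_j)\big(\partial_{\alpha_i}\Phi-\partial_{\alpha_j}\Phi\big)\le 0 .
\]
Write $\partial_{\alpha_i}\Phi=\Phi\cdot\dfrac{-\psi^{-1}(v_i)\,\bar F'(x;\alpha_i)}{g_i}$. The two conditions to check are then the following.

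\begin{itemize}
\item Since $v_i\le v_j$, we have $\psi^{-1}(v_i)\ge\psi^{-1}(v_j)$.
\item Since $\bar F$ is convex and decreasing in $\alpha$, $-\bar F'(x;\alpha_i)\ge-\bar F'(x;\alpha_j)\ge 0$.
\end{itemize}

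In addition, $1/g_i\ge 1/g_j$ needs $\psi^{-1}(v_i)\bar F(x;\alpha_i)\ge\psi^{-1}(v_j)\bar F(x;\alpha_j)$, which holds because both factors are larger at index $i$. Hence $\partial_{\alpha_i}\Phi\ge\partial_{\alpha_j}\Phi$, which is the required sign.

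The comonotonicity built into $M_n$ is exactly what makes all three factors align. This is the step I regard as the main obstacle.

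\textbf{Step 2 (the $\bm v$-part).} Fix $\bm\beta$ and treat $\Phi$ as a function of $\bm v$. It is increasing in $\bm v$, since $\psi^{-1}$ is decreasing. For $v_i\le v_j$ and $\beta_i\le\beta_j$,
\[
\partial_{v_i}\Phi=\Phi\cdot\frac{-(\psi^{-1})'(v_i)\,\bar F(x;\beta_i)}{g_i}.
\]
The same three-factor comparison applies:
\begin{itemize}
\item $-(\psi^{-1})'(v_i)\ge-(\psi^{-1})'(v_j)$, by convexity of $\psi^{-1}$;
\item $\bar F(x;\beta_i)\ge\bar F(x;\beta_j)$;
\item $1/g_i\ge 1/g_j$, as in Step 1.
\end{itemize}
This gives the Schur-concave-type inequality, and $\bm v\overset{w}{\succeq}\bm u$ yields $\Phi(\bm v,\bm\beta)\le\Phi(\bm u,\bm\beta)$.

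\textbf{Step 3 (chaining).} Sort both matrices comonotonically, which is allowed since they lie in $M_n$ and $T_{n:n}$ is invariant in law under a common permutation. Introduce the intermediate portfolio with parameters $(\bm v,\bm\beta)$ and chain
\[
F_{T_{n:n}}\le F_{(\bm v,\bm\beta)}\le F_{T^*_{n:n}},
\]
exactly as in the proof of Theorem \ref{th_w_st_small}.

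A delicate point is that $(\bm v,\bm\beta)$ must itself be comonotone, so that Steps 1 and 2 apply to it. This holds once both matrices are arranged in increasing order. The other subtlety is that the "weak majorization plus monotonicity" argument must use a version of Theorem A.8 valid on the ordered cone, as in \cite{balakrishnan2018}, because $\Phi$ is not symmetric. If needed, I would instead reduce the comparison to successive $T$-transforms acting on pairs $(i,j)$ with comonotone entries.
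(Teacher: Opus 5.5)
Your argument is correct. The paper does not prove this statement; it is quoted from \cite{balakrishnan2018}, so there is no in-paper proof to compare against. Your route is nonetheless the same intermediate-portfolio decomposition the paper uses for Theorem \ref{th_w_st_small}: sort both parameter matrices increasingly, pass through $(\bm v,\bm\beta)$, and treat each row with a vector weak-majorization argument.

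Your sign checks are right. In Step~1, for $\alpha_i\le\alpha_j$ and $v_i\le v_j$, the three nonnegative factors are all at least as large at index $i$. Step~2 works the same way, using that $\psi^{-1}$ is decreasing and convex; note $\psi'<0$ throughout, so $\psi^{-1}$ is differentiable.

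You are also right to use the ordered-cone version of the Schur criterion. On $\{z_1\le\cdots\le z_n\}$ the condition is that the gradient decreases in the index (the cone form of Theorem A.3 in Chapter~3 of \cite{Marshall2011}), combined with monotonicity as in Theorem A.8. This is the correct way to handle the non-symmetry in $\bm\alpha$ and $\bm v$. The paper's own Theorems \ref{th_m_st_alpha_small} and \ref{th_m_st_psi_small} pass over this point when they invoke Lemma \ref{lem2.1}.

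One detail is worth stating explicitly. In Step~2, when you reduce $\bm v\overset{w}{\succeq}\bm u$ to ordinary majorization, choose the intermediate vector $\bm w$ majorized by $\bm v$ with $\bm w\le\bm u$ componentwise, and then sort it. Its entries lie in $[\min_i v_i,\max_i v_i]$, so $\psi^{-1}(\bm w)$ is defined even when the range of $\psi$ is bounded, and all segments used stay in the increasing cone.
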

Now, we generalize Theorem \ref{th_bala} from the case of fixed claim size to the case of random claim sizes.
 
\begin{theorem}\label{th_w_st}
	Let $\{U_{1},\ldots,U_{n}\}$ and $\{V_{1},\ldots,V_{n}\}$ be two sets of independent random variables with $U_{i}\sim \bar{F}(x;\alpha_{i})$ and $V_{i}\sim \bar{F}(x;\beta_{i}),$ respectively. Also, let $\{J_{1},\ldots,J_{n}\}$ and 
$\{J_{1}^{*},\ldots,J_{n}^{*}\}$ be another two sets of independent Bernoulli random variables,
independently of $U_{i}'$s and $V_{i}'$s  with
$E(J_{i})=p_{i}$ and $E(J_{i}^{*})=p_{i}^{*},$ respectively. Further, let $N_1$ and $N_2$ be two positive integer-valued random variables independently of $T_{i}'$s and ${T^{*}_{i}}'$s satisfying $N_{1}\leq_{st} N_{2}$, respectively. Assume that the following conditions hold:
\begin{itemize}
    \item [(i)] $\bar{F}(x;\alpha_{i})$ is decreasing and convex in $\alpha_{i}$ for all $x$;
        \item [(ii)] $\psi:(0,1)\rightarrow (0,\infty)$
is a differentiable, strictly decreasing convex function.
\end{itemize}
 Then, for $(\boldsymbol{\psi}(\boldsymbol{p}), \boldsymbol{\alpha};n), (\boldsymbol{\psi}(\boldsymbol{p}^{*}), \boldsymbol{\beta};n)\in M_{n},$ we have

 $$(\bm{\psi}(\bm{p}),\bm{\alpha};n)\overset{w}{>} (\bm{\psi}(\bm{p}^*),\bm{\beta};n)\Rightarrow T_{{N_1}:{N_1}}\geq_{st}T^{*}_{{N_2}:{N_2}}.$$ 
	\end{theorem}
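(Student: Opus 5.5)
The plan follows the two-step scheme used for Theorem \ref{th_w_st_small}. First I obtain the fixed-size comparison for every $m$. Then I mix over the random number of claims and use $N_1\leq_{st}N_2$.

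\textbf{Step 1 (fixed sample size).} For each fixed $m\in\{1,\ldots,n\}$, consider the submatrices formed by the first $m$ columns. Theorem \ref{th_bala} then gives
$$F_{T_{m:m}}(x)=\prod_{i=1}^{m}\bigl(1-\psi^{-1}(v_i)\bar F(x;\alpha_i)\bigr)\leq F_{T^{*}_{m:m}}(x)\quad\text{for all }x.$$
This uses hypotheses (i) and (ii): decreasing and convex in $\alpha$, and $\psi$ strictly decreasing and convex. The one check needed is that restricting to $m$ coordinates preserves both membership in $M_m$ and the row weak majorization, since $\overset{w}{\succeq}$ is stated for full vectors. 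I would handle this by working with the ordered arrangement $W,Z,M$ from the proof of Theorem \ref{th_w_st_small}, so that the relevant partial sums are the smallest ones. Failing that, I would simply assume the majorization holds for each leading block, as the paper implicitly does in Theorem \ref{th_w_st_small}.

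\textbf{Step 2 (mixing).} By conditioning on the sample size,
$$F_{T_{N_1:N_1}}(x)=\sum_m F_{T_{m:m}}(x)P(N_1=m)\leq \sum_m F_{T^{*}_{m:m}}(x)P(N_1=m)=F_{T^{*}_{N_1:N_1}}(x).$$
It then remains to compare $\sum_m F_{T^{*}_{m:m}}(x)P(N_1=m)$ with $\sum_m F_{T^{*}_{m:m}}(x)P(N_2=m)$. For this I would use the monotonicity of $m\mapsto F_{T^{*}_{m:m}}(x)$ together with the characterization of $\leq_{st}$ by expectations of monotone functions (Theorem 1.A.3 of \cite{shaked2007stochastic}).

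\textbf{Main obstacle.} The main obstacle is the direction of the monotonicity in Step 2. Each factor $1-\psi^{-1}(u_i)\bar F(x;\beta_i)$ lies in $[0,1]$, so $F_{T^{*}_{m:m}}(x)$ is \emph{decreasing} in $m$. Combined with $N_1\leq_{st}N_2$, this yields $E[F_{T^{*}_{N_1:N_1}}(x)]\geq E[F_{T^{*}_{N_2:N_2}}(x)]$. That is the wrong direction to chain with Step 1. The same sign issue arises if one first passes from $N_1$ to $N_2$ for the unstarred sample.

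I would first try a direct coupling of $(N_1,N_2)$ with $N_1\leq N_2$ almost surely, to see whether the surplus claims $T^{*}_{N_1+1},\ldots,T^{*}_{N_2}$ can be absorbed by the majorization slack. I expect this to fail in general. A test case is equal parameters with $N_2$ much larger than $N_1$; there the maximum of more claims is stochastically larger, which would contradict the stated conclusion.

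So the proof would need either the reversed hypothesis $N_2\leq_{st}N_1$, or an extra assumption that makes Step 2 trivial, such as $N_1\overset{st}{=}N_2$. If neither is available, I would look for a numerical counterexample in the spirit of Counterexample \ref{cex3.2}, with Poisson $N_1\leq_{st}N_2$ and heavily separated means.
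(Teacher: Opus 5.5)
Your diagnosis is correct, and it is exactly where the paper's own proof breaks. The paper uses the same two steps: Theorem~\ref{th_bala} at fixed size, then mixing over the number of claims. In the mixing step it asserts that $F_{T^{*}_{m:m}}(x)$ is increasing in $m$. It then invokes $N_1\le_{st}N_2$ to pass from $\sum_m F_{T_{m:m}}(x)P(N_1=m)$ to the supposedly larger $\sum_m F_{T_{m:m}}(x)P(N_2=m)$. But $F_{T^{*}_{m+1:m+1}}(x)=F_{T^{*}_{m:m}}(x)\,P(T^{*}_{m+1}\le x)\le F_{T^{*}_{m:m}}(x)$, so these sequences are nonincreasing in $m$. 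For nonincreasing $h$, the hypothesis $N_1\le_{st}N_2$ gives $E[h(N_1)]\ge E[h(N_2)]$, so that step runs the wrong way.

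No other argument can repair this, because the statement is false. Your test case already shows it. Take $n=2$, $\bar F(x;\alpha)=e^{-\alpha x}$, $\psi(p)=-\ln p$, $\bm\alpha=\bm\beta=(1,1)$ and $\bm p=\bm p^{*}=(1/2,1/2)$. The two parameter matrices coincide, lie in $M_2$, and trivially satisfy $\overset{w}{>}$. With $N_1\equiv1$ and $N_2\equiv2$,
\[
F_{T_{N_1:N_1}}(x)=1-\tfrac12e^{-x}>(1-\tfrac12e^{-x})^2=F_{T^{*}_{N_2:N_2}}(x)\quad\text{for every } x\ge0,
\]
so the reverse ordering holds strictly. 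More generally, with identical parameters and all $p_i\in(0,1)$, we have $F_{T_{N:N}}(0)=E[g(N)]$ with $g(m)=\prod_{i\le m}(1-p_i)$ strictly decreasing. Hence the conclusion fails for every pair $N_1\le_{st}N_2$ that are not equal in distribution. You need neither widely separated means nor a numerical search.

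Your concern about Step~1 is also a genuine, independent defect. The paper passes over it: it applies Theorem~\ref{th_bala} only at size $n$, then uses $T_{m:m}\ge_{st}T^{*}_{m:m}$ for every $m$. Your first remedy does not work. The rearrangement $W,Z,M$ is legitimate only for the full sample, where the extreme is symmetric in the indices. Here $N$ picks the \emph{first} $m$ indices, so permuting coordinates changes the law of $T_{m:m}$ for $m<n$.

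Row weak majorization of full rows also does not pass to leading blocks. Take $n=2$, exponential severities, all $p_i=p^{*}_i=p$, $\bm\alpha=(5,1)$ and $\bm\beta=(2,6)$. Both matrices lie in $M_2$, since the $\psi$-rows are constant. Also $\bm\alpha\overset{w}{\succeq}\bm\beta$, since the ordered partial sums satisfy $2\ge1$ and $2+6\ge1+5$. Yet for $N_1\equiv N_2\equiv1$ we get $\bar F_{T_1}(x)=pe^{-5x}<pe^{-2x}=\bar F_{T^{*}_1}(x)$ for $x>0$. So even $N_1\overset{st}{=}N_2$ does not save the statement without an extra hypothesis on leading blocks.

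Your argument does prove a corrected version. Assume $\overset{w}{>}$ holds for every leading $2\times m$ submatrix with $m$ in the support of $N_1$ (for instance, when all four rows are arranged increasingly), and assume $N_2\le_{st}N_1$. Then
\[
F_{T_{N_1:N_1}}(x)\le\sum_m F_{T^{*}_{m:m}}(x)P(N_1=m)\le\sum_m F_{T^{*}_{m:m}}(x)P(N_2=m)=F_{T^{*}_{N_2:N_2}}(x).
\]
The second inequality now correctly uses that $m\mapsto F_{T^{*}_{m:m}}(x)$ is nonincreasing.

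For minima (Theorem~\ref{th_w_st_small}), the direction $N_1\le_{st}N_2$ is the right one, because $\bar F_{T^{*}_{1:m}}(x)$ is also nonincreasing in $m$. That theorem still has the same leading-block issue. The maxima theorem appears to have inherited the minima hypothesis along with the sign error.
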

\begin{proof}
There are two steps to complete the proof of the theorem:
\begin{itemize}
    \item[(i)] $(\bm{\psi}(\bm{p}),\bm{\alpha};n)\overset{w}{>} (\bm{\psi}(\bm{p}^*),\bm{\beta};n)\Rightarrow F_{T_{{n}:{n}}}(x)\leq F_{T^{*}_{{n}:{n}}}(x);$ 
    \item[(ii)] $F_{T_{{N_1}:{N_{1}}}}(x)\leq F_{T^{*}_{{N_2}:{N_{2}}}}(x).$
\end{itemize}
Using Theorem \ref{th_bala}, we can write the first inequality. To establish the second inequality, first we need to show that ${F}_{T^{*}_{m:m}}(x)$ is increasing in $m,$ which can be prove easily as ${F}_{T^{*}_{m:m}}(x)<{F}_{T^{*}_{m+1:m+1}}(x)$ for all $m=1,\ldots,n$. Now
\begin{align}
    F_{T_{{N_{1}}:{N_{1}}}}(x) &=\sum_{m=1}^{n}P({T_{{N_{1}}:{N_{1}}}< x}|N_{1}=m)P(N_{1} =m)\nonumber\\
    &=\sum_{m=1}^{n}P({T_{{m}:{m}}< x})P(N_{1} =m)\nonumber\\
    &\leq\sum_{m=1}^{n}P({T_{{m}:{m}}<x})P(N_{2} =m)~[\textbf{ as } N_{1}\leq_{st}N_{2}] \nonumber\\
    &\leq\sum_{m=1}^{n}P({T^{*}_{{m}:{m}}< x})P(N_{2} =m), ~[\text{ as } T_{{m}:{m}}\geq_{st}T^{*}_{{m}:{m}}]\nonumber\\
    &\leq F_{T^{*}_{{N_{2}}:{N_{2}}}}(x),
\end{align}
 which completes the proof of the theorem.
\end{proof}

Now we consider the following example to illustrate Theorem \ref{th_w_st}.
\begin{example}\label{ex3.1}
    Suppose, $U_i \sim \text{Gamma}(k,\alpha_i)$ and $V_i \sim \text{Gamma}(k,\beta_i),$ with $k=1.5$ for $i=1,2,3,4,5.$ Further, set $(\alpha_1,\alpha_2,\alpha_3,\alpha_4,\alpha_5)=(1.9,2,3,5,6)$ and $(\beta_1, \beta_2,\beta_3,\beta_4,\beta_5)=(4.9,6.5,7.6,8.2,10.9)$, ${\psi}({p})=-\ln(p)$, $(p_1,p_2,p_3,p_4,p_5)=(e^{-0.7},e^{-0.9},e^{-3},e^{-4.9},e^{3.9})$ and $(q_1,q_2,q_3,q_4,q_5)=(e^{-1.2},e^{-1.5},e^{-1.6},e^{-2.6},e^{3.9})$. Clearly, $(\bm{\psi}(\bm{p}),\bm{\alpha};n)\overset{w}{>} (\bm{\psi}(\bm{p}^*),\bm{\beta};n)$. Let $N_1 \sim \text{Poisson}(\lambda_1)$ and $N_2 \sim \text{Poisson}(\lambda_2),$ where $\lambda_1=0.9$ and $\lambda_2=1.9.$ Also, consider $(U_{1},U_{2},U_{3})$ are selected with probability $P(N_{1}=3)=\frac{e^{-\lambda_1} \lambda_1^3}{3!},$ $(U_{1},U_{2},U_{3},U_{4})$ are selected with probability $P(N_{1}=4)=\frac{e^{-\lambda_1} \lambda_1^4}{4!}$ and $(U_{1},U_{2},U_{3},U_{4},U_{5})$ are selected with probability $P(N_{1}=5)=\frac{e^{-\lambda_1} \lambda_1^5}{5!},$ and $(V_{1},V_{2},V_{3})$ are selected with probability $P(N_{2}=3)=\frac{e^{-\lambda_2} \lambda_2^3}{3!},$ $(V_{1},V_{2},V_{3}, V_{4})$ are selected with probability $P(N_{2}=4)=\frac{e^{-\lambda_2} \lambda_2^4}{4!}$ and $(V_{1},V_{2},V_{3}, V_{4},V_{5})$ are selected with probability $P(N_{2}=5)=\frac{e^{-\lambda_2} \lambda_2 ^5}{5!}.$ It is clear that $N_{1}\leq_{st}N_{2}.$ Here, all the conditions are satisfied of Theorem \ref{th_w_st}. Figure $1(a)$ presents that the difference between  ${F}_{T_{N_{1}:{N_{1}}}}(x)$ and ${F}_{T^{*}_{{N_{2}}:{N_{2}}}}(x)$ takes negative values for all $x\geq 0$. 
\end{example}
Next counterexample shows that the result presented in Theorem \ref{th_w_st} may not be true if we remove the condition $(\boldsymbol{\psi}(\boldsymbol{p}), \boldsymbol{\alpha};n), (\boldsymbol{\psi}(\boldsymbol{p}^{*}), \boldsymbol{\beta};n)\in M_{n}.$
\begin{counterexample}\label{cex3.1}
    Suppose, $U_i \sim \text{Gamma}(k,\alpha_i)$ and $V_i \sim \text{Gamma}(k,\beta_i),$ with $k=1.5$ for $i=1,2,3,4,5.$ Further, set $(\alpha_1,\alpha_2,\alpha_3,\alpha_4,\alpha_5)=(1.9,5,5.5,6,10)$ and $(\beta_1, \beta_2,\beta_3,\beta_4,\beta_5)=(0.7,0.9,3,4.9,3.9)$, ${\psi}({p})=-\ln(p)$, $(p_1,p_2,p_3,p_4,p_5)=(e^{-0.7},e^{-0.9},e^{-3},e^{-4.9},e^{3.9})$ and $(q_1,q_2,q_3,q_4,q_5)=(e^{-3.9},e^{-1.5},e^{-1.6},e^{-4.2},e^{2.6})$. Clearly, $(\bm{\psi}(\bm{p}),\bm{\alpha};n)\overset{w}{>} (\bm{\psi}(\bm{p}^*),\bm{\beta};n)$. Let $N_1 \sim \text{Poisson}(\lambda_1)$ and $N_2 \sim \text{Poisson}(\lambda_2),$ where $\lambda_1=0.9$ and $\lambda_2=1.9.$ Also, consider $(U_{1},U_{2},U_{3})$ are selected with probability $P(N_{1}=3)=\frac{e^{-\lambda_1} \lambda_1^3}{3!},$ $(U_{1},U_{2},U_{3},U_{4})$ are selected with probability $P(N_{1}=4)=\frac{e^{-\lambda_1} \lambda_1^4}{4!}$ and $(U_{1},U_{2},U_{3},U_{4},U_{5})$ are selected with probability $P(N_{1}=5)=\frac{e^{-\lambda_1} \lambda_1^5}{5!},$ and $(V_{1},V_{2},V_{3})$ are selected with probability $P(N_{2}=3)=\frac{e^{-\lambda_2} \lambda_2^3}{3!},$ $(V_{1},V_{2},V_{3}, V_{4})$ are selected with probability $P(N_{2}=4)=\frac{e^{-\lambda_2} \lambda_2^4}{4!}$ and $(V_{1},V_{2},V_{3}, V_{4},V_{5})$ are selected with probability $P(N_{2}=5)=\frac{e^{-\lambda_2} \lambda_2 ^5}{5!}.$ Here it is clear that $N_{1}\leq_{st}N_{2}.$  Here all the conditions are satisfied of Theorem \ref{th_w_st} except $(\boldsymbol{\psi}(\boldsymbol{p}), \boldsymbol{\alpha};n), (\boldsymbol{\psi}(\boldsymbol{p}^{*}), \boldsymbol{\beta};n)\in M_{n}.$  Figure $1(b)$ presents that the difference between  ${F}_{T_{N_{1}:{N_{1}}}}(x)$ and ${F}_{T^{*}_{{N_{2}}:{N_{2}}}}(x)$ crosses $x$ axis for some $x\geq 0,$ which violates the statement of the Theorem \ref{th_w_st}. 
    \begin{figure}[ht]
		\begin{center}
  			\subfigure[]{\label{c1}\includegraphics[height=2.0in]{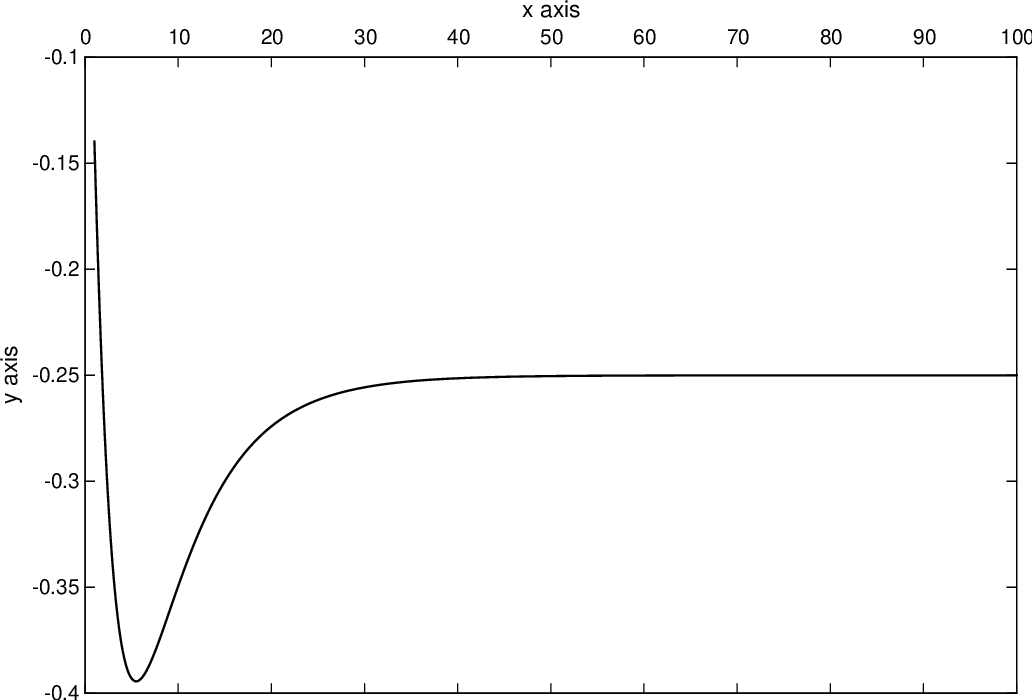}}
			\subfigure[]{\label{c2}\includegraphics[height=2.0in]{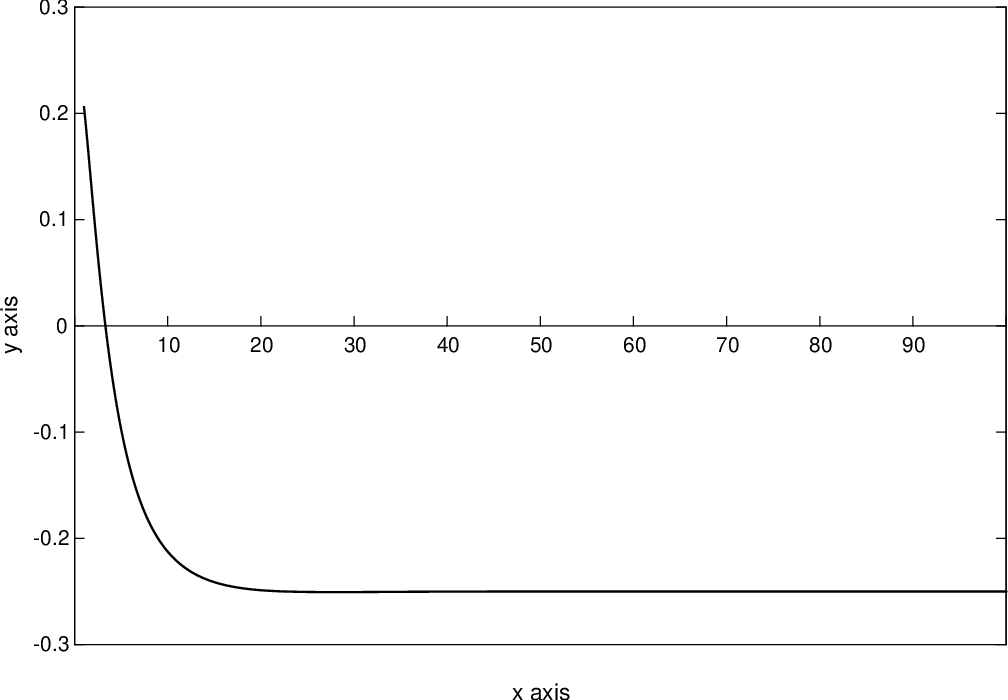}}
			\caption{(a) Plot of $\bar{F}_{T_{{N_1}:{N_{1}}}}(x)-\bar{F}_{T^{*}_{{N_2}:{N_{2}}}}(x)$ as in Example \ref{ex3.1}.  (b) Plot of ${F}_{X_{{N_1}:{N_{1}}}}(x)-{F}_{Y_{{N_2}:{N_{2}}}}(x)$ as in Counterexample \ref{cex3.1}. 
            }
		\end{center}
	\end{figure}
\end{counterexample}
\begin{remark}
   The condition ``$\psi:(0,1)\rightarrow (0,\infty)$
is a differentiable, strictly decreasing convex function" presented in Theorem \ref{th_w_st}, is quite easy to verify for different transform functions. Here, we consider some widely used decreasing and convex transformation functions. Let $X_{i}$ be the $i$th claim amount associated with occurrence probability $p_{i},$ for $i\in\mathcal{J}_{n}.$ Consider
    \begin{itemize}
        \item [(i)] the Exponential transformation with
        $$\psi(p_{i})=e^{-\alpha p_{i}},$$ where $\alpha>0.$
        \item [(ii)] the power (inverse) transformation with
         $$\psi(p_{i})={p^{\beta}_{i}},$$ where $\beta>0$ a risk aversion coefficient, will be used in distorted risk measures, highlighting extreme
tails.
         \item [(iii)] the Negative logarithm with
        $$\psi(p_{i})=-\ln(p_{i}).$$
    \end{itemize}
\end{remark}

    In insurance analysis, it is quite common that the behavior of insurance claims data will be asymmetry and heavy-tailed. Therefore, to model this type data, an insurer always try to find out some distributions having such properties. In this context,  Kumaraswamy-G (Kw-G) distribution is a generalized family of distributions, which is a more flexible model having greater control over the shape, skewness, and kurtosis of the distribution. Due to the flexibility property of Kw-G distributions, insurer
frequently use this distribution to model better tail risks. It is interesting to note that the condition $(i)$ presented in Theorem \ref{th_w_st} contains the Kw-G family, as a special case. Therefore, the following result is a direct application of Theorem \ref{th_w_st} for the case of independent and heterogeneous Kw-G risks. 
The random variable $\{U_{1},\ldots,U_{n}\}$ is said to belong to Kw-G family if $U_{i}\sim (1-G^{\gamma}(x))^{\alpha_{i}},$ where $\gamma,~\alpha_{i}>0$ for $i=1,\ldots,n.$
 \begin{theorem}\label{th_m_st_g}
	Let $\bar{F}(x;\alpha_{i})=(1-G^{\gamma}(x))^{\alpha_{i}}$ and $\bar{F}(x;\beta_{i})=(1-G^{\gamma}(x))^{\beta_{i}}$ for $i=1,\ldots,n.$ Under the set-up of Theorem \ref{th_w_st}, suppose $\psi:(0,1)\rightarrow (0,\infty)$
is a differentiable, strictly decreasing convex function.
Then, for $(\boldsymbol{\psi}(\boldsymbol{p}), \boldsymbol{\alpha};n), (\boldsymbol{\psi}(\boldsymbol{p}^{*}), \boldsymbol{\beta};n)\in M_{n},$ we have

 $$(\bm{\psi}(\bm{p}),\bm{\alpha};n)\overset{w}{>} (\bm{\psi}(\bm{p}^*),\bm{\beta};n)\Rightarrow T_{{N_1}:{N_1}}\geq_{st}T^{*}_{{N_2}:{N_2}},$$ 
	\end{theorem}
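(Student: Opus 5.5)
The plan is to derive the statement as a direct corollary of Theorem \ref{th_w_st}. Since the set-up is identical and $\psi$ satisfies condition (ii) by hypothesis, the only thing left to check is condition (i) of Theorem \ref{th_w_st}. That is, I will show that for every fixed $x$ the map $\alpha\mapsto \bar{F}(x;\alpha)=(1-G^{\gamma}(x))^{\alpha}$ is decreasing and convex on $(0,\infty)$.

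Fix $x$ and write $c=c(x)=1-G^{\gamma}(x)\in[0,1]$. Then $\bar F(x;\alpha)=c^{\alpha}$, and I would split into cases.
\begin{itemize}
\item If $c\in(0,1)$, then $c^{\alpha}=e^{\alpha\ln c}$ with $\ln c<0$. This gives
\[
\frac{\partial}{\partial\alpha}c^{\alpha}=(\ln c)\,c^{\alpha}\le 0,\qquad \frac{\partial^2}{\partial\alpha^2}c^{\alpha}=(\ln c)^2 c^{\alpha}\ge 0 ,
\]
so the map is decreasing and convex.
\item If $c=1$ (where $G(x)=0$) or $c=0$ (where $G(x)=1$), the function is constant in $\alpha$ for $\alpha>0$. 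It is therefore trivially (weakly) decreasing and convex.
\end{itemize}
Hence condition (i) holds for all $x$. Incidentally, the family is even log-linear in $\alpha$, so it would also meet the log-convexity requirement of Theorem \ref{th_w_st_small}.

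With (i) and (ii) verified, the membership $(\boldsymbol{\psi}(\boldsymbol{p}),\boldsymbol{\alpha};n),(\boldsymbol{\psi}(\boldsymbol{p}^{*}),\boldsymbol{\beta};n)\in M_n$, the row weak majorization hypothesis, and $N_1\le_{st}N_2$ are exactly the remaining assumptions of Theorem \ref{th_w_st}. Invoking that theorem yields $T_{N_1:N_1}\ge_{st}T^{*}_{N_2:N_2}$.

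There is no genuine obstacle here. The only point needing care is the boundary values of $G^{\gamma}(x)$, where the derivative argument degenerates, and these are handled by noting that constant functions satisfy the monotonicity and convexity requirements in the weak sense used throughout the paper.
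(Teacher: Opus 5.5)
Your reduction is exactly the paper's proof: it computes $\partial_{\alpha}(1-G^{\gamma})^{\alpha}=\log(1-G^{\gamma})(1-G^{\gamma})^{\alpha}$ and $\partial^2_{\alpha}(1-G^{\gamma})^{\alpha}=[\log(1-G^{\gamma})]^2(1-G^{\gamma})^{\alpha}$, then cites Theorem \ref{th_w_st}. Your handling of the points where $G^{\gamma}(x)\in\{0,1\}$ is more careful than the paper's strict inequalities.

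\textbf{The gap.} The gap is the step you treat as free. Theorem \ref{th_w_st} is false as stated, and so is the present statement. Take $n=2$, $\alpha_i=\beta_i=1$, $p_i=p_i^*=\tfrac12$, $\psi(p)=-\ln p$, $N_1\equiv 1$ and $N_2\equiv 2$. The two parameter matrices coincide, so both lie in $M_2$ and trivially satisfy $\overset{w}{>}$; also $N_1\le_{st}N_2$. For every $x\ge 0$ with $G(x)<1$, put $a=1-\tfrac12(1-G^{\gamma}(x))\in[\tfrac12,1)$. Then $F_{T_{N_1:N_1}}(x)=a>a^2=F_{T^*_{N_2:N_2}}(x)$, so $T_{N_1:N_1}\ge_{st}T^*_{N_2:N_2}$ fails; in fact the reverse order holds.

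The faulty step in the proof of Theorem \ref{th_w_st} is the claim that $m\mapsto F_{T^*_{m:m}}(x)$ is increasing. Actually $F_{T^*_{m+1:m+1}}(x)=F_{T^*_{m:m}}(x)\bigl(1-p^*_{m+1}\bar F(x;\beta_{m+1})\bigr)\le F_{T^*_{m:m}}(x)$. So $N_1\le_{st}N_2$ gives $\sum_m F_{T^*_{m:m}}(x)P(N_1=m)\ge\sum_m F_{T^*_{m:m}}(x)P(N_2=m)$, which is the wrong direction. For maxima one needs $N_2\le_{st}N_1$.

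\textbf{A second, independent gap.} The fixed-$m$ comparison $T_{m:m}\ge_{st}T^*_{m:m}$ for $m<n$ needs row weak majorization of the leading $m$-column submatrices. This does not follow from the hypothesis on the full $n$-column matrices. With $\psi(p)=-\ln p$, let both rows of the first matrix be $(5,1)$ and both rows of the second be $(2,4)$. Both matrices lie in $M_2$, and $(5,1)\overset{w}{\succeq}(2,4)$ in the paper's convention. Yet for $N_1=N_2\equiv 1$ and $c=1-G^{\gamma}(x)\in(0,1]$ you get $\bar F_{T_1}(x)=e^{-5}c^{5}<e^{-2}c^{2}=\bar F_{T_1^*}(x)$.

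So no verification of condition (i), however careful, can rescue the argument; the statement itself has to change. A correct version must assume $N_2\le_{st}N_1$ together with row weak majorization of every leading submatrix. The latter is automatic, for example, when all rows are arranged in increasing order.
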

  \begin{proof}
     { Using Theorem \ref{th_w_st}, we only need to check that `$\bar{F}(x;\alpha_{i})$ is decreasing and convex in $\alpha_{i}$ for all $x$". For the given model $\bar{F}(x;\alpha_{i})=(1-G^{\gamma}(x))^{\alpha_{i}}$ for $i=1,\ldots,n.$ 
     Taking first order and second order partial derivatives of $\bar{F}(x;\alpha_{i})$ with respect to $\alpha_{i}$ for $i=1,\ldots,n,$ we get
     \begin{equation}\label{e1}
         \frac{\partial\bar{F}(x;\alpha_{i})}{\partial\alpha_i}=log(1-G^{\gamma}(x))(1-G^{\gamma}(x))^{\alpha_{i}}<0
     \end{equation}
     and 
     \begin{equation}\label{e2}
         \frac{\partial^2\bar{F}(x;\alpha_{i})}{\partial\alpha^2_i}=[log(1-G^{\gamma}(x))]^2(1-G^{\gamma}(x))^{\alpha_{i}}>0,
     \end{equation}
 which completes the proof of the theorem. }
  \end{proof} 
    
   Similar to Theorem \ref{th_m_st_g}, it is important to note that the well-known scale family and the proportional hazard rate family also satisfy the condition $(i)$ presented in Theorem \ref{th_w_st}, thus using Theorem \ref{th_w_st}, we can write the following two theorems which are extensions of Theorems $3.5$ and $3.6$ in \cite{balakrishnan2018} from fixed claim size to random claim sizes for the case of heterogeneous and independent scale and proportional hazard rate risks. The random variable $\{U_{1},\ldots,U_{n}\}$ is said to belong to the scale family if $U_{i}\sim \bar{F}(x\alpha_{i}),$ where $\alpha_{i}>0$ for $i=1,\ldots,n.$ Here, $f(x)$ is the density function corresponding to the distribution $F(x).$
  \begin{theorem}\label{th_m_st_scale}
	Let $\bar{F}(x;\alpha_{i})=\bar{F}(x\alpha_{i})$ and $\bar{F}(x;\beta_{i})=\bar{F}(x\beta_{i})$ for $i=1,\ldots,n.$ Under the set-up of Theorem \ref{th_w_st}, suppose the following two conditions hold:
    \begin{itemize}
        \item [(i)] $\psi:(0,1)\rightarrow (0,\infty)$
is a differentiable, strictly decreasing convex function;
\item[(ii)] $f(x)$ is decreasing in $x.$
    \end{itemize}
Then, for $(\boldsymbol{\psi}(\boldsymbol{p}), \boldsymbol\alpha;n), (\boldsymbol{\psi}(\boldsymbol{p}^{*}), \boldsymbol{\beta};n)\in M_{n},$ we have

 $$(\bm{\psi}(\bm{p}),\bm{\alpha};n)\overset{w}{>} (\bm{\psi}(\bm{p}^*),\bm{\beta};n)\Rightarrow T_{{N_1}:{N_1}}\geq_{st}T^{*}_{{N_2}:{N_2}}.$$ 
	\end{theorem}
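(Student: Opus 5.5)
The plan is to reduce Theorem \ref{th_m_st_scale} to Theorem \ref{th_w_st}. Condition (ii) of Theorem \ref{th_w_st} on $\psi$ is assumed verbatim, and the set-up (independence, $N_1\leq_{st}N_2$, membership in $M_n$, row weak majorization) is inherited. So the only thing left to check is condition (i) of Theorem \ref{th_w_st}: for every fixed $x$, the map $\alpha\mapsto \bar F(x;\alpha)=\bar F(x\alpha)$ is decreasing and convex on $(0,\infty)$. Once that holds, Theorem \ref{th_w_st} gives $T_{N_1:N_1}\geq_{st}T^{*}_{N_2:N_2}$ directly.

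For the verification I will fix $x\geq 0$ and differentiate in $\alpha$, mirroring the computation in the proof of Theorem \ref{th_m_st_g}. The first derivative is
$$\frac{\partial \bar F(x\alpha)}{\partial\alpha}=-x f(x\alpha)\leq 0,$$
which gives monotonicity, since $x\geq 0$ and $f\geq 0$. The second derivative is formally $-x^2 f'(x\alpha)$, and it is nonnegative exactly when $f$ is decreasing, which is hypothesis (ii).

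The one point needing care is that $f$ need not be differentiable. To avoid assuming this, I will argue convexity directly. The derivative $\alpha\mapsto -x f(x\alpha)$ is nondecreasing in $\alpha$ because $f$ is decreasing and $x\geq 0$. A function whose derivative is nondecreasing is convex; equivalently, $\bar F(x\alpha)=\int_{x\alpha}^{\infty}f(t)\,dt$ is convex in $\alpha$ for a decreasing density. The case $x<0$ is trivial for nonnegative claims, since there the survival function is constantly $1$. This settles condition (i).

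I do not expect a real obstacle. The main subtlety is this regularity point about $f$, and the monotone-derivative argument handles it without assuming differentiability.
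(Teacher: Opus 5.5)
Your reduction is exactly the paper's proof: check condition (i) of Theorem \ref{th_w_st} via $\partial_\alpha \bar F(x\alpha)=-xf(x\alpha)\le 0$ and the monotonicity of $f$, then invoke Theorem \ref{th_w_st}. Your convexity argument through $\bar F(y)=\int_y^\infty f(t)\,dt$ with $f$ decreasing is correct, and cleaner than the paper's second-derivative computation.

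The step that fails is the last one, ``Theorem \ref{th_w_st} gives the conclusion directly''. That theorem is false, and so is the statement you are proving. For a counterexample, take the following data:
\begin{itemize}
\item $\bar F(x)=e^{-x}$, which has a decreasing density, and $\psi(p)=-\ln p$;
\item $n=2$, $\bm\alpha=\bm\beta=(1,1)$ and $\bm p=\bm p^*=(1/2,1/2)$, so both parameter matrices equal $\bigl((\ln 2,\ln 2),(1,1)\bigr)\in M_2$, and a matrix row weakly majorizes itself;
\item $N_1\equiv 1$ and $N_2\equiv 2$, so $N_1\le_{st}N_2$.
\end{itemize}
Then $\bar F_{T_{N_1:N_1}}(x)=\tfrac12 e^{-x}$, while
$\bar F_{T^*_{N_2:N_2}}(x)=1-\bigl(1-\tfrac12 e^{-x}\bigr)^2=\tfrac12 e^{-x}+\tfrac14 e^{-x}\bigl(2-e^{-x}\bigr)$.
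The second is strictly larger for every $x\ge 0$, so $T_{N_1:N_1}\ge_{st}T^*_{N_2:N_2}$ fails. Non-degenerate choices fail the same way, e.g.\ masses $(0.9,0.1)$ for $N_1$ and $(0.1,0.9)$ for $N_2$ on $\{1,2\}$.

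The error is in the paper's proof of Theorem \ref{th_w_st}. It asserts that $F_{T^*_{m:m}}(x)$ increases in $m$. In fact $F_{T^*_{m:m}}(x)=\prod_{i=1}^m\bigl(1-p_i^*\bar F(x;\beta_i)\bigr)$ is nonincreasing in $m$. So $N_1\le_{st}N_2$ gives $\sum_m F_{T_{m:m}}(x)P(N_1=m)\ge \sum_m F_{T_{m:m}}(x)P(N_2=m)$, which is the reverse of the inequality used. For maxima a larger sample size works against the majorization effect; for minima (Theorem \ref{th_w_st_small}) the two effects align.

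There is a second gap. Using Theorem \ref{th_bala} at sample size $m<n$ needs row weak majorization of the first $m$ columns, and that does not follow from the hypothesis on the full $2\times n$ matrices.

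So your argument only establishes condition (i). To reach a true statement you need a corrected Theorem \ref{th_w_st}, for instance with $N_2\le_{st}N_1$ and the majorization assumed for every leading $2\times m$ block. With that version, your verification of (i) is all that remains to be checked.
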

    \begin{proof}
{ For the given model $\bar{F}(x;\alpha_{i})=\bar{F}(x\alpha_{i})$ for $i=1,\ldots,n.$ 
     Evaluating the first order and second order partial derivatives of $\bar{F}(x;\alpha_{i})$ with respect to $\alpha_{i}$ for $i=1,\ldots,n,$ we obtain
     \begin{equation}\nonumber\label{e.3}
         \frac{\partial\bar{F}(x;\alpha_{i})}{\partial\alpha_i}=-xf(\alpha_i x)<0
     \end{equation}
     and 
     \begin{equation}\nonumber\label{e41}
         \frac{\partial^2\bar{F}(x;\alpha_{i})}{\partial\alpha^2_i}=-x^2\frac{\partial f(\alpha_i x)}{\partial x}>0,
     \end{equation}
     as $f(x)$ is decreasing.
     Therefore, $\bar{F}(x;\alpha_{i})$ is decreasing and convex in $\alpha_{i}$ for all $x$. Finally, using Theorem \ref{th_w_st}, we establish the required result. }
    \end{proof}
In the following, we will apply Theorem \ref{th_w_st} for the proportional hazard rate model. A random variable $U_{i}$ is said to belongs to the proportional hazard rate family if $U_{i}\sim \bar{F}^{\alpha_{i}}(x),$ where $\alpha_{i}>0$ for $i=1,\ldots,n.$
  \begin{theorem}\label{th_m_st_phr}
	Let $\bar{F}(x;\alpha_{i})=\bar{F}^{\alpha_{i}}(x)$ and $\bar{F}(x;\beta_{i})=\bar{F}^{\beta_{i}}(x)$ for $i=1,\ldots,n.$ Under the set-up of Theorem \ref{th_w_st}, suppose $\psi:(0,1)\rightarrow (0,\infty)$
is a differentiable, strictly decreasing convex function.
Then, for $(\boldsymbol{\psi}(\boldsymbol{p}), \boldsymbol\alpha;n), (\boldsymbol{\psi}(\boldsymbol{p}^{*}), \boldsymbol{\beta};n)\in M_{n},$ we have

 $$(\bm{\psi}(\bm{p}),\bm{\alpha};n)\overset{w}{>} (\bm{\psi}(\bm{p}^*),\bm{\beta};n)\Rightarrow T_{{N_1}:{N_1}}\geq_{st}T^{*}_{{N_2}:{N_2}}.$$ 
	\end{theorem}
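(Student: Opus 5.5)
The plan is to reduce the statement to Theorem \ref{th_w_st}, exactly as in the proofs of Theorems \ref{th_m_st_g} and \ref{th_m_st_scale}. Condition (ii) of Theorem \ref{th_w_st} is assumed here verbatim. The sample-size condition $N_1\leq_{st}N_2$ and the membership of both parameter matrices in $M_n$ are inherited from the set-up. So the only thing to verify is condition (i): for every fixed $x$, the map $\alpha\mapsto \bar{F}(x;\alpha)=\bar{F}^{\alpha}(x)$ is decreasing and convex on $(0,\infty)$.

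To check this, fix $x$ with $0<\bar{F}(x)<1$ and write $\bar{F}^{\alpha}(x)=e^{\alpha\ln\bar{F}(x)}$. Differentiating with respect to $\alpha_i$ gives
$$\frac{\partial\bar{F}(x;\alpha_i)}{\partial\alpha_i}=\ln(\bar{F}(x))\,\bar{F}^{\alpha_i}(x)<0,\qquad \frac{\partial^2\bar{F}(x;\alpha_i)}{\partial\alpha_i^2}=[\ln(\bar{F}(x))]^2\,\bar{F}^{\alpha_i}(x)>0.$$
The sign of the first derivative comes from $\ln\bar{F}(x)<0$. This is the same computation as in (\ref{e1})--(\ref{e2}) for the Kw-G family. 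Indeed, the proportional hazard model is formally the case $1-G^{\gamma}=\bar{F}$, so one could even cite that computation directly.

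The boundary values of $x$ need a separate word. If $\bar{F}(x)=1$ or $\bar{F}(x)=0$, then $\bar{F}^{\alpha}(x)$ is constant in $\alpha$, hence trivially (weakly) decreasing and convex. This is all that Theorem \ref{th_bala}, and hence Theorem \ref{th_w_st}, actually uses, since those results rely on monotonicity and convexity only through weak inequalities in Schur-convexity and majorization arguments.

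With condition (i) established, I would invoke Theorem \ref{th_w_st} to conclude $T_{N_1:N_1}\geq_{st}T^*_{N_2:N_2}$. No real obstacle is expected. The only point needing care is the degenerate values of $x$ just discussed, where strict monotonicity fails but weak monotonicity suffices.
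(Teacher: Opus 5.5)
Your reduction is the one the paper uses. The paper computes $\partial_{\alpha_i}\bar F^{\alpha_i}(x)=\log\bar F(x)\,\bar F^{\alpha_i}(x)<0$ and $\partial^2_{\alpha_i}\bar F^{\alpha_i}(x)=[\log\bar F(x)]^2\bar F^{\alpha_i}(x)>0$, then cites Theorem~\ref{th_w_st}. Your derivative check, including the remark on $\bar F(x)\in\{0,1\}$, is fine. The gap is the citation: Theorem~\ref{th_w_st}, and with it the present statement, is false as stated. So neither your reduction nor the paper's can establish it. Take identical portfolios: $\alpha_i=\beta_i=1$, $p_i=p_i^*=p\in(0,1)$, $\psi(p)=-\ln p$. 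The two parameter matrices coincide, lie in $M_n$, and trivially satisfy the row weak majorization. For $x\ge 0$ with $\bar F(x)>0$, put $a=1-p\bar F(x)\in(0,1)$. Then $F_{T_{N_1:N_1}}(x)=E[a^{N_1}]$ and $F_{T^*_{N_2:N_2}}(x)=E[a^{N_2}]$. Since $m\mapsto a^m$ is strictly decreasing, any $N_1\le_{st}N_2$ that are not identically distributed (e.g.\ $N_1\equiv 1$, $N_2\equiv 2$) give $F_{T_{N_1:N_1}}(x)>F_{T^*_{N_2:N_2}}(x)$, the reverse of the claimed order. The faulty step in the proof of Theorem~\ref{th_w_st} is the assertion that $F_{T^*_{m:m}}(x)$ increases in $m$. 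In fact $F_{T^*_{m+1:m+1}}(x)=F_{T^*_{m:m}}(x)\bigl(1-p^*_{m+1}\bar F(x;\beta_{m+1})\bigr)\le F_{T^*_{m:m}}(x)$, so $N_1\le_{st}N_2$ works against the conclusion.

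That proof has a second, independent gap. It uses $T_{m:m}\ge_{st}T^*_{m:m}$ for every $m$ in the support of $N_1,N_2$, but Theorem~\ref{th_bala} gives this only for $m=n$. Membership in $M_n$ passes to the leading $m$-column submatrices, but row weak majorization does not. For example, take $n=2$, $\psi(p)=-\ln p$, $(\bm{\psi}(\bm{p}),\bm{\alpha})=((3,1),(3,1))$ and $(\bm{\psi}(\bm{p}^*),\bm{\beta})=((1,3),(1,3))$. Both matrices lie in $M_2$ and are column permutations of each other, so the hypothesis holds and $T_{2:2}\overset{st}{=}T^*_{2:2}$. Yet $F_{T_1}(x)=1-e^{-3}\bar F^3(x)>1-e^{-1}\bar F(x)=F_{T^*_1}(x)$. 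With $N_1,N_2$ both uniform on $\{1,2\}$, this again gives $\bar F_{T_{N_1:N_1}}(x)<\bar F_{T^*_{N_2:N_2}}(x)$. A correct version needs $N_1\ge_{st}N_2$ and the row weak majorization for each leading $m$-column submatrix. Then
\begin{align*}
F_{T_{N_1:N_1}}(x)&=\sum_m F_{T_{m:m}}(x)P(N_1=m)\le\sum_m F_{T_{m:m}}(x)P(N_2=m)\\
&\le\sum_m F_{T^*_{m:m}}(x)P(N_2=m)=F_{T^*_{N_2:N_2}}(x),
\end{align*}
and your derivative check is exactly what is needed to apply Theorem~\ref{th_bala} at each $m$.
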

    \begin{proof}
        { According to Theorem \ref{th_w_st}, it is required to show that $\bar{F}(x;\alpha_{i})$ is decreasing and convex in $\alpha_{i}$ for all $x$. For the given model $\bar{F}(x;\alpha_{i})=\bar{F}^{\alpha_{i}}(x)$ for $i=1,\ldots,n.$ 
     Differentiating $\bar{F}(x;\alpha_{i})$ partially with respect to $\alpha_{i}$ up to second order for $i=1,\ldots,n,$ we get
     \begin{equation}\nonumber\label{e6}
         \frac{\partial\bar{F}(x;\alpha_{i})}{\partial\alpha_i}=\log\bar F(x)\bar{F}^{\alpha_{i}}(x)<0
     \end{equation}
     and 
     \begin{equation}\nonumber\label{e5}
         \frac{\partial^2\bar{F}(x;\alpha_{i})}{\partial\alpha^2_i}=[\log\bar F(x)]^2\bar{F}^{\alpha_{i}}(x)>0,
     \end{equation}
     which completes the proof of the theorem. }
    \end{proof}

    As we know that $A>>B\Rightarrow A>B\Rightarrow A\overset{row}{>}B\Rightarrow A\overset{w}{>}B.$ Thus using Theorem \ref{th_w_st}, we can write the following  result based on multivariate chain majorization order which is a stronger condition. 
  \begin{theorem}\label{th_w_st_multi}
	Let $\{U_{1},\ldots,U_{n}\}$ and $\{V_{1},\ldots,V_{n}\}$ be two sets of independent random variables with $U_{i}\sim \bar{F}(x;\alpha_{i})$ and $V_{i}\sim \bar{F}(x;\beta_{i}),$ respectively. Also, let $\{J_{1},\ldots,J_{n}\}$ and 
$\{J_{1}^{*},\ldots,J_{n}^{*}\}$ be another two sets of independent Bernoulli random variables,
independently of $U_{i}'$s and $V_{i}'$s  with
$E(J_{i})=p_{i}$ and $E(J_{i}^{*})=p_{i}^{*},$ respectively. Further, let $N_1$ and $N_2$ be two positive integer-valued random variables independently of $T_{i}'$s and ${T^{*}_{i}}'$s satisfying $N_{1}\leq_{st} N_{2}$, respectively. Assume that the following conditions hold:
\begin{itemize}
    \item [(i)] $\bar{F}(x;\alpha_{i})$ is decreasing and convex in $\alpha_{i}$ for all $x$;
        \item [(ii)] $\psi:(0,1)\rightarrow (0,\infty)$
is a differentiable, strictly decreasing convex function.
\end{itemize}
 Then, for $(\boldsymbol{\psi}(\boldsymbol{p}), \boldsymbol\alpha;n), (\boldsymbol{\psi}(\boldsymbol{p}^{*}), \boldsymbol{\beta};n)\in M_{n},$ we have

 $$(\bm{\psi}(\bm{p}^*),\bm{\beta};n)=(\bm{\psi}(\bm{p}),\bm{\alpha};n)T \Rightarrow T_{{N_1}:{N_1}}\geq_{st}T^{*}_{{N_2}:{N_2}}.$$  
	\end{theorem}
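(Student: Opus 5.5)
The plan is to reduce Theorem \ref{th_w_st_multi} to Theorem \ref{th_w_st} through the chain of implications between the matrix majorization orders recalled in Definition \ref{def2.4}. The hypothesis reads $(\bm{\psi}(\bm{p}^*),\bm{\beta};n)=(\bm{\psi}(\bm{p}),\bm{\alpha};n)T$, where $T$ is a $T$-transform matrix, or more generally a doubly stochastic matrix. By definition, this says that $(\bm{\psi}(\bm{p}),\bm{\alpha};n)$ chain majorizes $(\bm{\psi}(\bm{p}^*),\bm{\beta};n)$, i.e.\ $(\bm{\psi}(\bm{p}),\bm{\alpha};n)>(\bm{\psi}(\bm{p}^*),\bm{\beta};n)$. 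If $T$ is a single $T$-transform, it is doubly stochastic, so the same conclusion holds.

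Next, I will use $A>B\Rightarrow A\overset{row}{>}B$. Concretely, each row $\bm{b}^R_i=\bm{a}^R_iP$ with $P$ doubly stochastic, so the Hardy--Littlewood--P\'olya characterization in \cite{Marshall2011} gives $\bm{a}^R_i\overset{m}{\succeq}\bm{b}^R_i$ for $i=1,2$. Then I will use $A\overset{row}{>}B\Rightarrow A\overset{w}{>}B$. This step needs only that equal total sums together with the partial-sum inequalities on the smallest $l$ coordinates give exactly the weak supermajorization inequalities of Definition \ref{definition2.2}. Hence $(\bm{\psi}(\bm{p}),\bm{\alpha};n)\overset{w}{>}(\bm{\psi}(\bm{p}^*),\bm{\beta};n)$.

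With this in hand, all the remaining hypotheses of Theorem \ref{th_w_st} are assumed verbatim: condition (i) on $\bar F(x;\alpha)$, condition (ii) on $\psi$, $N_1\leq_{st}N_2$, and membership of both parameter matrices in $M_n$. Applying Theorem \ref{th_w_st} then yields $T_{N_1:N_1}\geq_{st}T^*_{N_2:N_2}$.

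I do not expect a real obstacle, since the argument is a corollary. The only point to check carefully is that the row-wise majorization is in the direction required by $\overset{w}{\succeq}$ as defined in this paper, namely partial sums of the smallest ordered coordinates of the ``larger'' vector are smaller. Majorization with equal sums implies both weak sub- and supermajorization, so this holds.
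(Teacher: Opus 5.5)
Your majorization steps are correct and match the paper's argument exactly. The paper gives no separate proof, only the chain $A>>B\Rightarrow A>B\Rightarrow A\overset{row}{>}B\Rightarrow A\overset{w}{>}B$ followed by an appeal to Theorem \ref{th_w_st}. Your row-by-row use of Hardy--Littlewood--P\'olya and your check of the direction in Definition \ref{definition2.2} are fine.

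The gap is the last step. Theorem \ref{th_w_st}, and hence the statement you are proving, is false as stated, so it cannot be invoked. Take $n=2$, $\bar F(x;\alpha)=e^{-\alpha x}$ (decreasing and convex in $\alpha$) and $\psi(p)=-\ln p$ (strictly decreasing and convex).
\begin{enumerate}
\item Let $\bm p=\bm p^*=(p,p)$ with $0<p<1$ and $\bm\alpha=\bm\beta=(1,1)$. Both matrices lie in $M_2$, and the hypothesis holds for every doubly stochastic $T$, since the rows are constant. Let $N_1\equiv 1$ and $N_2\equiv 2$, so $N_1\le_{st}N_2$. Then $F_{T_{N_1:N_1}}(x)=1-pe^{-x}>(1-pe^{-x})^2=F_{T^*_{N_2:N_2}}(x)$ for all $x\ge 0$. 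So $T_{N_1:N_1}$ is strictly \emph{smaller} in the usual stochastic order. Any $N_1\le_{st}N_2$ not equal in distribution gives the same failure.
\item The statement fails even with $N_1\equiv N_2\equiv 1$. Let $\bm\psi(\bm p)=\bm\alpha=(3,1)$, and let $T$ be the permutation matrix swapping the two columns (a $T$-transform with $w=0$; any $w<1$ also works). Then $\bm\psi(\bm p^*)=\bm\beta=(1,3)$, and both matrices lie in $M_2$. But $\bar F_{T_{N_1:N_1}}(x)=e^{-3-3x}<e^{-1-x}=\bar F_{T^*_{N_2:N_2}}(x)$ for $x\ge 0$.
\end{enumerate}

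These examples pinpoint two errors in the paper's proof of Theorem \ref{th_w_st}, which your corollary inherits.
\begin{itemize}
\item $F_{T_{m:m}}(x)=\prod_{i=1}^m\bigl(1-p_i\bar F(x;\alpha_i)\bigr)$ is \emph{decreasing} in $m$, because a maximum over more claims is stochastically larger. Hence $N_1\le_{st}N_2$ gives $\sum_m F_{T_{m:m}}(x)P(N_1=m)\ge\sum_m F_{T_{m:m}}(x)P(N_2=m)$, the opposite of the inequality used there.
\item The step ``$T_{m:m}\ge_{st}T^*_{m:m}$'' for $m<n$ needs the majorization hypothesis for the leading $2\times m$ submatrices. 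Neither $\overset{w}{>}$ nor chain majorization of the full $2\times n$ matrices passes to these submatrices.
\end{itemize}

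What your reduction legitimately proves, using Theorem \ref{th_bala} in place of Theorem \ref{th_w_st}, is the fixed-size case $N_1\equiv N_2\equiv n$. A genuine random-$N$ version would need two changes: $N_1\ge_{st}N_2$, and the majorization hypothesis for every leading block of size $m$ in the support of $N_2$. Under those assumptions one gets $F_{T_{N_1:N_1}}(x)\le\sum_m F_{T_{m:m}}(x)P(N_2=m)\le F_{T^*_{N_2:N_2}}(x)$.
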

Similar to Theorem \ref{th_m_st_scale}, using the result presented in Theorem \ref{th_w_st_multi}, we can write the following theorem which is a generalization of Theorem $2$ of \cite{barmalzan2017ordering} for the case of random claim sizes.
      \begin{theorem}\label{th_m_st_scale_multi}
	Let $\bar{F}(x;\alpha_{i})=\bar{F}(x\alpha_{i})$ and $\bar{F}(x;\beta_{i})=\bar{F}(x\beta_{i})$ for $i=1,\ldots,n.$ Under the set-up of Theorem \ref{th_w_st_multi}, suppose the following two conditions hold:
    \begin{itemize}
        \item [(i)] $\psi:(0,1)\rightarrow (0,\infty)$
is a differentiable, strictly decreasing convex function;
\item[(ii)] $f(x)$ is decreasing in $x.$
    \end{itemize}
Then, for $(\boldsymbol{\psi}(\boldsymbol{p}), \boldsymbol\alpha;n), (\boldsymbol{\psi}(\boldsymbol{p}^{*}), \boldsymbol{\beta};n)\in M_{n},$ we have
 $$(\bm{\psi}(\bm{p}^*),\bm{\beta};n)=(\bm{\psi}(\bm{p}),\bm{\alpha};n)T \Rightarrow T_{{N_1}:{N_1}}\geq_{st}T^{*}_{{N_2}:{N_2}}.$$
	\end{theorem}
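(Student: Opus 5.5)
The plan is to reduce Theorem \ref{th_m_st_scale_multi} to Theorem \ref{th_w_st_multi}, just as Theorem \ref{th_m_st_scale} was reduced to Theorem \ref{th_w_st}. The only hypothesis of Theorem \ref{th_w_st_multi} that is not assumed verbatim is condition (i): $\bar F(x;\alpha_i)=\bar F(x\alpha_i)$ must be decreasing and convex in $\alpha_i$ for every $x\ge 0$. The remaining hypotheses carry over unchanged. These are $\psi$ strictly decreasing and convex, $N_1\le_{st}N_2$, both parameter matrices lying in $M_n$, and $(\bm{\psi}(\bm{p}^*),\bm{\beta};n)=(\bm{\psi}(\bm{p}),\bm{\alpha};n)T$ with $T$ doubly stochastic. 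So once (i) is checked, the conclusion $T_{N_1:N_1}\ge_{st}T^*_{N_2:N_2}$ follows directly.

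To verify (i), fix $x\ge0$ and differentiate in $\alpha$. The first derivative is $\partial_\alpha \bar F(x\alpha)=-xf(\alpha x)\le 0$, which gives monotonicity. For convexity I would avoid assuming $f$ is differentiable and argue as follows. The map $\alpha\mapsto -xf(\alpha x)$ is nondecreasing, because $f$ is decreasing by (ii) and $x\ge0$. A differentiable function with nondecreasing derivative is convex. If $f$ happens to be differentiable, this is simply the computation $\partial^2_\alpha\bar F(x\alpha)=-x^2f'(\alpha x)\ge0$ used in Theorem \ref{th_m_st_scale}.

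With (i) in hand, the key observation is that the doubly stochastic relation gives chain majorization, $(\bm{\psi}(\bm{p}),\bm{\alpha};n)>(\bm{\psi}(\bm{p}^*),\bm{\beta};n)$. By the chain of implications $A>B\Rightarrow A\overset{row}{>}B\Rightarrow A\overset{w}{>}B$ recorded after Definition \ref{def2.4}, this yields row weak majorization. Theorem \ref{th_w_st_multi}, or equivalently Theorem \ref{th_w_st}, then applies. Unwinding the argument, Theorem \ref{th_bala} gives $T_{m:m}\ge_{st}T^*_{m:m}$ for each fixed $m$. Mixing over $N_1\le_{st}N_2$, using that $F_{T^*_{m:m}}(x)$ is monotone in $m$, gives the random-size statement.

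The main point to watch is the step from fixed to random sample size inside the cited theorem. The majorization hypothesis concerns $n$-vectors, whereas $T_{m:m}$ involves only the first $m$ components. One has to take for granted that the fixed-$m$ comparison holds for each $m$ used in the mixture, as Theorem \ref{th_w_st} implicitly does. Since I am allowed to assume the earlier results, I would invoke Theorem \ref{th_w_st_multi} as stated and not re-derive this.
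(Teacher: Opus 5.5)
Your reduction is exactly what the paper intends. The paper gives no separate proof, only ``similar to Theorem \ref{th_m_st_scale}, using Theorem \ref{th_w_st_multi}'': check that $\alpha\mapsto\bar F(x\alpha)$ is decreasing and convex, then invoke the general theorem. Your check of that condition is fine; most cleanly, $\bar F(y)=1-\int_0^y f$ is convex because $f$ decreases. The trouble is that the general theorem, and hence this statement, is false, and it breaks at the mixing step you ``unwind''.

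Since $F_{T^*_{m+1:m+1}}(x)=F_{T^*_{m:m}}(x)\bigl(1-p^*_{m+1}\bar F(x\beta_{m+1})\bigr)\le F_{T^*_{m:m}}(x)$, the distribution function of a maximum is \emph{decreasing} in $m$. The proof of Theorem \ref{th_w_st} claims the opposite. So $N_1\le_{st}N_2$ yields $\sum_m F_{T_{m:m}}(x)P(N_1=m)\ge\sum_m F_{T_{m:m}}(x)P(N_2=m)$, which is the wrong direction: enlarging the random sample pushes $T^*_{N_2:N_2}$ up, against the conclusion. Concretely, take $n=2$, $\bar F(y)=e^{-y}$, $\psi(p)=-\ln p$, $\alpha_i=\beta_i=1$, $p_i=p_i^*=\tfrac12$, $N_1\equiv1$, $N_2\equiv2$. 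Both matrices lie in $M_2$, and since their rows are constant, any doubly stochastic $T$ works. Then
\[
\bar F_{T^*_{N_2:N_2}}(x)-\bar F_{T_{N_1:N_1}}(x)=\bigl(e^{-x}-\tfrac14e^{-2x}\bigr)-\tfrac12e^{-x}=\tfrac14e^{-x}\bigl(2-e^{-x}\bigr)>0,\qquad x\ge0,
\]
so $T_{N_1:N_1}<_{st}T^*_{N_2:N_2}$. This is not an artifact of degeneracy. With equal parameters, $F_{T_{N:N}}(x)=E[G(x)^N]$ with $G(x)\in(0,1)$, and this strictly decreases under any $\le_{st}$-increase of $N$ that changes its law.

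The sub-vector issue you noticed and chose to take for granted is a second, independent failure. The hypothesis constrains the full $2\times n$ matrices, but the mixture needs $T_{m:m}\ge_{st}T^*_{m:m}$ for every $m$ charged by $N_2$, and leading blocks do not inherit the majorization. Take $\bm p=\bm p^*=(\tfrac12,\tfrac12)$, $\bm\alpha=(3,1)$, $\bm\beta=(2,2)$. The averaging $T$-transform gives $(\bm\psi(\bm p^*),\bm\beta;2)=(\bm\psi(\bm p),\bm\alpha;2)T$ with both matrices in $M_2$. Yet for $N_1=N_2\equiv1$ one gets $\bar F_{T_{1:1}}(x)=\tfrac12e^{-3x}<\tfrac12e^{-2x}=\bar F_{T^*_{1:1}}(x)$ for $x>0$.

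A true random-size version needs two changes:
\begin{itemize}
\item $N_2\le_{st}N_1$ instead, so that both effects point the same way;
\item the row weak majorization hypothesis imposed on each leading $2\times m$ block with $m$ in the support of $N_2$ (membership in $M_m$ is inherited automatically).
\end{itemize}
Then $F_{T_{N_1:N_1}}(x)\le\sum_m F_{T_{m:m}}(x)P(N_2=m)\le F_{T^*_{N_2:N_2}}(x)$, using Theorem \ref{th_bala} blockwise, and your convexity check completes the argument.
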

Still now, our presented comparison results are based on usual stochastic order under the condition that $N_{1}\leq_{st} N_{2}.$ Therefore, it will be a common question whether the usual stochastic order can be extended to other stronger orderings like hazard rate or reversed hazard rate orders. The following counterexample shows that we can not generalized Theorem \ref{th_w_st} from usual stochastic order to reversed hazard rate order under the same conditions.
 \begin{counterexample}\label{ex3.3}
    Under the same set-up as of Example \ref{ex3.1}, Figure $2(a)$ presents that the ratio between  ${F}_{T_{N_{1}:{N_{1}}}}(x)$ and ${F}_{T^{*}_{{N_{2}}:{N_{2}}}}(x)$ is not an increasing function of $x$ for all $x\geq 0,$ which states that Theorem \ref{th_w_st} can not be extended to reversed hazard rate order. 
    \begin{figure}[ht]
		\begin{center}
  			\subfigure[]{\label{c3}\includegraphics[height=2.0in]{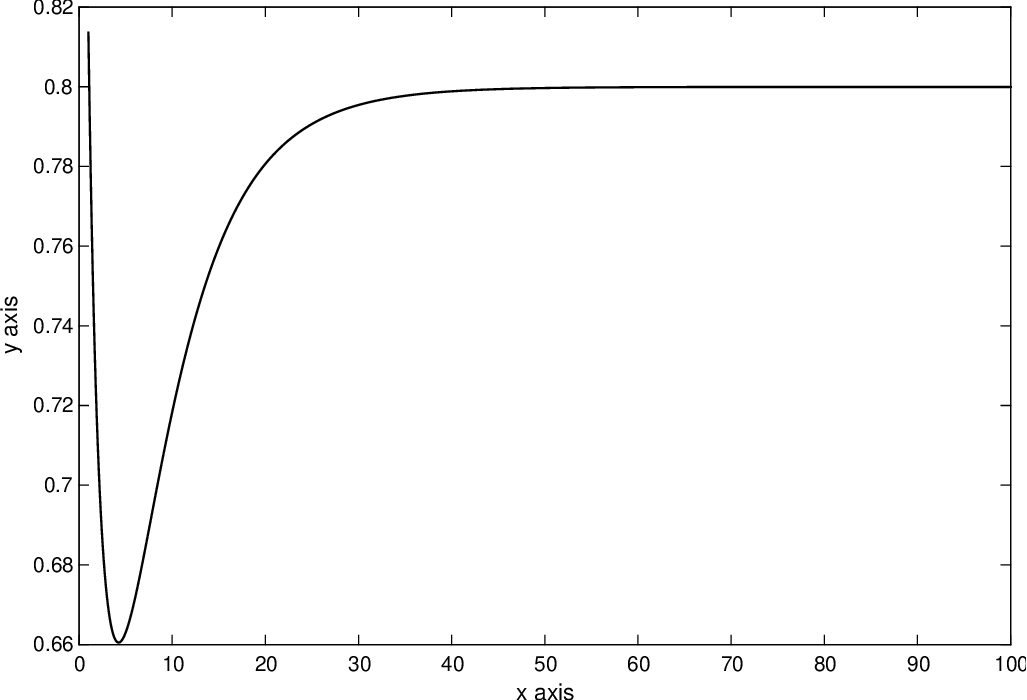}}
			\subfigure[]{\label{c4}\includegraphics[height=2.0in]{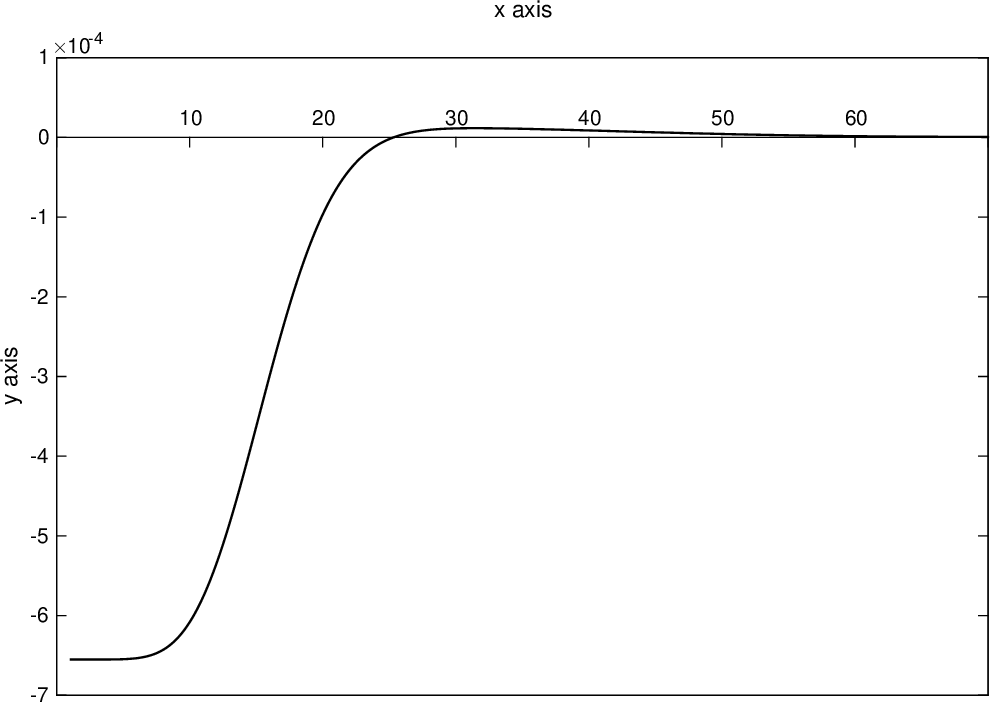}}
			\caption{(a) Plot of ${F}_{T_{{N_1}:{N_{1}}}}(x)/{F}_{T^{*}_{{N_2}:{N_{2}}}}(x)$ as in Counterexample \ref{ex3.3}.  (b) Plot of $\bar{F}_{X_{1:{N_{1}}}}(x)-\bar{F}_{Y_{1:{N_{2}}}}(x)$ as in Counterexample \ref{cex3.2}. 
            }
		\end{center}
	\end{figure}
\end{counterexample}
 
 Next, we establish some certain conditions such that the reversed hazard rate order holds between two largest claim amounts  $T_{{N}:{N}}$ and $T^{*}_{{N}:{N}}$ having random claim size $N$ when the matrix of parameters $(\boldsymbol{\psi}(\boldsymbol{p}), \boldsymbol{\alpha} ;n)$ changes to $(\boldsymbol{\psi}(\boldsymbol{p^*}), \boldsymbol{\beta;}n)$ in the sense of the row weak majorization order in the space $M_{n}$. To complete the theorem, we need to prove the next two results which are related to weakly super majorization order.
\begin{theorem}\label{th_m_rh_alpha}
	Let $\{U_{1},\ldots,U_{n}\}$ and $\{V_{1},\ldots,V_{n}\}$ be two sets of independent random variables with $U_{i}\sim \bar{F}(x;\alpha_{i})$ and $V_{i}\sim \bar{F}(x;\beta_{i}),$ respectively. Also, let $\{J_{1},\ldots,J_{n}\}$ and 
$\{J_{1}^{*},\ldots,J_{n}^{*}\}$ be another two sets of independent Bernoulli random variables,
independently of $U_{i}'$s and $V_{i}'$s  with
$E(J_{i})=p_{i}$ and $E(J_{i}^{*})=p_{i},$ respectively. Assume that the following conditions hold:
\begin{itemize}
    \item [(i)] $\psi:(0,1)\rightarrow (0,\infty)$
is a differentiable and decreasing
function;
 \item [(ii)] $\bar{F}(x;\alpha_{i})$ is decreasing and log-convex in $\alpha_{i}$ for all $x$;
 \item [(iii)] $r(x;\alpha_{i})$ is decreasing and convex in $\alpha_{i}$ for all $x.$
\end{itemize}
Then, for $(\boldsymbol{\psi}(\boldsymbol{p}), \boldsymbol\alpha;n), (\boldsymbol{\psi}(\boldsymbol{p}), \boldsymbol{\beta};n)\in M_{n},$ we have$$\bm{\alpha}\overset{w}{\succeq} \bm{\beta} \Rightarrow T_{{n}:{n}}\geq_{rh}T^{*}_{{n}:{n}}.$$
	\end{theorem}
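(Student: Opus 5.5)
Write $v_i=\psi(p_i)$, as in the preceding theorems. Since both portfolios share the same $\bm p$, the idea is to write the reversed hazard rate of $T_{n:n}$ as an explicit function $\phi(\bm\alpha)$, for fixed $x$ and $\bm p$. It then suffices to show that $\phi$ is decreasing and Schur-convex in $\bm\alpha$ on the region where $(\bm\psi(\bm p),\bm\alpha;n)\in M_n$. Theorem $A.8$ of \cite{Marshall2011}, used exactly as in Theorem \ref{th_m_st_alpha_small}, then gives $\bm\alpha\overset{w}{\succeq}\bm\beta\Rightarrow \tilde r_{T_{n:n}}(x)\ge \tilde r_{T^*_{n:n}}(x)$ for every $x$, which is the claimed order $T_{n:n}\ge_{rh}T^*_{n:n}$.

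\textbf{Computing $\phi$.} For $x\ge 0$ we have $P(T_i\le x)=1-p_i\bar F(x;\alpha_i)$, so $F_{T_{n:n}}(x)=\prod_{i=1}^n\bigl(1-p_i\bar F(x;\alpha_i)\bigr)$. For $x>0$ this gives
$$\tilde r_{T_{n:n}}(x)=\phi(\bm\alpha)=\sum_{i=1}^n g(\alpha_i,p_i),\qquad g(\alpha,p)=\frac{p\, r(x;\alpha)\bar F(x;\alpha)}{1-p\bar F(x;\alpha)},$$
where $p_i=\psi^{-1}(v_i)$. The atom at $0$ is common to both maxima and does not affect the comparison for $x>0$.

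\textbf{Monotonicity.} Each summand $g(\alpha,p)$ is decreasing in $\alpha$, because it is a product of two nonnegative decreasing factors. The factor $r(x;\alpha)$ is decreasing by (iii). The factor $p\bar F/(1-p\bar F)$ is increasing in $\bar F$, and $\bar F(x;\alpha)$ is decreasing in $\alpha$ by (ii). Hence $\phi$ is decreasing in each $\alpha_i$.

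\textbf{Schur-convexity.} Since $\phi$ is not symmetric in $\bm\alpha$ (the $p_i$ are attached to the coordinates), I will use the version of Lemma \ref{lem2.1} for functions on the ordered region $M_n$, as in the proof of Theorem \ref{th_m_st_alpha_small}. Take a pair with $\alpha_i\le\alpha_j$ and $v_i\le v_j$. Because $\psi$ is decreasing, this means $p_i\ge p_j$. It then suffices to show $\partial\phi/\partial\alpha_i\le\partial\phi/\partial\alpha_j$, i.e. $|\partial_\alpha g(\alpha_i,p_i)|\ge|\partial_\alpha g(\alpha_j,p_j)|$. Differentiating,
$$\frac{\partial g}{\partial\alpha}=\frac{p}{1-p\bar F}\,\frac{\partial r}{\partial\alpha}\,\bar F\;+\;r\,\frac{p\bar F}{(1-p\bar F)^2}\,\frac{\partial\bar F/\partial\alpha}{\bar F},$$
and both terms are nonpositive. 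I will compare the two terms separately between index $i$ and index $j$.

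\emph{First term.} Its magnitude is $|r_\alpha|\cdot p\bar F/(1-p\bar F)$. Convexity of the decreasing function $r$ makes $|r_\alpha|$ decreasing in $\alpha$. The factor $p\bar F/(1-p\bar F)$ is increasing in $p\bar F$, and $p_i\bar F(x;\alpha_i)\ge p_j\bar F(x;\alpha_j)$. So the magnitude at $i$ is at least that at $j$.

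\emph{Second term.} Its magnitude is the product of three factors, each at least as large at $i$ as at $j$:
\begin{itemize}
\item $r(x;\alpha_i)\ge r(x;\alpha_j)$, since $r$ is decreasing in $\alpha$;
\item $p\bar F/(1-p\bar F)^2$ is larger at $i$, since it is increasing in $p\bar F$;
\item $|\partial_\alpha\bar F|/\bar F$ is larger at $i$, since it is decreasing in $\alpha$ by log-convexity of $\bar F$.
\end{itemize}
Summing the two comparisons yields the required inequality.

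The main obstacle is the Schur-convexity step. The difficulty is that $\phi$ is non-symmetric, so the ordered-region (``$M_n$'') version of the Schur criterion must be justified. I will also have to carefully track the sign bookkeeping that converts $v_i\le v_j$ into $p_i\ge p_j$ through the decreasing $\psi$, and confirm that each factor above moves in the favourable direction.
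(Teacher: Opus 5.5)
Your proposal is correct and follows the paper's own route. You express $\tilde r_{T_{n:n}}(x)$ as $\sum_i g(\alpha_i,p_i)$, show each summand is decreasing in $\alpha_i$, and verify $\partial\tilde r/\partial\alpha_i\le\partial\tilde r/\partial\alpha_j$ for $\alpha_i\le\alpha_j$, $v_i\le v_j$ by comparing the $\partial\bar F/\partial\alpha$-term and the $\partial r/\partial\alpha$-term separately, then conclude via Theorem A.8 of \cite{Marshall2011}; the paper likewise leaves the ordered-region justification you flag unstated.

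Your computation is in fact cleaner than the paper's. The $\partial r/\partial\alpha$-term has denominator $1-p\bar F$ rather than its square, and that term is more negative at $i$ than at $j$; the paper's displayed derivative and inequality \eqref{rh3} get these wrong.
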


\begin{proof}
    To complete the result, it is enough to show that $\tilde{r}_{T_{n:n}}(x),$ where
    $\tilde{r}_{T_{n:n}}(x)=\frac{f_{T_{n:n}}(x)}{F_{T_{n:n}}(x)}$ is decreasing and Schur-convex in $\bm{\alpha}$ by using Theorem $A.8$ of \cite{Marshall2011}. 
    Here, $F_{T_{n:n}}(x)$ represents the distribution function of $T_{n:n}$ corresponding to the density function $f_{T_{n:n}}(x)$.
     Then, for $x\geq 0,$
    $$f_{T_{n:n}}(x)={\prod_{i=1}^{n}(1- \psi^{-1}(v_{i})\bar{F}(x;\alpha_{i}))}\sum_{i=1}^{n}\frac{\psi^{-1}(v_{i})\bar{F}(x;\alpha_{i})r(x;\alpha_{i})}{1-\psi^{-1}(v_{i})\bar{F}(x;\alpha_{i})}.$$
Therefore, the reversed hazard rate function of $T_{n:n}$ is given by
\begin{equation}\label{rh}
    \tilde{r}_{T_{n:n}}(x)=\sum_{i=1}^{n}\frac{\psi^{-1}(v_{i})\bar{F}(x;\alpha_{i})r(x;\alpha_{i})}{1-\psi^{-1}(v_{i})\bar{F}(x;\alpha_{i})}.
\end{equation}
Taking partial derivative of $\tilde{r}_{T_{n:n}}(x)$ with respect to $\alpha_{i}$, we get
$$\frac{\partial\tilde{r}_{T_{n:n}}(x)}{\partial \alpha_{i}}=\frac{\frac{d \bar{F}(x;\alpha_{i})}{d \alpha_{i}}}{\bar{F}(x;\alpha_{i})}\frac{r(x;\alpha_{i})\bar{F}(x;\alpha_{i})\psi^{-1}(v_i)}
{(1-\psi^{-1}(v_i)\bar{F}(x;\alpha_{i}))^2}+\frac{dr(x;\alpha_{i})}{d\alpha_{i}}\frac{\bar{F}(x;\alpha_{i})\psi^{-1}(v_i)}
{(1-\psi^{-1}(v_i)\bar{F}(x;\alpha_{i}))^2}\leq 0,$$ 
due to the decreasing properties of $\bar{F}(x;\alpha_{i})$ and $r(x;\alpha_{i})$ in $\alpha_{i}.$
Consider the case $\alpha_{i}\leq \alpha_{j}$ and $v_{i}\leq v_{j}$ for any pair $i,$ $j$ such that $1\leq i< j\leq n.$
According to the conditions that $\bar{F}(x;\alpha_{i})$ is decreasing in $\alpha_{i}$ and $\psi^{-1}(v_{i})$ is decreasing in $v_{i}$, we have
\begin{equation}\label{rh1}
    \frac{r(x;\alpha_{i})\bar{F}(x;\alpha_{i})\psi^{-1}(v_i)}
{(1-\psi^{-1}(v_i)\bar{F}(x;\alpha_{i}))^2}\geq \frac{r(x;\alpha_{j})\bar{F}(x;\alpha_{j})\psi^{-1}(v_j)}
{(1-\psi^{-1}(v_j)\bar{F}(x;\alpha_{j}))^2}.
\end{equation}
Now, the decreasing and log-convexity properties of $\bar{F}(x;\alpha_{i})$ in $\alpha_{i}$ provides
\begin{equation}\label{rh2}
    \frac{\frac{d \bar{F}(x;\alpha_{i})}{d \alpha_{i}}}{\bar{F}(x;\alpha_{i})}\leq \frac{\frac{d \bar{F}(x;\alpha_{j})}{d \alpha_{j}}}{\bar{F}(x;\alpha_{j})}\leq 0.
\end{equation}
As $r(x,\alpha_{i})$ is decreasing and convex in $\alpha_{i},$ thus we have
\begin{equation}\label{rh3}
  \frac{dr(x;\alpha_{j})}{d\alpha_{j}}\frac{\bar{F}(x;\alpha_{j})\psi^{-1}(v_j)}
{(1-\psi^{-1}(v_j)\bar{F}(x;\alpha_{j}))^2}\leq  \frac{dr(x;\alpha_{i})}{d\alpha_{i}}\frac{\bar{F}(x;\alpha_{i})\psi^{-1}(v_i)}
{(1-\psi^{-1}(v_i)\bar{F}(x;\alpha_{i}))^2}\leq 0. 
\end{equation}
Finally, combining equations \eqref{rh1}-\eqref{rh3}, we have the following inequality 
$$(\alpha_{i}-\alpha_{j})\left(\frac{\partial\tilde{r}_{T_{n:n}}(x)}{\partial \alpha_{i}}-\frac{\partial\tilde{r}_{T_{n:n}}(x)}{\partial \alpha_{j}}\right)\geq0,$$
which implies that $\tilde{r}_{T_{n:n}}(x)$ is Schur-convex in $\bm{\alpha}$ using Lemmas \ref{lem2.1}. Hence, the proof.
\end{proof}

\begin{theorem}\label{th_m_rh_psi}
	Let $\{U_{1},\ldots,U_{n}\}$ and $\{V_{1},\ldots,V_{n}\}$ be two sets of independent random variables with $U_{i}\sim \bar{F}(x;\alpha_{i})$ and $V_{i}\sim \bar{F}(x;\alpha_{i}),$ respectively. Also, let $\{J_{1},\ldots,J_{n}\}$ and 
$\{J_{1}^{*},\ldots,J_{n}^{*}\}$ be another two sets of independent Bernoulli random variables,
independently of $U_{i}'$s and $V_{i}'$s  with
$E(J_{i})=p_{i}$ and $E(J_{i}^{*})=p_{i}^{*},$ respectively. Assume that the following conditions hold:
\begin{itemize}
    \item [(i)] $\psi:(0,1)\rightarrow (0,\infty)$
is a differentiable decreasing and log-convex
function; 
 \item [(ii)] $\bar{F}(x;\alpha_{i})$ and $r(x;\alpha_{i}) $ are decreasing in $\alpha_{i}$ for all $x.$
\end{itemize}
Then, for $(\boldsymbol{\psi}(\boldsymbol{p}), \boldsymbol\alpha;n), (\boldsymbol{\psi}(\boldsymbol{p}^{*}), \boldsymbol{\alpha};n)\in M_{n},$ we have $$\bm{\psi} (\bm{p})\overset{w}{\succeq} \bm{\psi} (\bm{p}^{*})\Rightarrow T_{{n}:{n}}\geq_{rh}T^{*}_{{n}:{n}}.$$ 
	\end{theorem}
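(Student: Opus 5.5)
Following the strategy of Theorem \ref{th_m_rh_alpha}, but with $\bm{\alpha}$ fixed and $\bm v=\bm\psi(\bm p)$ varying, I would show that $\tilde r_{T_{n:n}}(x)$ in \eqref{rh} is decreasing and Schur-convex in $\bm v$ on the region where $(\bm v,\bm\alpha;n)\in M_n$. Theorem $A.8$ of \cite{Marshall2011} then turns $\bm v\overset{w}{\succeq}\bm u$ into $\tilde r_{T_{n:n}}(x)\ge\tilde r_{T^*_{n:n}}(x)$ for every $x$, which is exactly $T_{n:n}\ge_{rh}T^*_{n:n}$.

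\textbf{Derivative.} Write $q_i=\psi^{-1}(v_i)$, $\bar F_i=\bar F(x;\alpha_i)$ and $r_i=r(x;\alpha_i)$. Then the $i$th summand of \eqref{rh} is $g(q_i)=q_i\bar F_i r_i/(1-q_i\bar F_i)$, and
$$\frac{\partial\tilde r_{T_{n:n}}(x)}{\partial v_i}=\frac{d\psi^{-1}(v_i)}{dv_i}\,\frac{\bar F_i r_i}{(1-q_i\bar F_i)^2}.$$
Since $\psi$ is decreasing, so is $\psi^{-1}$, so the first factor is nonpositive. The second factor is positive. Hence $\tilde r_{T_{n:n}}(x)$ is decreasing in each $v_i$.

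\textbf{Schur-convexity.} Take $v_i\le v_j$. The $M_n$ condition then gives $\alpha_i\le\alpha_j$. By Lemma \ref{lem2.1} it suffices to show $\partial_{v_i}\tilde r\le\partial_{v_j}\tilde r$. I would factor the derivative as
$$\frac{\partial\tilde r_{T_{n:n}}(x)}{\partial v_i}=\frac{(\psi^{-1})'(v_i)}{\psi^{-1}(v_i)}\cdot\frac{q_i\bar F_i r_i}{(1-q_i\bar F_i)^2}.$$

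\emph{First factor.} It is the derivative of $\log\psi^{-1}$, which is negative. I then need $\psi^{-1}$ to be log-convex, so that this factor is increasing in $v$ and hence the $i$th factor is the more negative one. I would derive this from log-convexity and monotonicity of $\psi$, as was done implicitly in Theorem \ref{th_m_st_psi_small}.

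\emph{Second factor.} It is positive. Since $v_i\le v_j$ and $\psi^{-1}$ is decreasing, $q_i\ge q_j$. Since $\alpha_i\le\alpha_j$ and both $\bar F(x;\cdot)$ and $r(x;\cdot)$ are decreasing, $\bar F_i\ge\bar F_j$ and $r_i\ge r_j$. The map $t\mapsto t/(1-t)^2$ is increasing on $(0,1)$, so $q_i\bar F_i/(1-q_i\bar F_i)^2\ge q_j\bar F_j/(1-q_j\bar F_j)^2$. Multiplying by $r_i\ge r_j$ shows the second factor is larger for index $i$, exactly as in \eqref{rh1}.

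\emph{Combining.} A more negative first factor times a larger positive second factor gives $\partial_{v_i}\tilde r\le\partial_{v_j}\tilde r\le0$. Therefore $(v_i-v_j)(\partial_{v_i}\tilde r-\partial_{v_j}\tilde r)\ge0$, which is Schur-convexity.

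\textbf{Main obstacle.} The delicate point is the first factor. It needs $(\log\psi^{-1})'$ to be increasing, which is a property of $\psi^{-1}$, while the hypothesis is stated for $\psi$. Log-convexity of a decreasing $\psi$ does not automatically transfer to its inverse. I would handle this by reading the hypothesis as applying to $\psi^{-1}$, the same convention the paper uses in Theorem \ref{th_m_st_psi_small}, and I would flag this in the proof. A second, smaller issue is that Schur-convexity is only available on the ordered region where $(\bm v,\bm\alpha)\in M_n$, not on all of $J^n$. So I would apply Lemma \ref{lem2.1} in its restricted form on the cone of similarly ordered vectors, which is the version Theorem $A.8$ actually uses.
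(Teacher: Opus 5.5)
Your plan follows the paper's proof step for step. The paper also proves that $\tilde r_{T_{n:n}}(x)$ is decreasing and Schur-convex in $\bm v$. It factors $\partial\tilde r_{T_{n:n}}(x)/\partial v_i$ as $(\psi^{-1})'(v_i)/\psi^{-1}(v_i)$ times a positive term that is larger at index $i$ (its \eqref{rh4}--\eqref{rh5}). It too simply invokes ``decreasing and log-convexity properties of $\psi^{-1}$''. Your derivative is the correct one: the paper's display has $\bar F^2(x;\alpha_i)$ where $\psi^{-1}(v_i)\bar F(x;\alpha_i)$ should be, though this does not affect the ordering argument.

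\textbf{The gap.} The step you flagged is a genuine gap, in your write-up and in the paper's: hypothesis (i) does not give log-convexity of $\psi^{-1}$. For example, $\psi(p)=e^{-p}$ is decreasing and log-convex. Yet $\psi^{-1}(v)=-\ln v$ on $(e^{-1},1)$ has $(\log\psi^{-1})'(v)=1/(v\ln v)$, which is decreasing there, so your first-factor inequality goes the wrong way. Re-reading (i) as a hypothesis on $\psi^{-1}$ therefore proves a different theorem.

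\textbf{The fix.} Do not divide by $q_i$. Use the form you already computed,
\[
\frac{\partial\tilde r_{T_{n:n}}(x)}{\partial v_i}=(\psi^{-1})'(v_i)\cdot\frac{\bar F_i r_i}{(1-q_i\bar F_i)^2}.
\]
Now you only need $(\psi^{-1})'$ to be nonpositive and increasing, i.e.\ $\psi^{-1}$ decreasing and convex. This does follow from (i):
\begin{itemize}
\item log-convexity implies convexity;
\item strict decrease plus convexity forces $\psi'<0$ on $(0,1)$;
\item $(\psi^{-1})'(v)=1/\psi'(\psi^{-1}(v))$ increases with $v$, because $\psi^{-1}(v)$ decreases and $\psi'$ is increasing.
\end{itemize}
The second factor is still larger at index $i$. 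Indeed, $c/(1-qc)^2$ is increasing in both $c$ and $q$, and $\bar F_i\ge\bar F_j$, $q_i\ge q_j$, $r_i\ge r_j$. Your product argument then gives $\partial_{v_i}\tilde r\le\partial_{v_j}\tilde r\le 0$ under the stated hypotheses. In fact, strict decrease and convexity of $\psi$ already suffice.

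Your remark about the Schur criterion is also correct: it must be applied on the ordered region rather than through Lemma~\ref{lem2.1}, which presumes symmetry. The paper glosses over this point.
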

    \begin{proof}
In order to complete the theorem, it remains to show whether $\tilde{r}_{T_{n:n}}(x)$ is decreasing and Schur-convex in $\bm{v},$ where $\bm{v}=(\psi(p_{1}),\ldots,\psi(p_{n})).$
Taking partial derivative of $\tilde{r}_{T_{n:n}}(x)$ with respect to $v_{i}$, we have 
$$\frac{\partial\tilde{r}_{T_{n:n}}(x)}{\partial v_{i}}=\frac{\frac{d \psi^{-1}(v_i)}{d v_{i}}}{\psi^{-1}(v_i)}\frac{\bar{F}^2(x;\alpha_{i})r(x;\alpha_{i})}{(1-\psi^{-1}(v_i)\bar{F}(x;\alpha_{i}))^2}\leq 0$$ by decreasing property of $\psi^{-1}.$
Consider the case $\alpha_{i}\leq \alpha_{j}$ and $v_{i}\leq v_{j}$ for any pair $i,$ $j$ such that $1\leq i< j\leq n.$
Now, decreasing properties of $\psi^{-1}(v_i)$ in $v_{i},$ $r(x;\alpha_{i})$ in $\alpha_{i}$ and $\bar{F}(x;\alpha_{i})$ in $\alpha_{i}$ provide
\begin{equation}\label{rh4}
    \frac{\bar{F}^2(x;\alpha_{i})r(x;\alpha_{i})}{(1-\psi^{-1}(v_i)\bar{F}(x;\alpha_{i}))^2}\geq \frac{\bar{F}^2(x;\alpha_{j})r(x;\alpha_{j})}{(1-\psi^{-1}(v_j)\bar{F}(x;\alpha_{j}))^2}.
\end{equation}
Using decreasing and log-convexity properties of $\psi^{-1},$ we get 
\begin{equation}\label{rh5}
  \frac{\frac{d \psi^{-1}(v_i)}{d v_{i}}}{\psi^{-1}(v_i)}\leq \frac{\frac{d \psi^{-1}(v_j)}{d v_{j}}}{\psi^{-1}(v_j)}\leq 0.
\end{equation}
Finally, combining equations \eqref{rh4} and \eqref{rh5}, we have the following inequality
\begin{eqnarray}
    &(v_{i}-v_{j})\left(\frac{\partial\tilde{r}_{T_{n:n}}(x)}{\partial v_{i}}-\frac{\partial\tilde{r}_{T_{n:n}}(x)}{\partial v_{j}}\right)
    \geq 0,
\end{eqnarray}
which concludes that $\tilde{r}_{T_{n:n}}(x)$ is decreasing and Schur-convex in $\bm{v}$ according to Theorem $A.8$ of \cite{Marshall2011}. Hence, the proof. 
\end{proof}
\begin{theorem}\label{th_m_rh_lrg}
	Let $\{U_{1},\ldots,U_{n}\}$ and $\{V_{1},\ldots,V_{n}\}$ be two sets of independent random variables with $U_{i}\sim \bar{F}(x;\alpha_{i})$ and $V_{i}\sim \bar{F}(x;\beta_{i}),$ respectively  with $\alpha_{n}\geq \beta _{n}$ and $p_{n}\leq p^{*}_{n}.$ Also, let $\{J_{1},\ldots,J_{n}\}$ and 
$\{J_{1}^{*},\ldots,J_{n}^{*}\}$ be another two sets of independent Bernoulli random variables,
independently of $U_{i}'$s and $V_{i}'$s  with
$E(J_{i})=p_{i}$ and $E(J_{i}^{*})=p_{i}^{*},$ respectively. Further, let $N_1$ and $N_2$ be two positive integer-valued random variables independently of $T_{i}'$s and ${T^{*}_{i}}'$s satisfying $N_{1}\overset{st}{=} N_{2}\overset{st}{=} N$, respectively. Assume that the following conditions hold:
\begin{itemize}
    \item [(i)] $\psi:(0,1)\rightarrow (0,\infty)$
is a differentiable, strictly decreasing and log-convex
function.
\item[(ii)] $\bar{F}(x;\alpha_{i})$ is decreasing and log-convex in $\alpha_{i}$ for all $x;$
\item[(iii)] $r(x;\alpha_{i})$ is decreasing and convex in $\alpha_{i}$ for all $x$.
    \end{itemize}
Then, for $(\boldsymbol{\psi}(\boldsymbol{p}), \boldsymbol\alpha;n), (\boldsymbol{\psi}(\boldsymbol{p}^{*}), \boldsymbol{\beta};n)\in M_{n},$ we have $$(\bm{\psi}(\bm{p}),\bm{\alpha};n)\overset{w}{>} (\bm{\psi}(\bm{p}^*),\bm{\beta};n)\Rightarrow T_{{N}:{N}}\geq_{rh}T^{*}_{{N}:{N}}.$$ 
	\end{theorem}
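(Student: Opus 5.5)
The plan has two stages: first establish the fixed-$n$ comparison $T_{n:n}\geq_{rh}T^{*}_{n:n}$ under the row weak majorization, then transfer it to the random sample size $N$ through a preservation result for random maxima from \cite{chowdhury2024}.

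\textbf{Stage 1 (fixed $n$).} I will use the same intermediate-sample device as in the proof of Theorem \ref{th_w_st_small}. Let $W_{n:n}$, $Z_{n:n}$ and $M_{n:n}$ denote the largest claim amounts from three samples:
\begin{itemize}
\item[(a)] $J_{p_{1:n}}U_{\alpha_{1:n}},\ldots,J_{p_{n:n}}U_{\alpha_{n:n}}$;
\item[(b)] $J_{p_{1:n}}V_{\beta_{1:n}},\ldots,J_{p_{n:n}}V_{\beta_{n:n}}$;
\item[(c)] $J^{*}_{p^{*}_{1:n}}V_{\beta_{1:n}},\ldots,J^{*}_{p^{*}_{n:n}}V_{\beta_{n:n}}$.
\end{itemize}
Reordering within $M_n$ does not change the law of the maximum, so $T_{n:n}\overset{st}{=}W_{n:n}$ and $T^{*}_{n:n}\overset{st}{=}M_{n:n}$. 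The row weak majorization gives $\bm{\alpha}\overset{w}{\succeq}\bm{\beta}$ and $\bm{\psi}(\bm{p})\overset{w}{\succeq}\bm{\psi}(\bm{p}^{*})$.
\begin{itemize}
\item Theorem \ref{th_m_rh_alpha} applies to $(a)$ versus $(b)$: condition (i) there is implied by (i) here, and (ii)--(iii) coincide. It gives $W_{n:n}\geq_{rh}Z_{n:n}$.
\item Theorem \ref{th_m_rh_psi} applies to $(b)$ versus $(c)$ with common scale vector $\bm{\beta}$, since its hypotheses are weaker than ours. It gives $Z_{n:n}\geq_{rh}M_{n:n}$.
\end{itemize}
Transitivity of $\leq_{rh}$ then yields $T_{n:n}\geq_{rh}T^{*}_{n:n}$.

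One point needs care here: the intermediate matrices must lie in $M_n$. Because all vectors are arranged in increasing order, the comonotonicity condition holds automatically. I would also require the same ordering for every $m\le n$, i.e. $T_{m:m}\geq_{rh}T^{*}_{m:m}$. This follows by applying the above to the first $m$ components, provided the truncated rows still satisfy the weak majorization. That is the delicate issue, discussed below.

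\textbf{Stage 2 (random $N$).} With $N_1\overset{st}{=}N_2\overset{st}{=}N$, the distribution functions are
$$F_{T_{N:N}}(x)=\sum_{m}P(N=m)F_{T_{m:m}}(x),\qquad F_{T^{*}_{N:N}}(x)=\sum_{m}P(N=m)F_{T^{*}_{m:m}}(x).$$
The rh order says $F_{T_{m:m}}/F_{T^{*}_{m:m}}$ is increasing in $x$ for each $m$. A mixture with a common mixing law does not automatically preserve this; one needs either a TP$_2$-type structure in $(m,x)$ or a result from \cite{chowdhury2024}. I would invoke their theorem on the reversed hazard rate order of random maxima from non-identical samples. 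It states that if $X_{1:m}$-type partial maxima satisfy $X_{m:m}\geq_{rh}Y_{m:m}$ for every $m$, and the sequence structure is appropriate (e.g. $F_{T_{m:m}}(x)$ is TP$_2$ in $(m,x)$, which holds since $F_{T_{m+1:m+1}}/F_{T_{m:m}}=1-\psi^{-1}(v_{m+1})\bar F(x;\alpha_{m+1})$ is increasing in $x$), then $X_{N:N}\geq_{rh}Y_{N:N}$.

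\textbf{Main obstacle.} The hardest step is the uniform-in-$m$ comparison and the mixture argument. The truncated vectors $(\alpha_1,\ldots,\alpha_m)$ need not weakly supermajorize $(\beta_1,\ldots,\beta_m)$ when they are indexed from the bottom. This is where the extra hypotheses $\alpha_n\geq\beta_n$ and $p_n\leq p^*_n$ enter: they control the top component, so each added factor on the $T$ side is dominated by the corresponding factor on the $T^*$ side. My first attempt would be a direct induction on $m$: show $F_{T_{m:m}}/F_{T^{*}_{m:m}}$ is increasing in $x$ using the fixed-$n$ result together with the componentwise inequality for the last index. Then combine the mixture terms via the sufficient condition of \cite{chowdhury2024}, checking its TP$_2$ requirement through the product form of $F_{T_{m:m}}$.
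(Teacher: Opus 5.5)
Your overall architecture matches the paper's: the fixed-size comparison via the intermediate samples and Theorems~\ref{th_m_rh_alpha}, \ref{th_m_rh_psi}, then Theorem~3.4 of \cite{chowdhury2024}. The gap is in Stage~2, and in the role you assign to $\alpha_n\geq\beta_n$, $p_n\leq p^{*}_n$.

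Write $F_m=F_{T_{m:m}}$ and $G_m=F_{T^{*}_{m:m}}$. The paper checks three conditions:
(i) $A(m)=F_m(x)/G_m(x)$ is increasing in $m$;
(ii) $\tilde r_{T^{*}_{m:m}}(x)$ is increasing in $m$, i.e.\ $G_m(x)$ is TP$_2$ in $(m,x)$;
(iii) $T_{m:m}\geq_{rh}T^{*}_{m:m}$.

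Your version of the preservation result keeps only a TP$_2$ condition and (iii), and that is not enough. The ratio $F_{T_{N:N}}/F_{T^{*}_{N:N}}$ is an average of $F_m/G_m$ with weights proportional to $P(N=m)G_m(x)$. These weights move toward larger $m$ as $x$ grows, so unless $F_m/G_m$ is also increasing in $m$, the average can fall. For instance, on $[0,1]$ take $F_1=G_1=x$, $G_2=x^2$, $F_2=x^3$ and $P(N=1)=P(N=2)=\tfrac12$. Each $F_m/G_m$ is increasing in $x$ and both families are TP$_2$ in $(m,x)$. Yet the mixture ratio $(1+x^2)/(1+x)$ decreases on $(0,\sqrt{2}-1)$. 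Condition (i) is exactly where the extra hypotheses enter:
\[
\frac{A(m)}{A(m-1)}=\frac{1-\psi^{-1}(v_m)\bar F(x;\alpha_m)}{1-\psi^{-1}(u_m)\bar F(x;\beta_m)}\geq 1 ,
\]
since $p_m\leq p^{*}_m$, $\alpha_m\geq\beta_m$, and $\bar F$ is decreasing in its parameter. So these inequalities are needed at every index $m\geq 2$ in the support of $N$, not only at the top one.

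Your induction instead tries to use this same factor to get monotonicity in $x$, and that cannot work. The componentwise inequalities only say the factor is $\geq 1$, which carries no information about its behaviour in $x$. Moreover, the factor itself is generally not increasing: a function that is $\geq 1$ and tends to $1$ as $x\to\infty$ is increasing only if it is identically $1$. Here it starts at $(1-p_m)/(1-p^{*}_m)\geq 1$ at $x=0$. So you cannot multiply your way to monotonicity of $F_m/G_m$ in $x$, and the extra hypotheses do not give $T_{m:m}\geq_{rh}T^{*}_{m:m}$ for the truncated samples.

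Your worry about truncation is legitimate, but the paper does not resolve it through these hypotheses either. It simply takes (iii) for every sample size from Theorems~\ref{th_m_rh_alpha} and \ref{th_m_rh_psi}, which tacitly assumes the row weak majorization for each truncated parameter matrix.

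To repair your argument:
\begin{itemize}
\item state the preservation theorem with all three conditions;
\item use $\alpha_m\geq\beta_m$, $p_m\leq p^{*}_m$ only to verify (i);
\item note that (ii) holds because each new factor $1-\psi^{-1}(u_m)\bar F(x;\beta_m)$ is increasing in $x$;
\item secure (iii) for each $m$ by assuming the weak majorization for the truncated rows.
\end{itemize}
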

    \begin{proof}
Applying Theorem $3.4$ of \cite{chowdhury2024}, our aim is to check
\begin{itemize}
    
    \item[(i)] $\frac{{F}_{T_{n:{n}}}(x)}{{F}_{T^{*}_{n:{n}}}(x)}$ is increasing
in $n$;
\item[(ii)] $\tilde{r}_{T^{*}_{n:n}}(x)$ is increasing in $n$; 
    \item[(iii)] $T_{n:n}\geq_{rh} T^{*}_{n:n}.$
\end{itemize}
Denote $A(n)=\frac{{F}_{T_{n:{n}}}(x)}{{F}_{T^{*}_{n:{n}}}(x)}=\frac{\prod_{i=1}^{n}(1-\psi^{-1}(v_{i})\bar{F}(x;\alpha_{i}))}{\prod_{i=1}^{n}(1-\psi^{-1}(u_{i})\bar{F}(x;\beta_{i}))}.$ Using $\alpha_{n}\geq \beta _{n}$ and $p_{n}\leq p^{*}_{n},$
   we can easily verify that $A(n)>A(n-1)$ which implies $\frac{{F}_{T_{n:{n}}}(x)}{{F}_{T^{*}_{n:{n}}}(x)}$ is increasing in $n.$ Also, we can easily check that 
\begin{equation*}
\tilde{r}_{T^{*}_{{n+1}:{n+1}}}(x)-\tilde{r}_{T^{*}_{n:n}}(x)> 0, 
\end{equation*}
which shows $\tilde{r}_{X_{n:n}}(x)$ is increasing in $n.$
    Using the same lines as for Theorem \ref{th_w_st_small}, we can write the following result 
    $$(\bm{\psi}(\bm{p}),\bm{\alpha};n)\overset{w}{>} (\bm{\psi}(\bm{p}^*),\bm{\beta};n)\Rightarrow T_{{n}:{n}}\geq_{rh}T^{*}_{{n}:{n}},$$
    with the help of Theorem \ref{th_m_rh_alpha} and Theorem \ref{th_m_rh_psi}.
       Finally applying Theorem $3.4$ of \cite{chowdhury2024}, we can obtain the required result.
 \end{proof}
It is noteworthy to mention that the condition ``$\bar{F}(x;\alpha_{i})$ is decreasing and log-convex in $\alpha_{i}$ for all $x$" presented in Theorems \ref{th_w_st_small} and \ref{th_m_rh_lrg}, is very important for modeling claims. An insurer usually applies the decreasing survival function to model the time until a claim occurs. In addition, the decreasing log-convexity property can be suitable in a situation where the risk of a claim increases over time, yet the rate of increase gradually slows down. Thus, in insurance analysis, a decreasing log-convex survival function is appropriate as it allows for a more relevant model of risk over time, capturing the evolving behavior of claim probabilities, which will be used for making appropriate premium and capital allocation policies.  In Theorem \ref{th_m_rh_lrg}, the results presented are for the largest claims amounts in the sense of reversed hazard rate order. Now a natural
question that arises whether the result is true for the smallest claim amounts having random claim size. To prove the result, we first need to establish the following two theorems based on vector majorization, where the first theorem shows under some certain conditions if $\bm{\alpha}\overset{m}{\succeq} \bm{\beta}$ then $T_{{1}:{n}}$ is smaller than $T^{*}_{{1}:{n}}$ with respect to the reversed hazard rate order and the second one provides the same ordering holds between $T_{{1}:{n}}$ and $T^{*}_{{1}:{n}}$ whenever $\bm{\psi} (\bm{p})\overset{m}{\succeq} \bm{\psi} (\bm{p}^{*}).$
\begin{theorem}\label{th_m_rh_sm_alpha}
	Let $\{U_{1},\ldots,U_{n}\}$ and $\{V_{1},\ldots,V_{n}\}$ be two sets of independent random variables with $U_{i}\sim \bar{F}(x;\alpha_{i})$ and $V_{i}\sim \bar{F}(x;\beta_{i}),$ respectively. Also, let $\{J_{1},\ldots,J_{n}\}$ and 
$\{J_{1}^{*},\ldots,J_{n}^{*}\}$ be another two sets of independent Bernoulli random variables,
independently of $U_{i}'$s and $V_{i}'$s  with
$E(J_{i})=p_{i}$ and $E(J_{i}^{*})=p_{i},$ respectively. Assume that the following conditions hold:
\begin{itemize}
    \item [(i)] $\psi:(0,1)\rightarrow (0,\infty)$
is a differentiable function; 
\item [(ii)] ${F}(x;\alpha_{i})$ is increasing and log-convex in $\alpha_{i}$ for all $x;$
\item [(iii)] $r(x;\alpha_{i})$ is convex in $\alpha_{i}$ for all $x$.
\end{itemize}
Then, for $(\boldsymbol{\psi}(\boldsymbol{p}), \boldsymbol\alpha;n), (\boldsymbol{\psi}(\boldsymbol{p}), \boldsymbol{\beta};n)\in M_{n},$ we have$$\bm{\alpha}\overset{m}{\succeq} \bm{\beta} \Rightarrow T_{{1}:{n}}\geq_{rh}T^{*}_{{1}:{n}}.$$ 
	\end{theorem}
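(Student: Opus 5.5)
The plan is to compute the reversed hazard rate of $T_{1:n}$ explicitly as a function of $\bm\alpha$, with $\bm v=\bm\psi(\bm p)$ held fixed, and show that it is Schur-convex in $\bm\alpha$. Since the hypothesis is full majorization $\bm\alpha\overset{m}{\succeq}\bm\beta$ and not weak majorization, no monotonicity in $\bm\alpha$ is needed, and Lemma \ref{lem2.1} suffices.

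\textbf{Step 1 (the formula).} Since $T_i=J_iU_i$, for $x\ge 0$ we have
$$\bar F_{T_{1:n}}(x)=G(\bm\alpha):=\prod_{i=1}^n\psi^{-1}(v_i)\bar F(x;\alpha_i),$$
as in the proof of Theorem \ref{th_m_st_alpha_small}. Hence $f_{T_{1:n}}(x)=G\sum_{i}r(x;\alpha_i)$, and
$$\tilde r_{T_{1:n}}(x)=S(\bm\alpha)\,h(G(\bm\alpha)),\qquad S(\bm\alpha)=\sum_{i=1}^n r(x;\alpha_i),\quad h(g)=\frac{g}{1-g}.$$
This function is symmetric in $\bm\alpha$ once the pairs $(v_i,\alpha_i)$ are permuted together. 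That is exactly why the hypothesis places both parameter matrices in $M_n$: we may assume $v_i$ and $\alpha_i$ are similarly ordered and compare the pair $i,j$ with $\alpha_i\le\alpha_j$, $v_i\le v_j$.

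\textbf{Step 2 (the derivative).} Writing $d(\alpha)=\frac{\partial}{\partial\alpha}\log\bar F(x;\alpha)$, we get
$$\frac{\partial\tilde r_{T_{1:n}}(x)}{\partial\alpha_i}=\frac{\partial r(x;\alpha_i)}{\partial\alpha_i}\,h(G)+S\,h'(G)\,G\,d(\alpha_i).$$
Note that $h(G)$ and $S\,h'(G)\,G$ are positive and common to all indices. So the Schur condition
$$(\alpha_i-\alpha_j)\bigl(\partial_{\alpha_i}\tilde r-\partial_{\alpha_j}\tilde r\bigr)\ge 0$$
reduces to controlling the two increments $\partial_\alpha r(x;\alpha_i)-\partial_\alpha r(x;\alpha_j)$ and $d(\alpha_i)-d(\alpha_j)$. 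By convexity of $r$ in $\alpha$ (condition (iii)), the first increment has the same sign as $\alpha_i-\alpha_j$, so it contributes nonnegatively.

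\textbf{Step 3 (the main obstacle).} The second term requires the monotonicity of $d(\alpha)=-\frac{\partial_\alpha F}{1-F}$, and this must come from condition (ii): $F(x;\alpha)$ is increasing and log-convex in $\alpha$. I would write
$$\frac{\partial_\alpha F}{1-F}=\frac{\partial_\alpha F}{F}\cdot\frac{F}{1-F}.$$
Here $\partial_\alpha F/F$ is nonnegative and increasing by log-convexity, and $F/(1-F)$ is increasing because $F$ is increasing. Hence $\partial_\alpha F/(1-F)$ is increasing in $\alpha$.

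Care is needed with signs, because this makes $d$ decreasing, so this term enters the Schur condition with the sign opposite to the $r$-term. My first attempt would be to track the signs exactly as the paper does in the proof of Theorem \ref{th_m_rh_alpha}. If a genuine conflict remains, I would balance the two terms: combine them into the single expression
$$h(G)\bigl[\partial_\alpha r(x;\alpha_i)-\partial_\alpha r(x;\alpha_j)\bigr]+S\,h'(G)\,G\,\bigl[d(\alpha_i)-d(\alpha_j)\bigr]$$
and seek to dominate the second bracket using the convexity of $r$. Alternatively, I would reinterpret the hypothesis through the relation $\tilde r$-increment versus $r$-increment.

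This sign bookkeeping is where I expect the real difficulty. The rest is routine: once the pairwise inequality holds, Lemma \ref{lem2.1} gives Schur-convexity, so $\tilde r_{T_{1:n}}(x)\ge\tilde r_{T^*_{1:n}}(x)$ for all $x$, which is $T_{1:n}\ge_{rh}T^*_{1:n}$.
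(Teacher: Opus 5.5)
\textbf{There is a genuine gap: your Step 3, which carries the whole proof, is never completed.}

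Your derivative $\partial_{\alpha_i}\tilde r_{T_{1:n}}(x)=\partial_\alpha r(x;\alpha_i)\,h(G)+S\,h'(G)\,G\,d(\alpha_i)$ is correct, and so is your sign analysis. Under (ii), $d(\alpha)=-\partial_\alpha F/(1-F)$ is decreasing. So in $(\alpha_i-\alpha_j)(\partial_{\alpha_i}\tilde r-\partial_{\alpha_j}\tilde r)$ the $r$-increment contributes $\ge 0$ and the $d$-increment contributes $\le 0$. You then stop at an unspecified ``balancing'', so Schur-convexity is never established. Neither fallback you name can supply it. Copying the paper's bookkeeping will not help: its proof asserts the inequality ``by the given conditions'' after writing $\partial_{\alpha_i}(\bar F_{T_{1:n}}/F_{T_{1:n}})=\frac{\partial_\alpha F(x;\alpha_i)}{F(x;\alpha_i)}\cdot\frac{F_{T_{1:n}}(x)}{[F_{T_{1:n}}(x)]^2}$. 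The correct expression is $\frac{\partial_\alpha \bar F(x;\alpha_i)}{\bar F(x;\alpha_i)}\cdot\frac{\bar F_{T_{1:n}}(x)}{[F_{T_{1:n}}(x)]^2}$, i.e.\ it involves your $d(\alpha_i)$. Log-convexity of $F$ gives the needed sign only for the incorrect factor. The space $M_n$ is no lever either: with a common $\bm p$, $\prod_i\psi^{-1}(v_i)$ is a constant, so $\tilde r_{T_{1:n}}(x)$ is symmetric in $\bm\alpha$ regardless of how it is paired with $\bm v$.

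\textbf{No estimate based only on (i)--(iii) can dominate the $d$-term, because the conflict is real.} Hypothesis (ii) has content only on a bounded parameter range, and there the conclusion fails. Take $F(x;\alpha)=G(x)^{c-\alpha}$ with $\alpha\in(0,c)$. Then $\log F$ is linear and increasing in $\alpha$. With $\theta=c-\alpha$ and $t=-\log G(x)$, one gets $r(x;\alpha)=\frac{g(x)}{G(x)}\cdot\frac{\theta}{e^{\theta t}-1}$, which is convex in $\alpha$ because $s\mapsto s/(e^s-1)$ is convex on $(0,\infty)$. Now take $n=2$, a common $p$, $c=2$, $\bm\alpha=(1.5,0.5)\overset{m}{\succeq}\bm\beta=(1,1)$, and $x_0$ with $G(x_0)=e^{-1}$. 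The product $\prod_i(1-e^{-\theta_i})$ falls from $0.3996$ (for $\bm\beta$) to $0.3057$ (for $\bm\alpha$), while $\sum_i\theta_i/(e^{\theta_i}-1)$ only rises from $1.1640$ to $1.2016$. Hence $\tilde r_{T_{1:2}}(x_0)<\tilde r_{T^*_{1:2}}(x_0)$ for every $p$.

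On the paper's own range $\alpha\in(0,\infty)$, (ii) is degenerate. There $\log F(x;\cdot)\le 0$ is convex and nondecreasing on an interval unbounded above, hence constant. So $F(x;\alpha)$ does not depend on $\alpha$, $T_{1:n}\overset{st}{=}T^*_{1:n}$, and the statement holds trivially with equality. That observation is the only valid proof of the statement as written.

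\textbf{What does work.} Your set-up yields a genuine result if (ii) is replaced by log-convexity of $\bar F(x;\cdot)$, the hypothesis of Theorem \ref{th_m_st_alpha_small}. Then $d$ is nondecreasing, both increments have the sign of $\alpha_i-\alpha_j$, and Lemma \ref{lem2.1} applies directly.
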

    \begin{proof}
        In proving the result, utilizing Theorem $A.8$ of \cite{Marshall2011}, we only need to check whether $\tilde{r}_{T_{1:n}}(x)$ is Schur-convex in $\bm{\alpha},$ where
    $\tilde{r}_{T_{1:n}}(x)=\frac{f_{T_{1:n}}(x)}{F_{T_{1:n}}(x)}$ and $\tilde{r}_{T^{*}_{1:n}}(x)=\frac{f_{T^{*}_{1:n}}(x)}{F_{T^{*}_{1:n}}(x)}.$
    Here, $F_{T_{1:n}}(x)$ represents the distribution function of $T_{1:n}$ corresponding to the density function $f_{T_{1:n}}(x)$.
     Then, for $x\geq 0,$
    $$F_{T_{1:n}}(x)=P(T_{1:n}\leq x)=1- \prod_{i=1}^{n}\left(\psi^{-1}(v_{i})\bar{F}(x;\alpha_{i})\right)$$ and $$f_{T_{1:n}}(x)={\prod_{i=1}^{n}\left(\psi^{-1}(v_{i})\bar{F}(x;\alpha_{i})\right)}\sum_{i=1}^{n}r(x;\alpha_{i})={\bar{F}_{T_{1:n}}(x)}\sum_{i=1}^{n}r(x;\alpha_{i}).$$
Therefore, the reversed hazard rate function of $T_{1:n}$ is given by
    $$\tilde{r}_{T_{1:n}}(x)=\frac{\prod_{i=1}^{n}\left(\psi^{-1}(v_{i})\bar{F}(x;\alpha_{i})\right)}{1- \prod_{i=1}^{n}\left(\psi^{-1}(v_{i})\bar{F}(x;\alpha_{i})\right)}\sum_{i=1}^{n}r(x;\alpha_{i})=\frac{\bar{F}_{T_{1:n}}(x)}{F_{T_{1:n}}(x)}\sum_{i=1}^{n}r(x;\alpha_{i}).$$
    Now, taking partial derivative of $\tilde{r}_{T_{1:n}}(x)$ with respect to $\alpha_{i}$ we have
    $$\frac{\partial\tilde{r}_{T_{1:n}}(x)}{\partial\alpha_{i}}=
    \frac{\partial}{\partial\alpha_{i}}\left(\frac{\bar{F}_{T_{1:n}}(x)}{F_{T_{1:n}}(x)}\right)\sum_{i=1}^{n}r(x;\alpha_{i})+\left(\frac{\bar{F}_{T_{1:n}}(x)}{F_{T_{1:n}}(x)}\right)\frac{d r(x;\alpha_{i})}{d\alpha_{i}},$$
    where $\frac{\partial}{\partial\alpha_{i}}\left(\frac{\bar{F}_{T_{1:n}}(x)}{F_{T_{1:n}}(x)}\right)=\frac{\frac{d{F}(x;\alpha_{i})}{d\alpha_{i}}}{{F}(x;\alpha_{i})}\left(\frac{{F}_{T_{1:n}}(x)}{[F_{T_{1:n}}(x)]^2}\right).$
    Now $$(\alpha_{i}-\alpha_{j})\left(\frac{\partial\tilde{r}_{T_{1:n}}(x)}{\partial\alpha_{i}}-\frac{\partial\tilde{r}_{T_{1:n}}(x)}{\partial\alpha_{j}}\right)\geq 0,$$ by the given conditions, which concludes $\tilde{r}_{T_{1:n}}(x)$ is Schur-convex in $\bm{\alpha}.$ Hence, the proof.
    \end{proof}
To prove the next theorem, we need the following result.
    \begin{theorem}\label{ch24_3.3}
    Suppose $X_{1:n}\sim \bar{F}_{1:n}(x)$ and $X_{1:n}\sim \bar{G}_{1:n}(x).$ Also let the support of a positive
integer-valued random variable $N$ having pmf $p(n)$ be $N_{+}$. Now, for all $x\geq l$, if $\tilde{r}_{Y_{1:n}}(x)$,
the reversed hazard rate function of $Y_{1:n}$ is increasing in $n\in N_{+}$ and $\frac{F_{1:n}(x)}{G_{1:n}(x)}$ is increasing
in $n\in N_{+}$, then $X_{1:n}\geq_{rh} Y_{1:n}$ implies $X_{1:N}\geq_{rh} Y_{1:N}$.
\end{theorem}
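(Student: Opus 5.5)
We need to prove: if $\tilde r_{Y_{1:n}}(x)$ is increasing in $n$, $F_{1:n}(x)/G_{1:n}(x)$ is increasing in $n$, and $X_{1:n}\geq_{rh}Y_{1:n}$ for every $n\in N_+$, then $X_{1:N}\geq_{rh}Y_{1:N}$. Here $F_{1:n},G_{1:n}$ are the distribution functions of $X_{1:n},Y_{1:n}$.

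The plan is to work with the distribution functions
$$F_{1:N}(x)=\sum_{n}p(n)F_{1:n}(x),\qquad G_{1:N}(x)=\sum_{n}p(n)G_{1:n}(x),$$
and to show directly that $F_{1:N}(x)/G_{1:N}(x)$ is increasing in $x\geq l$. This is exactly $X_{1:N}\geq_{rh}Y_{1:N}$. Differentiating, the claim is equivalent to
$$\sum_{m}\sum_{n}p(m)p(n)\big[f_{1:m}(x)G_{1:n}(x)-F_{1:m}(x)g_{1:n}(x)\big]\geq 0 .$$
We rewrite the bracket as
$$F_{1:m}(x)G_{1:n}(x)\big[\tilde r_{X_{1:m}}(x)-\tilde r_{Y_{1:n}}(x)\big].$$

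We then split the double sum into its diagonal and off-diagonal parts.
\begin{itemize}
\item On the diagonal $m=n$, each term is nonnegative because $X_{1:n}\geq_{rh}Y_{1:n}$ gives $\tilde r_{X_{1:n}}\geq\tilde r_{Y_{1:n}}$.
\item For $m\neq n$, we pair the $(m,n)$ and $(n,m)$ terms and take $m>n$. Write $\tilde r_X(k)=\tilde r_{X_{1:k}}(x)$, $\tilde r_Y(k)=\tilde r_{Y_{1:k}}(x)$, $F_k=F_{1:k}(x)$, $G_k=G_{1:k}(x)$. The paired sum is
$$p(m)p(n)\Big\{F_mG_n\big[\tilde r_X(m)-\tilde r_Y(n)\big]+F_nG_m\big[\tilde r_X(n)-\tilde r_Y(m)\big]\Big\}.$$
\end{itemize}

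Next we use the rh comparison at each fixed size to get $\tilde r_X(m)\geq\tilde r_Y(m)$ and $\tilde r_X(n)\geq \tilde r_Y(n)$. Replacing $\tilde r_X$ by $\tilde r_Y$ makes the bracket smaller, so it is bounded below by
$$(F_mG_n-F_nG_m)\big[\tilde r_Y(m)-\tilde r_Y(n)\big].$$
This lower bound is a product of two nonnegative factors:
\begin{itemize}
\item $F_m/G_m\geq F_n/G_n$ because the ratio increases in the index, so $F_mG_n-F_nG_m\geq 0$;
\item $\tilde r_Y(m)\geq\tilde r_Y(n)$ by the monotonicity of $\tilde r_{Y_{1:k}}$ in $k$.
\end{itemize}
Hence every paired term is nonnegative, and the derivative of the ratio is $\geq 0$ for $x\geq l$.

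The main obstacle is analytic bookkeeping when $N$ has infinite support. Differentiating the series term by term and rearranging the double sum need justification. I would handle this with dominated convergence: each series is a convex combination of bounded distribution functions, and monotone densities or local integrability are assumed implicitly. Alternatively, I would avoid derivatives and prove the increase of the ratio through the two-point inequality $F_{1:N}(x_2)G_{1:N}(x_1)\geq F_{1:N}(x_1)G_{1:N}(x_2)$ for $x_1<x_2$, using the same pairing argument with rh order written as TP$_2$-type inequalities. A minor point is the typo in the statement, where the second variable should read $Y_{1:n}\sim\bar G_{1:n}$; I interpret it that way. This is precisely Theorem 3.3 of \cite{chowdhury2024}, so one could alternatively just cite it.
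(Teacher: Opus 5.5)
Your proof is correct, but it takes a different route from the paper. The paper gives no self-contained argument. It only observes that ``$\tilde r_{Y_{1:n}}(x)$ increasing in $n$'' is equivalent to ``$G_{1:n_1}(x)/G_{1:n_2}(x)$ decreasing in $x$ for $n_1\le n_2$'', a TP$_2$-type condition on $G_{1:n}(x)$ in $(n,x)$. It then defers entirely to Proposition 3.4 and Theorem 3.1 of \cite{Kundu2024}.

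You instead prove the mixture statement directly. You expand $f_{1:N}G_{1:N}-F_{1:N}g_{1:N}$ as a double sum and handle the diagonal with the fixed-$n$ rh order. Each symmetric off-diagonal pair with $m>n$ is bounded below by $(F_mG_n-F_nG_m)\bigl(\tilde r_Y(m)-\tilde r_Y(n)\bigr)\ge 0$, which is exactly where the two monotonicity-in-$n$ hypotheses enter. Your route makes visible which hypothesis is used where (only $\tilde r_{X_{1:k}}\ge\tilde r_{Y_{1:k}}$ at the same index is needed) and does not depend on the precise hypotheses of the cited results. The paper's route is shorter but opaque.

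Your derivative-free variant also goes through and disposes of the infinite-support bookkeeping. Fix $x_1<x_2$ with $G_k(x_1)>0$ and put $\sigma_k=G_k(x_2)/G_k(x_1)$. The rh order at each index gives $F_k(x_2)\ge \sigma_kF_k(x_1)$. Hence the $(m,n)$ pair $F_m(x_2)G_n(x_1)-F_m(x_1)G_n(x_2)+F_n(x_2)G_m(x_1)-F_n(x_1)G_m(x_2)$ is at least $(\sigma_m-\sigma_n)\bigl(F_m(x_1)G_n(x_1)-F_n(x_1)G_m(x_1)\bigr)\ge 0$. Every double sum there is dominated by $\sum_m\sum_n p(m)p(n)=1$, so rearranging into diagonal and paired terms needs no further justification.
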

\begin{proof}
    If $\tilde{r}_{Y_{1:n}}(x)$ is increasing in $n,$ then for any two positive integers $n_{1}\leq n_{2}$ and for all $x\geq l$ we can write $\tilde{r}_{Y_{1:n_{1}}}(x)\leq \tilde{r}_{Y_{1:n_{2}}}(x), $ which is equivalent to say that $\frac{G_{1:n_{1}}(x)}{G_{1}:{n_2}(x)}$ is decreasing in $l\leq x\leq u.$ Thus we can prove the required result using Proposition $3.4$ and Theorem $3.1$ of \cite{Kundu2024}.
\end{proof}

    \begin{theorem}\label{th_m_rh_sm_psi}
	Let $\{U_{1},\ldots,U_{n}\}$ and $\{V_{1},\ldots,V_{n}\}$ be two sets of independent random variables with $U_{i}\sim \bar{F}(x;\alpha_{i})$ and $V_{i}\sim \bar{F}(x;\alpha_{i}),$ respectively. Also, let $\{J_{1},\ldots,J_{n}\}$ and 
$\{J_{1}^{*},\ldots,J_{n}^{*}\}$ be another two sets of independent Bernoulli random variables,
independently of $U_{i}'$s and $V_{i}'$s  with
$E(J_{i})=p_{i}$ and $E(J_{i}^{*})=p_{i}^{*},$ respectively. Assume that $\psi:(0,1)\rightarrow (0,\infty)$
is a differentiable increasing and log-convex
function. Then, for $(\boldsymbol{\psi}(\boldsymbol{p}), \boldsymbol\alpha;n), (\boldsymbol{\psi}(\boldsymbol{p}^{*}), \boldsymbol{\alpha};n)\in M_{n},$ we have$$\bm{\psi} (\bm{p})\overset{m}{\succeq} \bm{\psi} (\bm{p}^{*})\Rightarrow T_{{1}:{n}}\geq_{rh}T^{*}_{{1}:{n}}.$$
	\end{theorem}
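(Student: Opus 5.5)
The plan is to follow the route of Theorem \ref{th_m_rh_sm_alpha}, now with $\bm{\alpha}$ held fixed and $\bm{v}=\bm{\psi}(\bm{p})$ varying. Since the hypothesis is ordinary majorization $\bm{\psi}(\bm{p})\overset{m}{\succeq}\bm{\psi}(\bm{p}^{*})$ rather than weak supermajorization, no monotonicity in $\bm{v}$ is needed. It suffices to show that, for each fixed $x\geq 0$, the map $\bm{v}\mapsto\tilde{r}_{T_{1:n}}(x)$ is symmetric and Schur-convex on $(0,\infty)^n$. Then $\bm{v}\overset{m}{\succeq}\bm{u}$ gives $\tilde{r}_{T_{1:n}}(x)\geq\tilde{r}_{T^{*}_{1:n}}(x)$ for all $x$, which is exactly $T_{1:n}\geq_{rh}T^{*}_{1:n}$.

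First I would use the representation from the proof of Theorem \ref{th_m_rh_sm_alpha}:
$$\tilde{r}_{T_{1:n}}(x)=S(x)\,g\bigl(H(\bm{v})\bigr),\qquad H(\bm{v})=\prod_{i=1}^{n}\psi^{-1}(v_i)\bar{F}(x;\alpha_i),\qquad g(t)=\frac{t}{1-t},$$
where $S(x)=\sum_{i=1}^{n}r(x;\alpha_i)$ does not depend on $\bm{v}$. The function $g$ is strictly increasing on $(0,1)$, and $H\in(0,1)$. Differentiating gives
$$\frac{\partial\tilde{r}_{T_{1:n}}(x)}{\partial v_i}=S(x)\,g'(H)\,H\,\frac{(\psi^{-1})'(v_i)}{\psi^{-1}(v_i)} .$$
So $(v_i-v_j)\bigl(\partial_{v_i}-\partial_{v_j}\bigr)\tilde{r}_{T_{1:n}}(x)$ factors as the nonnegative common factor $S(x)g'(H)H$ times
$$(v_i-v_j)\left(\frac{(\psi^{-1})'(v_i)}{\psi^{-1}(v_i)}-\frac{(\psi^{-1})'(v_j)}{\psi^{-1}(v_j)}\right).$$

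The key step is to show this last expression is nonnegative. That holds precisely when $\frac{d}{dv}\log\psi^{-1}(v)$ is increasing, i.e.\ when $\psi^{-1}$ is log-convex. I would derive this from the hypothesis on $\psi$, as the paper does in Theorems \ref{th_m_st_psi_small} and \ref{th_m_rh_psi}, where log-convexity of $\psi$ is used in the form of monotonicity of $(\psi^{-1})'/\psi^{-1}$. Symmetry of $\tilde{r}_{T_{1:n}}(x)$ under simultaneous permutation of the pairs $(v_i,\alpha_i)$ is clear from the product form. Lemma \ref{lem2.1} then gives Schur-convexity, and the conclusion follows.

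The main obstacle is this transfer of the shape condition from $\psi$ to $\psi^{-1}$. For an increasing $\psi$, log-convexity of $\psi$ does not automatically yield log-convexity of $\psi^{-1}$. I would therefore first check it directly, writing $(\log\psi^{-1})'(v)=1/\bigl(\psi^{-1}(v)\,\psi'(\psi^{-1}(v))\bigr)$, which is increasing in $v$ iff $p\,\psi'(p)$ is decreasing in $p$. If this does not follow from the stated hypothesis, I would state explicitly that the condition is interpreted as log-convexity of $\psi^{-1}$, consistent with its use earlier in the paper. The restriction to $M_n$ is not needed in this argument, because the $\bm{v}$-dependence enters only through the single scalar $H$.
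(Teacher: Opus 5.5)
You follow the paper's proof exactly. The paper also writes $\partial\tilde{r}_{T_{1:n}}(x)/\partial v_i$ as a positive common factor times $(\psi^{-1})'(v_i)/\psi^{-1}(v_i)$. It then asserts $(v_i-v_j)\bigl(\partial_{v_i}-\partial_{v_j}\bigr)\tilde{r}_{T_{1:n}}(x)\ge 0$ ``by the given conditions'', which tacitly requires $\psi^{-1}$ to be log-convex. You correctly isolated this as the crux. But the step is not merely unjustified: it is false under the stated hypotheses, so neither your argument nor the paper's can be completed.

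Here is why it fails:
\begin{itemize}
\item A log-convex $\psi$ is convex.
\item The inverse of an increasing convex function is increasing and concave.
\item The logarithm of a positive concave function is concave.
\end{itemize}
Hence $\psi^{-1}$ is \emph{log-concave}. In your own criterion, $p\,\psi'(p)$ is nondecreasing, being the product of the increasing $p$ and the nonnegative nondecreasing $\psi'$. Your bracket is therefore $\le 0$, so $\tilde{r}_{T_{1:n}}(x)$ is Schur-concave in $\bm{v}$. Consequently $\bm{\psi}(\bm{p})\overset{m}{\succeq}\bm{\psi}(\bm{p}^{*})$ gives $T_{1:n}\le_{rh}T^{*}_{1:n}$, the reverse of the claim.

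A concrete counterexample:
\begin{itemize}
\item Take $n=2$, $\psi(p)=e^{p}$ (increasing, with $\log\psi$ linear), and $\bar F(x;\alpha)=e^{-\alpha x}$ with $\alpha_1\le\alpha_2$.
\item Take $\bm{p}=(0.1,0.9)$, so $\bm{v}\approx(1.105,2.460)$, and $\bm{u}=(\bar v,\bar v)$ with $\bar v\approx 1.782$, i.e.\ $p^{*}_1=p^{*}_2\approx 0.578$.
\item Both parameter matrices lie in $M_2$, and $\bm{v}\overset{m}{\succeq}\bm{u}$.
\item However $p_1p_2=0.09<p_1^{*}p_2^{*}\approx 0.334$, so $H<H^{*}$ and hence $\tilde{r}_{T_{1:2}}(x)<\tilde{r}_{T^{*}_{1:2}}(x)$ for every $x>0$.
\end{itemize}
Even $\ge_{st}$ fails here, since $F_{T_{1:2}}(0)=0.91>F_{T^{*}_{1:2}}(0)\approx 0.666$.

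Your fallback, reading the hypothesis as log-convexity of $\psi^{-1}$, does not rescue a meaningful theorem. If $\psi:(0,1)\to(0,\infty)$ is increasing, then $\log\psi^{-1}(v)\to-\infty$ as $v$ decreases to the finite endpoint $\inf\psi\ge 0$. A convex function on an interval stays bounded below near a finite endpoint. So no admissible $\psi$ exists, and the reinterpreted statement is vacuous.

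What your computation actually proves is one of two corrected statements:
\begin{itemize}
\item the reversed conclusion $T_{1:n}\le_{rh}T^{*}_{1:n}$ for increasing convex $\psi$ (log-convexity is more than needed); or
\item the stated conclusion $T_{1:n}\ge_{rh}T^{*}_{1:n}$ for a \emph{decreasing} $\psi$ whose inverse is log-convex, e.g.\ $\psi(p)=(1-p)/p$ with $\psi^{-1}(v)=1/(1+v)$.
\end{itemize}
Either way, the hypothesis must be placed on $\psi^{-1}$ or the direction of the order changed. As written, the statement cannot be proved. Theorem \ref{th_m_rh_sm}, which relies on it, inherits the same defect.
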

    \begin{proof}
    Similar to Theorem \ref{th_m_rh_sm_alpha}, after taking partial derivative of $\tilde{r}_{T_{1:n}}(x)$ with respect to $v_{i},$ we have
    $$\frac{\partial\tilde{r}_{T_{1:n}}(x)}{\partial v_{i}}=
    \frac{\partial}{\partial v_{i}}\left(\frac{\bar{F}_{T_{1:n}}(x)}{F_{T_{1:n}}(x)}\right)\sum_{i=1}^{n}r(x;\alpha_{i}),$$
    where $\frac{\partial}{\partial v_{i}}\left(\frac{\bar{F}_{T_{1:n}}(x)}{F_{T_{1:n}}(x)}\right)=\frac{\frac{d \psi^{-1}(v_{i})}{d v_{i}}}{\psi^{-1}(v_{i})}\left(\frac{\bar{F}_{T_{1:n}}(x)}{[F_{T_{1:n}}(x)]^2}\right).$
    Now $$(v_{i}-v_{j})\left(\frac{\partial\tilde{r}_{T_{1:n}}(x)}{\partial v_{i}}-\frac{\partial\tilde{r}_{T_{1:n}}(x)}{\partial v_{j}}\right)\geq 0,$$ by the given conditions which yields $\tilde{r}_{T_{1:n}}(x)$ is Schur-convex in $\bm{v}.$ 
    Hence, the proof.
    \end{proof}
    \begin{theorem}\label{th_m_rh_sm}
	Let $\{U_{1},\ldots,U_{n}\}$ and $\{V_{1},\ldots,V_{n}\}$ be two sets of independent random variables with $U_{i}\sim \bar{F}(x;\alpha_{i})$ and $V_{i}\sim \bar{F}(x;\beta_{i}),$ respectively. Also, let $\{J_{1},\ldots,J_{n}\}$ and 
$\{J_{1}^{*},\ldots,J_{n}^{*}\}$ be another two sets of independent Bernoulli random variables,
independently of $U_{i}'$s and $V_{i}'$s  with
$E(J_{i})=p_{i}$ and $E(J_{i}^{*})=p_{i}^{*},$ respectively. Further, let $N_1$ and $N_2$ be two positive integer-valued random variables independently of $T_{i}'$s and ${T^{*}_{i}}'$s satisfying $N_{1}\overset{st}{=} N_{2}\overset{st}{=} N$, respectively. Assume that the following conditions hold:
\begin{itemize}
    \item [(i)] $\psi:(0,1)\rightarrow (0,\infty)$
is a differentiable, strictly increasing and log-convex
function;
\item[(ii)] ${F}(x;\alpha_{i})$ is log-convex in $\alpha_{i}$ for all $x;$
\item[(iii)] $r(x;\alpha_{i})$ is convex in $\alpha_{i}$ for any $x$.
        \end{itemize}
Then, for $(\boldsymbol{\psi}(\boldsymbol{p}), \boldsymbol\alpha;n), (\boldsymbol{\psi}(\boldsymbol{p}^{*}), \boldsymbol{\beta};n)\in M_{n},$ we have $$(\bm{\psi}(\bm{p}),\bm{\alpha};n)\overset{row}{>} (\bm{\psi}(\bm{p}^*),\bm{\beta};n)\Rightarrow T_{{1}:{N}}\geq_{rh}T^{*}_{{1}:{N}}.$$
	\end{theorem}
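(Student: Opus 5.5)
The plan mirrors the proof of Theorem \ref{th_m_rh_lrg}, with Theorem \ref{ch24_3.3} replacing Theorem $3.4$ of \cite{chowdhury2024}. With $N_1\overset{st}{=}N_2\overset{st}{=}N$, Theorem \ref{ch24_3.3} (taking $X_{1:m}=T_{1:m}$ and $Y_{1:m}=T^{*}_{1:m}$) reduces the claim to three facts, each holding for every $m$ in the support of $N$:
\begin{itemize}
\item[(a)] $T_{1:m}\geq_{rh}T^{*}_{1:m}$ under row majorization;
\item[(b)] $\tilde r_{T^{*}_{1:m}}(x)$ is increasing in $m$;
\item[(c)] $F_{T_{1:m}}(x)/F_{T^{*}_{1:m}}(x)$ is increasing in $m$.
\end{itemize}

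For (a), I will chain the two vector results, exactly as in the first part of the proof of Theorem \ref{th_w_st_small}. Row majorization of $(\bm{\psi}(\bm{p}),\bm{\alpha};m)$ over $(\bm{\psi}(\bm{p}^*),\bm{\beta};m)$ gives $\bm{\alpha}\overset{m}{\succeq}\bm{\beta}$ and $\bm{\psi}(\bm{p})\overset{m}{\succeq}\bm{\psi}(\bm{p}^*)$. Membership in $M_m$ lets me arrange all four vectors in increasing order simultaneously. I then introduce an intermediate portfolio $Z_{1:m}$ with occurrence probabilities $\bm{p}$ and severity parameters $\bm{\beta}$, ordered consistently. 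Theorem \ref{th_m_rh_sm_alpha} gives $T_{1:m}\geq_{rh}Z_{1:m}$, using conditions (ii)--(iii) together with the fact that $F(x;\alpha)$ is increasing in $\alpha$. Theorem \ref{th_m_rh_sm_psi} gives $Z_{1:m}\geq_{rh}T^{*}_{1:m}$, using condition (i). By transitivity of $\leq_{rh}$, (a) follows.

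For (b) and (c), write $\bar F_{T^{*}_{1:m}}(x)=\prod_{i=1}^{m}\psi^{-1}(u_i)\bar F(x;\beta_i)=:P_m$. Then
\[
\tilde r_{T^{*}_{1:m}}(x)=\frac{P_m}{1-P_m}\,\sum_{i=1}^m r(x;\beta_i).
\]
Adding a component multiplies $P_m$ by a factor in $(0,1]$, which lowers $P_m/(1-P_m)$, while the hazard sum rises by $r(x;\beta_{m+1})$. The monotonicity in (b) is therefore not automatic. I would try to show
\[
\frac{P_{m+1}}{1-P_{m+1}}\,\sum_{i=1}^{m+1}r(x;\beta_i)\;\geq\;\frac{P_m}{1-P_m}\,\sum_{i=1}^{m}r(x;\beta_i),
\]
using the ordering of the $\beta_i$'s and $u_i$'s furnished by $M_m$. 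If this fails in general, I would restrict to the range $x\geq l$ allowed by Theorem \ref{ch24_3.3}.

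For (c), I will write
\[
\frac{F_{T_{1:m}}(x)}{F_{T^{*}_{1:m}}(x)}=\frac{1-Q_m}{1-P_m},
\]
where $Q_m$ is the analogous product with $v_i$ and $\alpha_i$. I will then compare consecutive ratios using how the $(m+1)$st factors are ordered relative to each other, as was done for $A(n)$ in Theorem \ref{th_m_rh_lrg}.

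The main obstacle is (b) and (c). These are the monotonicity-in-sample-size conditions, and unlike (a) they do not follow from the majorization machinery. They may require an extra condition on the last components, analogous to $\alpha_n\geq\beta_n$, $p_n\leq p_n^*$ in Theorem \ref{th_m_rh_lrg}, or else a restriction to $x$ beyond a threshold. I would first attempt (c) by checking $(1-Q_{m+1})(1-P_m)\geq(1-Q_m)(1-P_{m+1})$ directly. Once (a)--(c) are in hand, Theorem \ref{ch24_3.3} with $N$ yields $T_{1:N}\geq_{rh}T^{*}_{1:N}$.
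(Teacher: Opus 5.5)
Your route is the paper's own. The paper chains Theorems \ref{th_m_rh_sm_alpha} and \ref{th_m_rh_sm_psi} to get $T_{1:n}\geq_{rh}T^{*}_{1:n}$. It then cites Theorem~3.3 of \cite{chowdhury2024} (Theorem \ref{ch24_3.3}) with ``one can easily obtain the desired result'', and never checks that theorem's monotonicity hypotheses. You correctly isolate those hypotheses as (b) and (c) but leave them open. The gap is worse than unfinished work: (b) fails for every admissible choice of parameters. Since $T^{*}_{1:m+1}\leq T^{*}_{1:m}$ pointwise, $F_{T^{*}_{1:m+1}}\geq F_{T^{*}_{1:m}}$, and both tend to $1$ as $x\to\infty$. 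Suppose $\tilde r_{T^{*}_{1:m}}(x)\leq \tilde r_{T^{*}_{1:m+1}}(x)$ for all $x>0$. Then $F_{T^{*}_{1:m+1}}(x)/F_{T^{*}_{1:m}}(x)$ would be nondecreasing on $(0,\infty)$. Being $\geq 1$ and tending to $1$, it would be identically $1$. But $V_i>0$ and $p^{*}_{m+1}<1$, so its limit as $x\downarrow 0$ is $\bigl(1-\prod_{i=1}^{m+1}p_i^{*}\bigr)/\bigl(1-\prod_{i=1}^{m}p_i^{*}\bigr)>1$. Equivalently, $\leq_{rh}$ implies $\leq_{st}$, while $T^{*}_{1:m+1}$ is strictly stochastically smaller than $T^{*}_{1:m}$. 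For instance, with i.i.d.\ unit-rate exponential severities and $q=pe^{-x}$, $\tilde r_{T^{*}_{1:m}}(x)=mq^{m}/(1-q^{m})$, which \emph{decreases} in $m$. Your fallback of restricting to $x\geq l$ with $l>0$ gives monotonicity of $F_{T_{1:N}}/F_{T^{*}_{1:N}}$ only on $[l,\infty)$, which is not $T_{1:N}\geq_{rh}T^{*}_{1:N}$.

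The mixture argument behind Theorem \ref{ch24_3.3} writes $F_{T_{1:N}}/F_{T^{*}_{1:N}}$ as an average of $\phi_m=F_{T_{1:m}}/F_{T^{*}_{1:m}}$, each increasing in $x$ by (a). The weights are proportional to $P(N=m)F_{T^{*}_{1:m}}(x)$. The natural sufficient condition is that the two monotonicities in $m$ point the same way. For minima the usable pairing is therefore $\tilde r_{T^{*}_{1:m}}(x)$ decreasing in $m$ together with $\phi_m(x)$ decreasing in $m$. Nothing in (i)--(iii) or in the majorization hypothesis controls how these quantities change when a component is appended. So conditions on the added components would have to be imposed and verified, in the spirit of $\alpha_n\geq\beta_n$, $p_n\leq p_n^{*}$ in Theorem \ref{th_m_rh_lrg}.

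There is also a separate gap in (a): you need $T_{1:m}\geq_{rh}T^{*}_{1:m}$ for every $m$ in the support of $N$. Row majorization is assumed only for the full $n$-column matrices and does not pass to the first $m$ columns; for example, $(1,3)\overset{m}{\succeq}(2,2)$, while $(1)$ and $(2)$ are not comparable. The simultaneous reordering that $M_n$ allows is harmless for a fixed-size minimum, but it changes which components make up $T_{1:m}$ when $m<n$. Finally, you invoke that $F(x;\alpha)$ is increasing in $\alpha$. That is not a hypothesis of this theorem, although Theorem \ref{th_m_rh_sm_alpha} requires it as stated.
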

    \begin{proof}
    Applying the similar idea as of Theorem \ref{th_w_st_small}, we can state the following result 
    $$(\bm{\psi}(\bm{p}),\bm{\alpha};n)\overset{row}{>} (\bm{\psi}(\bm{p}^*),\bm{\beta};n)\Rightarrow T_{{1}:{n}}\geq_{rh}T^{*}_{{1}:{n}},$$
    with the help of Theorem \ref{th_m_rh_sm_alpha} and Theorem \ref{th_m_rh_sm_psi}.
       Finally, utilizing Theorem $3.3$ of \cite{chowdhury2024}, one can easily obtain the desired result.
 \end{proof}
 \begin{remark}
     It is quite natural to find some distribution functions having $(i)$ $\bar{F}(x;\alpha_{i})$ is decreasing and log-convex in $\alpha_{i}$ for all $x;$ $(ii)$ $r(x;\alpha_{i})$ is decreasing and convex in $\alpha_{i}$ for all $x;$ $(iii)$ $r(x;\alpha_{i})$ is  convex in $\alpha_{i}$ for all $x$; $(iv)$ ${F}(x;\alpha_{i})$ is log-convex in $\alpha_{i}.$ For example, consider 
     \begin{itemize}
         \item [(i)] exponential distribution with survival function $\bar{F}(x;\alpha)=e^{-\alpha x},~x,\alpha >0,$ which satisfies all the conditions provided in $(i)$-$(iv);$
         \item [(ii)] Weibull distribution with survival function $\bar{F}(x;\alpha,\beta)= e^{-\beta x^{\alpha}},~x, ~\alpha, ~\beta >0,$ which satisfies all the conditions with respect to $\beta$ presented in $(i)$-$(iv);$
         \item [(iii)] Power-generalized Weibull distribution with survival function $\bar{F}(x;c,k)=e^{1-(1+x^c)^{\frac{1}{k}}},~x,~c,~k>0,$ which satisfies properties $(i)$ and $(ii)$ with respect to $k$;
         \item [(iv)] Gamma distribution with density function $f(x;\alpha,\beta)=\frac{1}{\Gamma(\alpha) \beta^{\alpha}}x^{\alpha-1}e^{-\frac{x}{\beta}},~\alpha,~\beta>0,$ which satisfies conditions $(iii)$ and $(iv)$ with respect to $\beta.$
     \end{itemize}
     It is important to note that a distribution having constant hazard rate is very useful in insurance claim modeling, particularly in accident insurance  as the events are random and age independent. { As documented in the literature (see \cite{Mohammed2022} and \cite{Hamza2023}), the exponential distribution, Weibull distribution 
play a crucial role for modeling the claim severity in actuarial science which perform well to analyze skewed data}.
 \end{remark}

\section{Application}\label{app}
   Ordering results between two random extremes is useful in reliability and auction theory. In the following, we consider some applications of our established theoretical results for the purpose of illustration.
\subsection{Reliability theory}
Consider a reliability system having $n$ components in working condition and suppose each components are subjected to a shock that may result in failure. The lifetimes of each components of the system are represented by the non-negative random variables $U_{1},\ldots,U_{n},$ which are experienced with a random shock at the beginning. Let $J_{1},\ldots,J_{n}$ be another collection of independent Bernoulli random variables,
independently of $U_{i}'$s with
$E(J_{i})=p_{i},$ which can be called as shock parameter. Also, let with probability $p_i$ the $i$th component still working after receiving the random shock if $J_{i}=1,$ otherwise it fail to work with probability $1-p_{i}.$ Then the random variable $T_{i}=U_{i}J_{i}$ represent the lifetime of the $i$th component in a system under shock. In reliability analysis, measuring  the probability of remaining healthy or the average lifetime of a system which are under random shock plays a vital role.  Stochastic comparison of such reliability systems has attracted significant research interest. Several comparison results based on fixed sample size for these reliability model have been developed by different authors (see \cite{Fang18}, \cite{kundu2021_shock}, \cite{Abdolahi23} and \cite{Amini24}).  In practice, it is not always possible to get a system having fixed sample size because some observations get
lost for different reasons and sometimes it depends upon the occurrence of some events, which makes the sample
size random. Therefore, analyzing the reliability of a system having random number of components which are under random shock is of great importance from a practical point of view.  In this subsection, we apply our establish results for the above presented model to compare two series as well as parallel systems having random number of components which are under random shock and generalized some existing results from fixed sample size to different (random) sample sizes. 

Let us consider a system having $N_{1}$ number of components, where $U_{1},\ldots,U_{N_{1}}$  with $U_{i}\sim \bar{F}(x;\alpha_{i})$ representing the life-times of the components of the system which may received a random shock at the beginning. Let $\{J_{1},\ldots,J_{N_{1}}\}$ be another set of independent Bernoulli random variables,
independently of $U_{i}'$s with
$E(J_{i})=p_{i}$ such that for a given time period $J_{i}=1$ if the $i$th component is remain operating after an experience of random shocks and, $J_{i}=0$ if the $i$th component fail due to the random shock. Let $T_{i}=U_{i}J_{i},$ for $i=1,\ldots,N_{1}.$ Here, $N_{1}$ is a positive integer-valued random variable with support $\{1,2,\ldots\}$, independently of $T_{i}'$s. Under this set-up, $T_{1},\ldots,T_{N_1}$ represent the life-times of components of the system that are subject to random shocks. Thus, $T_{N_{1}:N_{1}}=\max\{T_{1},\ldots,T_{N_1}\}$ and $T_{1:N_{1}}=\min\{T_{1},\ldots,T_{N_1}\}$ represent the life-times of the parallel and series systems, respectively having random number of observations. Therefore, using our established results presented in Theorem \ref{th_w_st} (Theorem \ref{th_w_st_small}), we can say that under some certain conditions if $N_{1}\leq_{st} N_{2}$ then the survival time of the parallel system $T_{{N_1}:{N_1}}$ (series system $T_{{1}:{N_1}}$) is smaller (greater) than the survival time of the parallel system $T^{*}_{{N_2}:{N_2}}$ (series system $T^{*}_{{1}:{N_2}}$) when the matrix of parameters $(\boldsymbol{\psi}(\boldsymbol{p}), \boldsymbol{\alpha},n)$ changes to $(\bm{\psi}(\bm{p}^*),\bm{\beta},n)$ in the sense of the row weak
majorization order in the space $M_{n}.$ Similarly,
Theorem \ref{th_m_rh_lrg} and Theorem \ref{th_m_rh_sm} provide that based on some certain conditions two parallel as well as series systems that are subject to random shocks are comparable according to the reversed hazard rate order when the matrix of parameters $(\boldsymbol{\psi}(\boldsymbol{p}), \boldsymbol{\alpha},n)$ changes to $(\bm{\psi}(\bm{p}^*),\bm{\beta},n)$ in the sense of the row weak majorization order and row majorize order, respectively in the space $M_{n}.$ Similarly, Theorems \ref{th_m_st_g}, \ref{th_m_st_scale} and \ref{th_m_st_phr} can be interpreted as above when the system components follow Kw-G family, scale family and proportional hazard rate family, respectively.
  \begin{remark}
       It should be mentioned that Theorem \ref{th_m_st_g} is an extension of Theorem $3.3$ of \cite{kundu2021_shock} for the case of random number of shocks. 
    \end{remark}

\subsection{Auction Theory}
 There are various kinds of auctions. Among them, in the first-price reverse auction, the bidders
(sellers) need to provide their sealed bids to the auctioneer (buyer)  who initiates the auction to purchase some specific items. The bidder who submit the lowest price is likely to win the bid and will be received the lowest bid from the auctioneer.
At the same time, due to some unanticipated situations, some of the bidders may wish to opt out from the auction before it starts. For this reason, the auction takes a discussion that the final cost turns out to
be the lowest bid (the smallest order statistics) that comes from $J_{1}U_{1}\ldots,J_{n}U_{n}$, where $J_{i}$ denotes
whether the bidder $i$th participates in the auction or not, and $U_{i}$ is interpreted as the bidding
price for the $i$th bidder if he/she takes a part in the auction, $i=1,\ldots,n$. In the above situation, the number of bidders may be random. 
Therefore, the results presented in the previous section can be applicable in quantitative analysis on
the effects of number of bidders, their attending probabilities and biding price
distributions on the final auction price (see \cite{zhang2019ordering}). 

In this regard, let the bids $U_{1},\ldots,U_{N_{1}}$  with $U_{i}\sim \bar{F}(x;\alpha_{i}).$ Further, let $\{J_{1},\ldots,J_{N_{1}}\}$ be another  set of independent Bernoulli random variables,
independently of $U_{i}'$s with
$E(J_{i})=p_{i}$ such that a given time period $J_{i}=1$ if the $i$th bidder participates in the auction and, $J_{i}=0$ if the $i$th bidder opt out from the auction. Suppose $T_{i}=U_{i}J_{i},$ for $i=1,\ldots,N_{1}.$ Here, $N_{1}$ is the number of bidders which is a  positive integer-valued random variable with support $\{1,2,\ldots\}$, independently of $T_{i}'$s.  Under this set-up,  $T_{1:N_{1}}=\min\{T_{1},\ldots,T_{N_1}\}$ represent the lowest bid amount in the auction where the number of bidder is random. Thus, from Theorem \ref{th_w_st_small}, we can say that under some certain conditions if $N_{1}\leq_{st} N_{2}$ then the final price will be stochastically larger when the matrix of parameters $(\boldsymbol{\psi}(\boldsymbol{p}), \boldsymbol{\alpha},n)$ changes to $(\bm{\psi}(\bm{p}^*),\bm{\beta},n)$ in the sense of the row weak
majorization order in the space $M_{n}.$ Similar interpretation can be found from Theorem \ref{th_m_rh_sm}.
\section{Conclusion}\label{conclusion}
 In this article, we have considered two independent heterogeneous portfolios of risks having different and random number of claims and proved several comparison results based on usual stochastic and reversed hazard rate orders when the matrix of parameters $(\boldsymbol{\psi}(\boldsymbol{p}), \boldsymbol{\alpha},n)$ changes to $(\bm{\psi}(\bm{p}^*),\bm{\beta},n)$ in the sense of the row weak
majorization order and row majorize in the space $M_{n}$.  Furthermore, utilizing our results, we have seen that our obtained results generalize the result established in \cite{barmalzan2017ordering}, \cite{balakrishnan2018} and  \cite{kundu2021_shock}. 

In this work, our developed results are mainly based on independent heterogeneous portfolios of risks for the case of random claim sizes. However, the growing complexity of insurance and reinsurance products has sparked considerable interest in modeling dependent risks. 
In this regard, there are many research available in literature related to dependent portfolios of risks for the case of fixed claim sizes (see \cite{nadeb2020largest, torrado2020, sangita-balakrishnan2022,Liu2022,Panahi2023, Zhang2023}). Thus, we can think to extend the existing results for the case of random claim sizes, which could be some interesting future research problems. It is also important to mention that our established results are mainly based on row weak
majorization order in a particular space. Thus, it will also be interesting to obtain the same results based on other majorization orders (like multivariate chain majorization order) in different spaces. Moreover, our established ordering results are based on the usual stochastic and reversed hazard rate orders. Therefore, we can also extend these results for other stochastic orders. Currently, we are working on it and we hope to report on it in the future. 
\\\\
{\bf Disclosure statement}\\\\
The author states that there is no conflict of interest
\\\\
{\bf Data availability statement}\\\\
Data sharing does not apply to this article as no datasets were generated or analyzed during the current study.
\\\\
{\bf Ethics approval and consent to participate}\\\\
Not applicable.\\\\
{\bf Acknowledgement}\\\\
Sangita Das gratefully acknowledges the financial support for this research work under NPDF, grant No: PDF/2022/000471, ANRF (SERB), Government of India.

\bibliography{SampleReferences}	

@Article{sangita-balakrishnan2022,
  author   = {Sangita Das and Suchandan Kayal and N. Balakrishnan},
  journal  = {Probability in the Engineering and Informational Sciences},
  title    = {Ordering results for smallest claim amounts
	from two portfolios of risks with dependent heterogeneous exponentiated location-scale claims},
  year     = {2022},
  number   = {4},
  pages    = {1116--1137},
  volume   = {36},
  doi      = {10.1017/S0269964821000280},
}

@Article{barmalzan2017ordering,
  author  = {Barmalzan, Ghobad and Payandeh Najafabadi, Amir T. and Balakrishnan, Narayanaswamy},
  journal = {Scandinavian Actuarial Journal},
  title   = {Ordering properties of the smallest and largest claim amounts in a general scale model},
  year    = {2017},
  number  = {2},
  pages   = {105--124},
  volume  = {2017},
  doi     = {10.1080/03461238.2015.1090476},
}

@Article{Das2021mcap,
  author  = {Das, Sangita and Kayal, Suchandan and Balakrishnan, N},
  journal = {Methodology and Computing in Applied Probability},
  title   = {Orderings of the smallest claim amounts from exponentiated location-scale models},
  year    = {2021},
  number  = {3},
  pages   = {971--999},
  volume  = {23},
}

@Book{finkelstein2008failure,
  author    = {Finkelstein, Maxim},
  publisher = {Springer, London},
  title     = {Failure {R}ate {M}odelling for {R}eliability and {R}isk},
  year      = {2008},
  isbn      = {9781848009868},
  series    = {Springer Series in Reliability Engineering},
}

@Book{finkelstein2013stochastic,
  author    = {Finkelstein, Maxim and Cha, Ji Hwan},
  publisher = {Springer, London},
  title     = {Stochastic {M}odeling for {R}eliability, {S}hocks, {B}urn-in and {H}eterogeneous {P}opulations},
  year      = {2013},
  isbn      = {9781447150275},
  series    = {Springer Series in Reliability Engineering},
}

@Book{Marshall2011,
  author    = {Marshall, A. W. and Olkin, I. and Arnold, B. C.},
  publisher = {Springer, New York},
  title     = {Inequalities : {T}heory of {M}ajorization and {I}ts {A}pplications},
  year      = {2011},
  isbn      = {9780387400877},
  series    = {Second edition},
}

@Book{shaked2007stochastic,
  author    = {Shaked, Moshe and Shanthikumar, J George},
  publisher = {Springer, New York},
  title     = {Stochastic {O}rders},
  year      = {2007},
  isbn      = {9780387329154},
}

@Article{marshall1997new,
  author  = {Marshall, Albert W. and Olkin, Ingram},
  journal = {Biometrika},
  title   = {A new method for adding a parameter to a family of distributions with application to the exponential and {W}eibull families},
  year    = {1997},
  number  = {3},
  pages   = {641--652},
  volume  = {84},
  doi     = {10.1093/biomet/84.3.641},
}

@Book{marshall2007life,
  author    = {Marshall, A. W. and Olkin, Ingram},
  publisher = {Springer, New York},
  title     = {Life {D}istributions, {S}tructure of {N}onparametric, {S}emiparametric and {P}arametric {F}amilies},
  year      = {2007},
  isbn      = {9780387203331},
}

@Article{nadeb2020,
  author  = {Nadeb, Hossein and Torabi, Hamzeh and Dolati, Ali},
  journal = {North American Actuarial Journal},
  title   = {Stochastic comparisons between the extreme claim amounts from two heterogeneous portfolios in the case of transmuted-{G} model},
  year    = {2020},
  number  = {3},
  pages   = {475--487},
  volume  = {24},
  doi     = {10.1080/10920277.2019.1671203},
}

@Article{zhang2019ordering,
  author  = {Zhang, Yiying and Cai, Xiong and Zhao, Peng},
  journal = {Astin Bulletin},
  title   = {Ordering properties of extreme claim amounts from heterogeneous portfolios},
  year    = {2019},
  issn    = {0515-0361},
  number  = {2},
  pages   = {525--554},
  volume  = {49},
  doi     = {10.1017/asb.2019.7},
}

@Article{kundu2021_shock,
  author  = {Kundu, Amarjit and Chowdhury,Shovan},
  journal = {Communications in Statistics - Theory and Methods },
  title   = {Ordering properties of the largest order statistics from {K}umaraswamy-{G} models under random shocks},
  year    = {2021},
  number  = {6},
  pages   = {1502--1514},
  volume  = {50},
}

@Article{balakrishnan2018,
  author  = {Balakrishnan, Narayanaswamy and Zhang, Yiying and Zhao, Peng},
  journal = {Scandinavian Actuarial Journal},
  title   = {Ordering the largest claim amounts and ranges from two sets of heterogeneous portfolios},
  year    = {2018},
  number  = {1},
  pages   = {23--41},
  volume  = {2018},
  doi     = {10.1080/03461238.2017.1278717},
}

@Article{nadeb2020largest,
  author  = {Nadeb, Hossein and Torabi, Hamzeh and Dolati, Ali},
  journal = {Mathematical Inequalities \& Applications},
  title   = {Stochastic comparisons of the largest claim amounts from two sets of interdependent heterogeneous portfolios},
  year    = {2020},
  number  = {1},
  pages   = {35--56},
  volume  = {23},
  doi     = {10.7153/mia-2020-23-03},
}

@Article{torrado2020,
  author    = {Nuria Torrado and Jorge Navarro},
  journal   = {Scandinavian Actuarial Journal},
  title     = {Ranking the extreme claim amounts in dependent individual risk models},
  year      = {2021},
  number    = {3},
  pages     = {218--248},
  volume    = {2021},
  doi       = {10.1080/03461238.2020.1830845},
  publisher = {Informa {UK} Limited},
}

@Article{barmalzan2020,
  author  = {Barmalzan, Ghobad and Akrami, Abbas and Balakrishnan, Narayanaswamy},
  journal = {Insurance: Mathematics \& Economics},
  title   = {Stochastic comparisons of the smallest and largest claim amounts with location-scale claim severities},
  year    = {2020},
  pages   = {341--352},
  volume  = {93},
  doi     = {10.1016/j.insmatheco.2020.05.007},
}

@Article{shaked1997,
  author  = {Shaked, Moshe and Wong, Tityik},
  journal = {Journal of Applied Probability},
  title   = {Stochastic comparisons of random minima and maxima},
  year    = {1997},
  issn    = {0021-9002},
  number  = {2},
  pages   = {420--425},
  volume  = {34},
  doi     = {10.2307/3215381},
}

@Article{bartoszewicz2001,
  author  = {Bartoszewicz, J.},
  journal = {Statistics \& Probability Letters},
  title   = {Stochastic comparisons of random minima and maxima from life distributions},
  year    = {2001},
  number  = {1},
  pages   = {107--112},
  volume  = {55},
  doi     = {10.1016/S0167-7152(01)00139-0},
}

@Article{Kundu2024,
  author  = { Pradip Kundu and Amarjit Kundu and Biplab Hawlader},
  journal = {Statistica neerlandica},
  title   = {Stochastic comparisons of largest claim
amounts from heterogeneous portfolios},
  year      = {2023},
  number    = {4},
  pages     = {497--515},
  volume    = {77},
  doi       = {https://doi.org/10.1111/stan.12296},
}

@Article{chowdhury2024,
  author  = { Amarjit Kundu and Shovan Chowdhury and Bidhan Modok},
  journal = {Annals of Operations Research},
  title   = {Stochastic comparisons of random extremes from non-identical random variables},
  year    = {2024},
pages     = {1--16. DOI: 10.1007/s10479-024-06419-1},
}

@Article{Li2004 ,
  author  = {Li, X. and Zuo, M. J.},
  journal = {Naval Research Logistics},
  title   = {Preservation of stochastic orders for random minima and
maxima, with applications},
  pages ={332--344},
  year    = {2004},
  volume  = {51},
  number = {3},
}

@Article{Ahmad2007 ,
  author  = {Ahmad, I. and Kayid, M.},
  journal = {Statistical Papers},
  title   = {Reversed preservation of stochastic orders for random
minima and maxima with applications},
  pages ={283--293},
  year    = {2007},
  volume  = {48},
}

@Article{Nanda2008 ,
  author  = {Nanda, A. K. and Shaked, M.},
  journal = {Communications in Statistics: Theory
Methods},
  title   = {Partial ordering and aging properties of order statistics when sample size is random: a brief review},
  pages ={1710--1720},
  year    = {2008},
  volume  = {37},
}

@Article{Cooner2007,
  author  = {F. Cooner and S. Banerjee and B.P. Carlin and D. Sinha},
  journal = {Journal of the American Statistical
Association},
  title   = {Flexible cure rate modeling under latent activation schemes},
  pages ={560--572},
  year    = {2007},
  volume  = {102},
}

@Article{kumar1994,
  author  = {Dhananjay Kumar and Bengt Klefsjö},
  journal = {Reliability Engineering \& System Safety},
  title   = {Proportional hazards model: a review},
  pages = {177--188},
  year    = {1994},
  volume  = {44},
  number = {2},
}

@Book{Cox1992,
  author  = {Cox, D.R.},
  publisher = {Springer, New York},
  title     = {Regression Models and Life-Tables},
  year      = {1992},
  series    = {In: Kotz, S., Johnson, N.L. (eds) Breakthroughs in Statistics},
}

@Article{Liu2022,
  author  = {Liu, L. and Yan, R.},
  journal = {Journal of Mathematics},
  title   = {Orderings of Extreme Claim Amounts from Heterogeneous and Dependent {W}eibull‐{G } Insurance Portfolios},
  pages = {2768316, 1--21},
  year    = {2022},
  volume  = {2022},
}

@Article{Zhang2023,
  author  = {Jiandong Zhang and Rongfang Yan and  Yiying Zhang},
  journal = {Journal of Computational and Applied Mathematics},
  title   = {Stochastic comparisons of largest claim amount from heterogeneous and dependent insurance portfolios},
  pages = {115265},
  year    = {2023},
  volume  = {431},
}

@Article{Panahi2023,
  author  = {Panahi, A. H. and Jafari, H. and Kia (Barmalzan), G. S.},
  journal = {Communications in Statistics - Theory and Methods},
  title   = {Ordering properties of parallel and series systems with a general lifetime family of distributions for independent components under random shocks},
  pages = {7582--7603},
  year    = {2023},
  volume  = {53},
 number = {21},
}

@Article{Panja2024,
  author  = {Panja, A. and Kundu, P. and Hazra, N. K. and Pradhan, B.},
  journal = { Probability in the Engineering and Informational Sciences},
  title   = {Stochastic comparisons of largest claim and aggregate claim amounts},
  pages = {245--267},
  year    = {2024},
  volume  = {38},
  number = {2},
}

@Article{Hamza2023,
  author  = {Hamza Abubakar and Muhammad Lawal Danrimi},
  journal = {Economic Analysis Letter},
  title   = {A simulation study on the insurance claims distribution using Weibull distribution},
  pages = {67--79},
  year    = {2023},
  volume  = {2},
  number = {3}
}

@Article{Mohammed2022,
  author  = {Mohammed A. Meraou and Noriah M. Al-Kandari and Mohammad Z. Raqab  and Debasis Kundu},
  journal = {Journal of Statistical Computation and Simulation},
  title   = {Analysis of Skewed Data by using Compound Poisson-Exponential Distribution with Applications to Insurance Claims},
  pages = {928--956},
  year    = {2022},
  volume  = {92},
  number = {5}
}

@Article{Laureano2008,
  author  = {Laureano F. Escudero and Eva-María Ortega},
  journal = {Insurance: Mathematics and Economics},
  title   = {Actuarial comparisons for aggregate claims with randomly right-truncated claims},
  pages = {255--262},
  year    = {2008},
  volume  = {43},
  number = {2}
}

@Article{Koutsoyiannis2004,
  author  = {Koutsoyiannis, D.},
  journal = {Hydrological Sciences Journal},
  title   = {Statistics of extremes and estimation of extreme rainfall:1. Theoretical investigation},
  pages = {575–-590},
  year    = {2004},
  volume  = {49},
  number = {4}
}

@Article{Manesh2023,
  author  = {Manesh, SF. and Izadi, M. and Khaledi, B. E.},
  journal = {Probability in the Engineering and Informational Sciences},
  title   = {On stochastic ordering among extreme shock models},
  pages = {961--972},
  year    = {2023},
  volume  = {37},
  number = {4}
}

@Article{Sameen2020,
  author  = {Sameen, Naqvi. and Yiying, Zhang. and  Peng, Zhao},
  journal = {Communications in Statistics - Theory and Methods},
  title   = {Ordering results for individual risk model with dependent Location-Scale claim severities},
  pages = {942--957},
  year    = {2020},
  volume  = {49},
  number = {4}
}

@Article{Sangita_bala_26,
  author  = { Das, S. and  Balakrishnan, N.},
  journal = {Statistics},
  title   = {Ordering results for random maxima and minima from two dependent Kumaraswamy-generalized distributed samples},
  pages = {204--222},
  year    = {2026},
  volume  = {60},
  number = {1}
}

@Article{Fang18,
  author  = { Fang, L. and Balakrishnan, N.},
  journal = {Metrika},
  title   = {Ordering properties of the smallest order statistics from generalized Birnbaum–Saunders models with associated random shocks},
  pages = {19--35},
  year    = {2018},
  volume  = {81}
}

@Article{Amini24,
  author  = { Amini-Seresht, E. and Nasiroleslami, E. and Balakrishnan, N.},
  journal = {Metrika},
  title   = {Comparison of extreme order statistics from two sets of heterogeneous dependent random variables under random shocks},
  pages = {133-153},
  year    = {2024},
  volume  = {87}
}

@Article{Abdolahi23,
  author  = { Abdolahi , M. and Parham , G. and Chinipardaz , R. },
  journal = {REVSTAT-Statistical Journal},
  title   = {Ordering Properties of the Smallest and Largest Order Statistics from Exponentiated Location-Scale Models Under Random Shocks},
  pages = {115-141},
  year    = {2023},
  volume  = {21},
number = {1}
}
	\end{document}